\definecolor{darkgreen}{rgb}{0.0,0.3,0.0}
\definecolor{mygreen}{rgb}{0.0,0.6,0.0}
\definecolor{myorange}{cmyk}{0, 0.8808, 0.4429, 0.1412}
\renewcommand{\upsilon}{\mu}
\title{Exponential Separation between Quantum Communication and Logarithm of Approximate Rank}
\date{}
\author{Makrand Sinha\thanks{CWI, the Netherlands. Supported by the Netherlands Organization for Scientific Research, Grant Number 617.001.351. {\tt makrand.sinha@cwi.nl}} 
\and 
Ronald de Wolf\thanks{QuSoft, CWI and University of Amsterdam, the Netherlands. Partially supported by ERC Consolidator Grant 615307-QPROGRESS and by QuantERA project QuantAlgo 680-91-034. {\tt rdewolf@cwi.nl}} 
}
\newcommand{\CM}{\mathcal{M}}
\newcommand{\CD}{\mathcal{D}}
\newcommand{\CH}{\mathcal{H}}
\newcommand{\CQ}{\mathcal{Q}}
\newcommand{\s}{\mathsf{s}}
\newenvironment{proofsk}{\nopagebreak
{\bf Proof Sketch:}}{ \qed \par \medskip}
\def\thm@space@setup{\thm@preskip=5pt
\thm@postskip=0pt}
\newtheorem{theorem}{Theorem}[section]
\newtheorem{claim}[theorem]{Claim}
\newtheorem{proposition}[theorem]{Proposition}
\newtheorem{lemma}[theorem]{Lemma}
\newcommand{\norm}[1]{\left\|#1\right\|_1}
\newcommand{\papprox}[1]{\stackrel{#1}{\approx}}
\newcommand{\bits}{\{0,1\}}
\newcommand{\etal}{{\it et al.}}
\newcommand{\equationref}[1]{(\ref{#1})}
\newcommand{\Ex}[2]{\mathop{\mathbb{E}}\displaylimits_{#1}\left
[ #2 \right ]}
\newcommand{\BE}{\mathbb{E}}
\newcommand{\Y}{\overline{Y}}
\newcommand{\X}{\overline{X}}
\newcommand{\XR}{\X}
\newcommand{\YR}{\Y}
\newcommand{\bA}{A^*}
\newcommand{\bB}{B^*}
\newcommand{\CW}{\mathcal{W}}
\newcommand{\CE}{\mathcal{E}}
\newcommand{\CS}{\mathcal{S}}
\newcommand{\CT}{\mathcal{T}}
\newcommand{\eps}{\epsilon}
\newcommand{\omicron}{o}
\newcommand{\BP}{\mathbb{P}}
\newcommand{\sink}{\mathsf{sink}}
\providecommand{\ket}[2][]{|#2\rangle_{#1}}
\providecommand{\bra}[2][]{\langle #2 |_{#1}}
\providecommand{\ketbra}[3][]{| #2 \rangle \langle #3|_{#1}}
\newcommand{\Fi}[2]{F(#1,~#2)}
\newcommand{\Tr}{\mathsf{Tr}}
\newcommand{\pstate}[2]{#1^{(#2)}_{X\XR Y\YR ABC}}
\newcommand{\pstatealt}[3]{#1^{(#2)}_{#3}}
\newcommand{\pstatesup}[4]{#1^{(#2),#3}_{#4}}
\newcommand{\bstate}[2]{#1^{(#2)}}
\newcommand{\purify}[3]{\ket[#3]{#1^{(#2)}}}
\newcommand{\purifycond}[4]{\ket[#4]{#1^{(#2),#3}}}
\newcommand{\tens}[1]{%
  \mathbin{\mathop{\otimes}\limits_{#1}}%
}
\newcommand{\fixedvstate}[2]{{\iota}^{v,(#1)}_{#2}}
\newcommand{\allreg}{X\XR Y \YR ABC}
\newcommand{\ketalt}[2]{\ket[#2]{#1}}
\providecommand{\Div}[2]{\mathbf{D}\left(#1 \:||\:  #2\right)} 
\providecommand{\Inf}[3][]{\mathbf{I}_{#1}\left(#2:#3\right)} 
\providecommand{\Infc}[4][]{\mathbf{I}_{#1}\left(#2:#3|#4\right)}
\providecommand{\stat}[1]{\left\|#1\right\|_{\mathsf{tv}}}
\newcommand{\BX}{\mathbf{X}'_1}
\newcommand{\bX}{\mathbf{x_1}}
\newcommand{\BY}{\mathbf{Y'}_1}
\newcommand{\bY}{\mathbf{y_1}}
\newcommand{\alt}{\BX \BY}
\newcommand{\rest}{X_2 \overline{X_2} Y_2 \overline{Y_2} ABC}
\newcommand{\he}[2]{\mathfrak{h}\left( {#1},~{#2} \right)}
\newcommand{\I}{\mathbf{1}}
\numberwithin{figure}{section}
\begin{document}

%%%%%%%%%%%%%%%%%%%%%%%%%%%%%%%%%%%%%
% BEGIN BODY of Document
\maketitle

\begin{abstract}
	 Chattopadhyay, Mande and Sherif~\cite{CMS18} recently exhibited a total Boolean function, the \emph{sink} function, that has polynomial approximate rank and polynomial randomized communication complexity. This gives an exponential separation between randomized communication complexity and logarithm of the approximate rank, refuting the log-approximate-rank conjecture. We show that even the \emph{quantum} communication complexity of the sink function is polynomial, thus also refuting the quantum log-approximate-rank conjecture. 
	 
	 Our lower bound is based on the fooling distribution method introduced by Rao and Sinha \cite{RS15} for the classical case and extended by Anshu, Touchette, Yao and Yu \cite{ATYY17} for the quantum case. We also give a new proof of the classical lower bound using the fooling distribution method.
\end{abstract}
	
\section{Introduction}

Communication complexity~\cite{KN96,RY18} is a basic model of distributed computing where one only cares about the resource of \emph{communication} between the various distributed parties doing the computation. This is a beautiful and fundamental computational model in its own right, and has many applications to other areas, in particular for lower bounds.
For concreteness consider the two-player communication complexity of some Boolean function $f:\bits^n\times\bits^n\to\bits$. Here Alice receives input $x\in\bits^n$ and Bob receives input $y\in\bits^n$, and they want to compute $f(x,y)$ with minimal communication between them.

Much research has gone into relating the (deterministic, randomized, nondeterministic, quantum, \ldots) communication complexity of~$f$ to its combinatorial or algebraic properties. In particular, we may consider the relation between the communication complexity and the \emph{rank} (over the reals) of the $2^n\times 2^n$ Boolean matrix $M_f$ whose entries are the values $f(x,y)$. Mehlhorn and Schmidt~\cite{mehlhorn&schmidt:lasvegas} showed that the log of this rank lower bounds the deterministic communication complexity of~$f$, and Lov\'asz and Saks~\cite{lovasz&saks:cc} conjectured that this lower bound is polynomially tight; in other words, that deterministic communication complexity is upper bounded by a polynomial in the logarithm of the rank of~$M_f$.
This \emph{log-rank conjecture} is one of the main open problems in communication complexity and remains wide open. On the one hand, the best upper bound on deterministic communication complexity in terms of rank is roughly the \emph{square root} of the rank~\cite{lovett:rootrank,lovett:advancesrank} (see also \cite{R14}). On the other hand, the biggest known gap between deterministic communication complexity and log-rank is only quadratic~\cite{GPW15}.

One may similarly consider the relation between \emph{randomized} communication complexity (say with private coin flips, and error probability $\leq 1/3$ on every input $x,y$) and log of the \emph{approximate} rank, which is the minimal rank among all matrices that approximate $M_f$ entrywise up to~$1/3$. The log of the approximate rank lower bounds randomized communication complexity (even \emph{quantum} communication complexity with unlimited prior entanglement~\cite{buhrman&wolf:qcclower}), and Lee and Shraibman~\cite[Conjecture~42]{lee&shraibman:survey} conjectured that this lower bound is polynomially tight. This is known as the \emph{log-approximate-rank} conjecture.
Until very recently, the biggest separation known between randomized communication complexity and the log of approximate rank was a fourth power~\cite{GJPW17}. 
But then, in an important breakthrough, Chattopadhyay, Mande and Sherif~\cite{CMS18} devised a function that refutes this conjecture.\footnote{Their function is a so-called \emph{XOR function}, of the form $f(x,y)=g(x\oplus y)$ for some $n$-bit Boolean function~$g$, and thus even refutes the special case of the log-approximate-rank conjecture restricted to XOR functions. The special case of the log-rank conjecture for such functions has received much attention recently~\cite{tsang:logrank,zhang:quantumxor,HHL:xor} (in part thanks to the fact that the rank of $M_f$ equals the \emph{Fourier sparsity} of~$g$), and remains open.}

Their function is as follows.
Let $n = \binom{t}{2}$. The function $\sink: \bits^n \to \bits$ is defined on the edges of the complete graph on the vertex set $[t]$.\footnote{We use $t$ for the number of vertices in the graph instead of $m$ as used in \cite{CMS18}.} For each edge $e \in \binom{[t]}{2}$, the corresponding input bit $z_e$ assigns an orientation to the edge $e$ (such an oriented complete graph is called a \emph{tournament}). The function $\sink(z)=1$ iff there is a vertex that is a \emph{sink} (i.e., that has no outgoing edges). Note that a tournament can have at most one sink, since the orientation of the edge between vertices $v$ and $w$ eliminates one of them as a possible sink. The communication problem is defined by Alice and Bob receiving the inputs $x,y \in\bits^n$ and wanting to compute the function $\sink(x \oplus y)$ where $x \oplus y$ is the bitwise parity. In other words, together they compute the sink function after putting the label $x_e \oplus y_e$ on the edge $e$.
With slight abuse of notation, we denote the $2n$-bit communication function by $\sink$ as well.

The approximate rank of the $2^n\times 2^n$ Boolean matrix $M_\sink$ associated to the sink problem is only polynomial in~$n$, which can be seen as follows.
Consider vertex $v\in[t]$, let $N(v)$ denote the set of edges incident on $v$ and let $x_{N(v)}$ (and $y_{N(v)}$) denote the projection of the input $x$ (and $y$) to the edges in $N(v)$. Let $z_{N(v)}\in\bits^{t-1}$ be the unique string of orientations that makes $v$ the sink of the graph. Note that $v$ is a sink in the tournament $x\oplus y$ iff $x_{N(v)}=y_{N(v)}\oplus z_{N(v)}$. The latter problem corresponds to a (shifted) equality problem on strings of $t-1$ bits, and it is well known that this problem has a cheap randomized private-coin protocol that uses $O(\log t)=O(\log n)$ bits of communication, that outputs~1 with probability~1 if $v$ is the sink in tournament $x\oplus y$, and outputs 1 with probability $\in[0,1/(3t)]$ if $v$ is not a sink. This in turns implies the existence of a $2^n\times 2^n$ matrix $M_v$ of rank polynomial in $n$, whose $(x,y)$-entry is~1 if $v$ is the sink in $x\oplus y$, and whose $(x,y)$-entry is~$\in[0,1/(3t)]$ if $v$ is not a sink. Thanks to the fact that at most one of the $t$ vertices is a sink, we can now get a good entry-wise approximation of $M_\sink$ by just adding up all the $M_v$-matrices over all $v \in [t]$: the resulting matrix $\widetilde{M}=\sum_{v=1}^t M_v$ will have $(x,y)$-entry $\in[0,1/3]$ whenever $x\oplus y$ has no sink, and will have $(x,y)$-entry in $[1,4/3]$ whenever $x\oplus y$ has a sink (if $v$ is the sink in $x\oplus y$, then $M_v$ contributes~1 to the entry $\widetilde{M}_{xy}$, and the other $M_w$'s together contribute at most $1/3$). By subadditivity of rank, the rank of $\widetilde{M}$ is at most the sum of the ranks of the $M_v$'s, which is polynomial in~$n$. Hence the log of the approximate rank of $M_{\sink}$ is $O(\log n)$.
In contrast, Chattopadhyay \etal\ show that the randomized communication complexity of the sink function is exponentially bigger:

\begin{theorem}[\cite{CMS18}]\label{th:CMS}
The $1/3$-error randomized communication complexity of  the function $\sink$ on  $n = \binom{t}{2}$ bits is $\Omega(t) = \Omega(\sqrt{n})$.
\end{theorem}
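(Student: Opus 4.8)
\medskip
\noindent\textbf{Proof strategy.}
The plan is to prove the theorem via the \emph{fooling distribution method}: we design a distribution on $1$-inputs together with a rule that perturbs each $1$-input into a nearby $0$-input, and then show that a protocol with $o(t)$ bits of communication cannot reliably tell an input apart from its perturbation.

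First I would fix two distributions. Let $\mu_1$ be the uniform distribution over the $1$-inputs of $\sink$; equivalently, pick a ``planted sink'' $v\in[t]$ uniformly, orient all $t-1$ edges of $N(v)$ towards $v$, orient the remaining $\binom{t-1}{2}$ edges of the tournament uniformly at random to obtain $z=x\oplus y$, and then pick $x\in\bits^n$ uniformly and set $y=x\oplus z$. (Because a tournament has at most one sink, this is exactly the uniform distribution over $\{(x,y):\sink(x\oplus y)=1\}$, and it inherits the full edge-transitive symmetry of the complete graph on $[t]$.) Let $\mu_0$ be obtained from a $\mu_1$-sample by additionally choosing a uniformly random edge $e\in N(v)$ incident to the planted sink and reversing its orientation, implemented by flipping Bob's bit $y_e$ (or, to keep matters symmetric, flipping $x_e$ or $y_e$ according to a fair coin). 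Reversing $e=\{v,w\}$ always destroys the sink at $v$, and it turns $w$ into a sink only if all of $w$'s other $t-2$ edges already point towards $w$, which for a uniformly random sub-tournament happens with probability $2^{-(t-2)}$; hence $\mu_0$ is supported, up to total-variation error $2^{-\Omega(t)}$, on genuine $0$-inputs.

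Next I would reduce the claim to a distinguishing statement. A private-coin protocol of worst-case error $\le 1/3$ outputs $1$ with probability $\ge 2/3$ on every input in $\supp(\mu_1)$ and outputs $1$ with probability $\le 1/3+2^{-\Omega(t)}$ on a $\mu_0$-random input; averaging over the coins and fixing them yields a \emph{deterministic} protocol $\Pi$ of communication cost $c$ with $\Pr_{\mu_1}[\Pi=1]-\Pr_{\mu_0}[\Pi=1]\ge \tfrac13-2^{-\Omega(t)}$. Since the transcript of a $c$-bit deterministic protocol induces a partition of $\bits^n\times\bits^n$ into at most $2^c$ combinatorial rectangles and $\Pi$'s output is a function of the transcript, we get $\Pr_{\mu_1}[\Pi=1]-\Pr_{\mu_0}[\Pi=1]\le \stat{T(\mu_1)-T(\mu_0)}$, where $T(\cdot)$ denotes the induced transcript distribution. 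So it suffices to show that \emph{every} deterministic protocol of communication cost $c$ satisfies $\stat{T(\mu_1)-T(\mu_0)}=O(\sqrt{c/t})$, which is $o(1)$ once $c=o(t)$.

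This last bound is the crux and the step I expect to be the main obstacle. Intuitively the transcript changes under the perturbation only if the protocol has ``invested communication'' into verifying the orientation of the particular edge $e$ that got flipped; a $c$-bit protocol can afford such an investment for the equivalent of only $O(c)$ of the $\binom{t}{2}$ edges, and since the flipped edge is one of the $t-1$ incident to a uniformly random vertex (the planted sink), the transcript should change with probability only $O(c/t)$. The difficulty in making this rigorous is that $\mu_1$ lives on an exponentially small set of inputs (the $1$-inputs have uniform measure only $t\cdot 2^{-(t-1)}$), so one cannot simply condition the uniform distribution on ``being a $1$-input'' and invoke an average-sensitivity bound for the transcript; instead one must argue directly with $\mu_1$ --- via a round-by-round (message-by-message) hybrid argument, or by telescoping over the $t-1$ edges incident to the planted sink while using a Pinsker-type inequality to bound how far a rectangle's conditional marginals can drift --- and here one crucially exploits the edge-transitive symmetry of $\mu_1$, so that the protocol's ``investment'' is spread uniformly over all edges and hence thinly over the (random) sink's edges. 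This is exactly the work performed by the Rao--Sinha fooling-distribution machinery (and, for the quantum statement that is the main theorem of this paper, by its Anshu--Touchette--Yao--Yu refinement phrased in terms of information cost). Granting it, $\tfrac13-2^{-\Omega(t)}\le \stat{T(\mu_1)-T(\mu_0)}=O(\sqrt{c/t})$, which forces $c=\Omega(t)=\Omega(\sqrt n)$; one point to watch throughout is that the perturbation touches only a single bit, so the error terms coming from atypical sub-tournaments (a new sink being accidentally created) must genuinely be $o(1)$ --- which they are, being exponentially small in $t$ --- lest they swallow the constant gap before $c$ is driven up.
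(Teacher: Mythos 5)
Your high-level plan (fooling-style two-distribution argument, reduce to a statistical-distance bound on transcripts) is the right genre, but the pair of distributions you pick puts you on a harder road than the one the paper takes, and the step you flag as ``the crux'' is left unproven --- and, as formulated, it would not go through with the tools you invoke. The paper does not compare $p_1$ directly to a near-$0$ perturbation of it; it introduces a third distribution, the \emph{product} fooling distribution $u(X,Y)$ (uniform on $\bits^n\times\bits^n$), and shows both $p_0(M)\approx u(M)$ and $p_1(M)\approx u(M)$. The first comparison is immediate because $p_0$ and $u$ are already $o(1)$-close as \emph{input} distributions (Claim~\ref{claim:zerodist}). The second writes $p_1$ as the conditioning of $u$ on a random collision event $X_{N(v)}=Y_{N(v)}\oplus z_{N(v)}$ and invokes Shearer's inequality (Lemma~\ref{lemma:shearer}) together with the conditioning lemma (Lemma~\ref{lemma:condition}). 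Both tools crucially rely on the Markov property $X-M-Y$ of deterministic protocols (Proposition~\ref{prop:markov}), which holds because $X$ and $Y$ are \emph{independent} under $u$.

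Your $\mu_1$ is not a product distribution: under $\mu_1$ the string $x\oplus y$ is forced to be a tournament with a sink, so $X$ and $Y$ are strongly correlated, $X-M-Y$ fails, Shearer's inequality does not apply, and Lemma~\ref{lemma:condition} (stated for $A,B$ uniform and independent with $A-C-B$) is not the statement you need. Citing ``the Rao--Sinha fooling-distribution machinery'' for $\stat{T(\mu_1)-T(\mu_0)}=O(\sqrt{c/t})$ therefore does not close the gap: that machinery bounds the effect on the transcript of conditioning a \emph{product} input distribution on a collision event, not the effect of a single-bit perturbation of a correlated distribution, which is a genuinely different comparison. The missing idea is the fooling distribution itself: rather than ``arguing directly with $\mu_1$'' (which you correctly suspect is hard), one routes everything through $u$, where independence gives Markov and Shearer, and then transports the bound to $p_1$ via the collision-conditioning lemma. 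Once this is done, the perturbed $\mu_0$ is also unnecessary: the uniform distribution $p_0$ on $0$-inputs is already $o(1)$-close to $u$ and serves directly.
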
 

This lower bound is optimal even for \emph{deterministic} protocols: by looking at one edge, Alice and Bob can rule out one vertex from being a sink. Proceeding this way, they read $t-1$ edges until they have eliminated all but one vertex $v$ from being a sink. At this point, they look at the $t-1$ edges incident to $v$, and find out if $v$ is a sink or not. This gives an $O(t)$-bit deterministic communication protocol, since the parties exchange two bits per edge.

This separation refutes the log-approximate-rank conjecture, showing that randomized communication complexity is not always upper bounded by polylog of the approximate rank. However, \emph{quantum} communication complexity can be much smaller than randomized communication complexity: polynomial gaps are known for some total functions~\cite{BuhrmanCleveWigderson98,aaronson&ambainis:searchj,anshuetal:sepcheatsheets} and exponential gaps are known for some partial functions~\cite{raz:qcc,klartag&regev:lowerbound}. Thus one might still entertain the weaker conjecture that quantum communication complexity is upper bounded by polylog of the approximate rank, and indeed Lee and Shraibman~\cite[Conjecture~57]{lee&shraibman:survey} made this conjecture explicitly. 
Prior to this work, the biggest separation known between quantum communication complexity and log of the approximate rank, was only quadratic~\cite{anshuetal:sepapproxrank}. Indeed, one of the main problems left open by Chattopadhyay \etal\ asks about the quantum communication complexity of the $\sink$ function. If this is large then it would refute the quantum log-approximate-rank conjecture, but if it is small then it would provide the first superpolynomial separation between quantum and classical communication complexity for a total function.
We answer their open question by proving a polynomial lower bound on the quantum communication complexity of the $\sink$ function, thus refuting the quantum log-approximate-rank conjecture:

\begin{theorem}\label{th:qlogrank}
The $1/3$-error quantum communication complexity of  the function $\sink$ on  $n = \binom{t}{2}$ bits is $\Omega(t^{1/3}) = \Omega(n^{1/6})$.
\end{theorem}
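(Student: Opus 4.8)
The plan is to prove Theorem~\ref{th:qlogrank} via the quantum fooling distribution method of Anshu--Touchette--Yao--Yu~\cite{ATYY17}, a quantum analogue of the classical corruption argument; in the classical case the same method recovers the $\Omega(t)$ bound of Theorem~\ref{th:CMS}. Fix a bounded-error, entanglement-assisted quantum protocol $\Pi$ for $\sink$ using $c$ qubits of communication; the goal is $c=\Omega(t^{1/3})$, which since $n=\binom{t}{2}=\Theta(t^2)$ is the same as $c=\Omega(n^{1/6})$.

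\textbf{Fooling distribution.} As is natural for an XOR function, write each orientation bit $z_e$ as $z_e=x_e\oplus y_e$ with $(x_e,y_e)$ uniform and independent across edges, so that Alice's and Bob's marginal inputs are uniform. The fooling distribution $\mu$ is the planted-sink distribution: pick a uniformly random $v^*\in[t]$, orient every edge of $N(v^*)$ toward $v^*$, and orient every remaining edge by an independent fair coin. Every tournament in $\supp(\mu)$ has $v^*$ as its (unique) sink, so $\sink=1$ there and $\Pi$ accepts with probability $\ge 2/3$; moreover $\mu$ has near-maximal entropy, its deficiency from the uniform distribution on tournaments being only $O(t)$. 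The key ``boundary'' feature is this: starting from a tournament drawn from $\mu$ with planted sink $v^*$, flipping the orientation of any single one of the $t-1$ edges of $N(v^*)$ --- equivalently, flipping the corresponding bit of Alice's input, or of Bob's input --- makes $v^*$ a non-sink, and with probability $1-2^{-\Omega(t)}$ leaves no sink at all, so by worst-case correctness $\Pi$ rejects the perturbed instance with probability $\ge 2/3$. Thus each $\mu$-instance carries $\Theta(t)$ single-coordinate perturbations, each of which flips $\sink$, each located on a coordinate that neither player can identify from their own input alone, with the ambient tournament staying pseudorandom throughout.

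\textbf{The substitution step (the crux).} By Touchette's quantum information-complexity framework, $\Pi$ has quantum information cost $O(c)$ under $\mu$; in particular, conditioned on Alice's input (and the shared entanglement and public coins), Bob's messages reveal only $O(c)$ qubits of information about his input, and symmetrically for Alice. The heart of the proof is to turn this into the statement that $\Pi$ behaves the same on a $\mu$-instance and on its perturbations. Fix Alice's input $x$: under $\mu$ conditioned on $x$, the planted sink $v^*$ is uniform over $[t]$ and Bob's input pins down the $t-1$ edges of $N(v^*)$. A quantum cut-and-paste / Uhlmann-isometry argument --- the quantum analogue of the Rao--Sinha corruption argument~\cite{RS15}, which uses the quantum substate theorem --- shows that, since Bob's messages carry little information about his input, for a typical planted sink the final global state on (Alice's register, Bob's register, message register) is fidelity-close whether Bob holds $y$ or its flip along most of the $\Theta(t)$ edges of $N(v^*)$, so Alice's output measurement cannot distinguish them; the symmetric statement covers perturbations of $x$. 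Amortizing the $O(c)$ information budget over these $\Theta(t)$ directions (rather than union-bounding over them) forces $\Pi$'s acceptance probability on the perturbed $\sink=0$ instances to be within $o(1)$ of its acceptance probability on $\mu$ unless $c$ is polynomially large; quantitatively, the quantum overheads weaken the classical inequality ``(number of directions) $\lesssim$ (communication)'' to ``$\Theta(t)\lesssim c^3$''. Since $\Pi$ accepts on $\mu$ but must reject on the perturbations, this is a contradiction unless $c=\Omega(t^{1/3})$.

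\textbf{Main obstacle.} The delicate part is exactly the quantum substitution: establishing, in the quantum setting, that ``few qubits revealed about Bob's input'' implies ``Alice's view is insensitive to a single-coordinate perturbation of Bob's input along a random useful direction.'' This is where the ATYY machinery (quantum information cost, the quantum substate theorem, Uhlmann's theorem) is essential, and where the proof pays the polynomial loss separating $\Omega(t^{1/3})$ from the classical $\Omega(t)$; one must also coordinate the substitution across the $\Theta(t)$ directions so the accumulated error stays $o(1)$, and dispose of the negligible events that a random tournament has a sink or that flipping an edge of $N(v^*)$ turns its other endpoint into a sink --- both $2^{-\Omega(t)}$ by the estimates above. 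A minor point: $\sink$ is an XOR function and the neighbourhoods $N(v)$ overlap pairwise, but this is harmless since every perturbation used is a single bit on an edge incident to the planted sink.
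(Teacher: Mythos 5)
Your high-level strategy (show that a small quantum protocol cannot distinguish the relevant distributions, using the fooling-distribution framework of \cite{RS15,ATYY17}) is the one the paper uses, but two of your load-bearing choices are different from the paper's, and the second one is a genuine gap.

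\textbf{The fooling distribution.} You take the fooling distribution to be the \emph{planted-sink} distribution $\mu$ (a random $v^*$ with $N(v^*)$ all oriented toward $v^*$). In the paper, $\mu$ is \emph{not} the fooling distribution; it is (up to notation) the hard distribution $p_1(X,Y)$ on the $1$-inputs. The fooling distribution in the paper is the \emph{uniform} distribution $u(X,Y)$ on $\bits^{n+n}$. This is not a cosmetic difference: the whole point of choosing $u$ is that under $u$ the $2n$ input bits $X_e,Y_e$ are mutually independent, which is exactly what (quantum) Shearer's inequality (Lemmas~\ref{lemma:shearer} and~\ref{lemma:qshearer}) needs in order to amortize the $O(\ell)$ information cost over the $\binom{t}{2}$ edges and conclude that the protocol learns only $O(\ell/t)$ about the $t-1$ edges of a random $N(v)$. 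Your $\mu$ correlates Alice's and Bob's bits on the $t-1$ edges of $N(v^*)$ (and, across the mixture over $v^*$, correlates all bits), so you cannot amortize the information cost this way under $\mu$. The cheap closeness you get almost for free in the paper is $p_0(X,Y)\approx u(X,Y)$ (Claim~\ref{claim:zerodist}), because almost every tournament has no sink; the hard direction is $p_1\approx u$.

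\textbf{The substitution step.} You phrase the crux as a per-input statement: ``since Bob's messages carry little information about his input, the final global state is fidelity-close whether Bob holds $y$ or its single-bit flip,'' via a ``quantum cut-and-paste / Uhlmann-isometry'' argument and the ``quantum substate theorem.'' Neither the paper nor \cite{ATYY17} proves or uses anything of that form, and this is exactly the kind of claim that is delicate (and in general false) for quantum protocols: low mutual information about $Y$ is an \emph{average} statement and does not by itself make the global state insensitive to one specific bit flip of $y$. What Lemma~\ref{lemma:qcond} (ATYY Lemma~3.6) actually gives is a comparison between \emph{distributions}, not between fixed inputs: if $u'$ is the uniform (product) distribution and $q=u'(\cdot\,|\,X_1=Y_1)$ is obtained by conditioning on a collision event, and the round-$s$ states under $u'$ reveal little about $X_1$ and $Y_1$, then the channel marginal under $q$ is close to the channel marginal under $u'$. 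The paper applies this with $X_1=X_{N(v)}$, $Y_1=Y_{N(v)}\oplus z_{N(v)}$, since $p_1(X,Y)$ (conditioned on $v$ being the sink) is precisely $u(X,Y\mid X_{N(v)}=Y_{N(v)}\oplus z_{N(v)})$. The quantum substate theorem plays no role; the lemma is proved by a round-by-round Uhlmann/Hellinger induction (Appendix~\ref{appB}), and the $t^{1/3}$ (rather than $t$) arises quantitatively because this induction accumulates a $\sqrt{\eps_s}$ per round and there are $\ell$ rounds, giving $\ell\sqrt{\ell/t}\lesssim 1$, i.e.\ $\ell\lesssim t^{1/3}$ — not from a generic ``quantum overhead'' on a per-direction union bound.

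In short: start from the uniform product distribution $u$ as the fooling distribution, use quantum Shearer's inequality under $u$ to bound $\Inf[\upsilon^{(s)}]{X_{N(V)}}{Y\YR BC\,}$ and the symmetric quantity by $O(\ell/t)$, and then invoke Lemma~\ref{lemma:qcond} to move from $u$ to the collision-conditioned distribution $u(\cdot\mid X_{N(v)}=Y_{N(v)}\oplus z_{N(v)})$. The per-input bit-flip substitution you describe is not what the cited machinery provides, and the correlated $\mu$ you propose as fooling distribution disables the Shearer step.
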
 

As Chattopadhyay \etal\ noted, the quantum communication complexity of the sink function is polynomially smaller than the randomized complexity: using Grover's algorithm~\cite{grover:search} to search for a sink, combined with an efficient low-error equality protocol to test whether a specific vertex is a sink, one gets an $\widetilde{O}(\sqrt{t})$-qubit protocol. We suspect that this upper bound is tight up to the log-factor, and that our quantum lower bound should be improvable.

\paragraph{Independent Work.} 
\begin{sloppypar}
In independent and simultaneous work, Anshu, Boddu and Touchette \cite{AGT18} obtained the same $\Omega(t^{1/3})$ lower bound using a reduction to quantum information complexity of the equality function, but our techniques to prove Theorem \ref{th:qlogrank} are different, as we describe below.
\end{sloppypar}

\subsection*{Proof Outline}

Our approach to proving Theorem~\ref{th:qlogrank} is to first give an alternate and arguably simpler proof of Theorem~\ref{th:CMS} using the fooling distribution method (and other tools) introduced by Rao and the first author in \cite{RS15}, and then we show that the same approach can be used to give a (weaker) quantum lower bound using tools from a paper by Anshu, Touchette, Yao and Yu \cite{ATYY17}, which generalized some of the techniques used in \cite{RS15} to the quantum setting. Our proofs are relatively straightforward and short given the tools in these papers. Below we give a high-level outline.

Let us look at the classical case first. To prove a lower bound on the randomized communication complexity, it suffices to give a distribution on the inputs that is hard for deterministic protocols. Let $p_0(X,Y)$ denote the uniform distribution on $0$-inputs to $\sink$ and $p_1(X,Y)$ denote the uniform distribution on $1$-inputs to $\sink$. Our hard distribution for deterministic protocols will be the distribution which samples from $p_0(X,Y)$ with probability $\frac12$ and from $p_1(X,Y)$ with probability $\frac{1}{2}$. Note that the messages of any low-error protocol look very different under these two distributions: $p_0(M)$ and $p_1(M)$ have statistical distance close to $1$, where $p_b(M)$ denotes the distribution induced on the messages under $p_b(X,Y)$ for $b\in \bits$. 

To show that this is a hard distribution for deterministic protocols, we show that there is another distribution $u(X,Y)$ such that for any protocol with communication at most $\eps t$, the induced message distribution $u(M) \approx p_0(M)$ as well as $u(M) \approx p_1(M)$, where $\approx$ denotes closeness in statistical distance. This in turn implies that $p_0(M) \approx p_1(M)$ for small-communication protocols, giving us a lower bound on communication. Such a distribution $u(X,Y)$ is called a \emph{fooling} distribution. 

The fooling distribution $u(X,Y)$ for $\sink$ will just be the uniform distribution on $\bits^{n+n}$. Note that under the uniform distribution $u(X,Y)$, the function $\sink$ takes value $0$ with probability $1-2^{-\Omega(t)}$, and since $p_0(X,Y) = u(X,Y|\sink=0)$, the input distributions $p_0(X,Y)$ and $u(X,Y)$ are already very close in statistical distance, and so are the corresponding distributions on the messages. The interesting part is to argue that the message distribution $p_1(M) \approx u(M)$ even though the respective input distributions $p_1(X,Y)$ and $u(X,Y)$ are actually very far apart. For this purpose, let us first note that the distribution $p_1(X,Y)$ can be generated from $u(X,Y)$ by first picking a uniformly random vertex $v$ as the sink and conditioning on the event that $X_{N(v)} = Y_{N(v)} \oplus z_{N(v)}$ (recall that $N(v)$ is the set of edges incident on $v$, $X_{N(v)}$ and $Y_{N(v)}$ are projections of $X$ and $Y$ to the edges in $N(v)$, and $z_{N(v)}$ is the unique string that encodes the orientation of the edges for which vertex $v$ is the sink).

To argue that $p_1(M) \approx u(M)$, first one can use Shearer's inequality (see Lemma \ref{lemma:shearer}) to conclude that under the distribution $u(X,Y)$, the messages $M$ reveal only a small amount of information about $X_{N(v)}$ and $Y_{N(v)}$ for a random vertex $v$. In particular, since an edge appears in $N(v)$ with probability $2/t$ for a random $v$, one would expect $M$ to reveal at most $(2/t) \cdot |M| \le \eps$ bits of information about $X_{N(v)}$ and $Y_{N(v)}$ each (this is also the reason for working with the fooling distribution: since all the inputs are independent of each other, one may use Shearer's inequality). Now to relate the fooling distribution $u(X,Y)$ to the input distribution $p_1(X,Y)$ we need to condition on the event $X_{N(v)} = Y_{N(v)} \oplus z_{N(v)}$. A lemma from \cite{RS15} (see Lemma \ref{lemma:condition} in Section \ref{sec:classical}) exactly captures this situation and says that conditioning on such a collision event, when the messages reveal little information about the colliding variables, does not change the distribution of the messages too much, so we can conclude that $p_1(M) \approx u(M)$.

The proof for the quantum case proceeds more or less analogously. It is still true that the output of a low-error quantum protocol must look very different under distributions supported only on 0-inputs and 1-inputs respectively. We show that $u(X,Y)$ is still a fooling distribution for small-communication quantum protocols. As in the classical case, it is easy to argue using a quantum version of Shearer's inequality (see Lemma \ref{lemma:qshearer}) that small-communication quantum protocols do not reveal too much information about $X_{N(v)}$ and $Y_{N(v)}$ for a random vertex $v$ under the fooling distribution $u(X,Y)$. To condition on the collision event $X_{N(v)} = Y_{N(v)} \oplus z_{N(v)}$, we use a lemma from \cite{ATYY17} (see Lemma \ref{lemma:qcond} in Section \ref{sec:quantum}) which allows us to argue that for a typical vertex $v$, conditioning on the collision event does not change the output too much. So, it must be the case that for a small-communication quantum protocol, the output on an input distribution where $v$ is the sink (for a typical $v$) must be close to the output when the input distribution is $p_0(X,Y)$. This implies that small-communication quantum protocols for the $\sink$ function must have large error. 

\paragraph{Organization.} We introduce preliminaries on information theory, quantum information theory and communication complexity in the next section (Section \ref{sec:prelims}). Section \ref{sec:classical} contains the proof described above for the classical case. The quantum lower bound is given in Section \ref{sec:quantum}.

\section{Preliminaries} \label{sec:prelims}

\subsection{Classical Probability Theory}

\subsubsection*{Probability Spaces and Variables}

Throughout this paper, $\log$ denotes the logarithm taken in base two. We use $[k]$ to denote the set $\{1,2,\dotsc, k\}$ and $[k]^{<n}$ to denote the set of all strings of length less than $n$ over the alphabet $[k]$, including the empty string. The notation $|z|$ denotes the length of the string $z$.

Random variables are denoted by capital letters (e.g.\ $A$) and values they attain are denoted by lower-case letters (e.g.\ $a$). Events in a probability space will be denoted by calligraphic letters (e.g.\ $\CE$). Given $a = (a_1,a_2, \dotsc, a_n)$, we write $a_{\leq i}$ to denote $a_1,\dotsc, a_i$. We define $a_{< i}$ similarly. We write $a_S$ to denote the projection of $a$ to the coordinates specified in the set $S \subseteq [n]$. 

Given a probability space $p$ and a random variable $A$ in the underlying sample space, we use the notation $p(A)$ to denote the probability distribution of the variable $A$ in the probability space~$p$. We will often consider multiple probability spaces with the same underlying sample space, so for example $p(A)$ and $q(A)$ will denote the distribution of the random variable $A$ under the probability spaces $p$ and $q$, respectively, with the underlying sample space of $p$ and $q$ being the same. We write $p(A|b)$ to denote the distribution of $A$ conditioned on the event $B=b$. We write $p(a)$ to denote the number $\BP_p[A=a]$ and $p(a|b)$ to denote the number $\BP_p[A=a|B=b]$. Given a distribution $p(A,B,C,D)$, we write $p(A,B,C)$ to denote the marginal distribution on the variables $A,B,C$. We often write $p(AB)$ instead of $p(A,B)$ for conciseness of notation. Similarly, $p(a,b,c)$ will denote the probability according to the marginal distribution $p(A,B,C)$ and we will often write it as $p(abc)$ for conciseness. 

If $\CW$ is an event, we write $p(\CW)$ to denote its probability according to $p$. For two events $\CW$ and $\CW'$, the probability of their intersection $\CW \cap \CW'$ is denoted by $p(\CW,\CW')$ . Given a probability space $p$ and a random variable $A$, when we write $A \in \CW$ for an event $\CW$ we only consider events in the space of values taken by the variable $A$.

Given a fixed value $c$, we denote by $\Ex{p(b|c)}{g(a,b,c)} := \sum_{b} p(b|c) \cdot g(a,b,c)$, the expected value of the function $g(a,b,c)$ under the distribution $p(B|c)$. If the probability space $p$ is clear from the context, then we will just write $\Ex{b|c}{g(a,b,c)}$ to denote the expectation. For a Boolean function $h(a,b)$ and a probability distribution $p(A,B)$, we use $\I[h(a,b)=0]$ to denote the indicator function for the event $h(a,b)=0$, and we write $p(h=0) := \BE_{p(ab)}[\I[h(a,b)=0]]$ as the probability that $h$ is $0$ under inputs drawn from $p$.

We write $A-M-B$ as a shorthand to say that the random variables $A$, $M$ and $B$ form a \emph{Markov chain}, or in other words, that $A$ and $B$ are independent given $M$: $p(amb) = p(m) \cdot p(a|m) \cdot p(b|m)$ for every $a,b,m$. 

{To illustrate the notation, consider the following example. Let $A \in \bits^2$ be a uniformly distributed random variable in a probability space $p$. Then, $p(A)$ is the uniform distribution on $\bits^2$, and if $a=(0,0)$ then $p(a) = 1/4$. Let $A_1$ and $A_2$ denote the first and second bits of $A$, then if $B = A_1 + A_2 \bmod 2$, then when $b=1$, $p(A|b)$ is the uniform distribution on $\{(0,1),(1,0)\}$. If $a = (1,0)$ and $b=1$, then $p(a|b)=1/2$ and $p(a,b) = 1/4$. If $\CE$ is the event that $A_1=B$, then $p(\CE)=1/2$. Let $q(A)=p(A|\CE)$, then $q(A)$ is the uniform distribution on $\{(0,0),(1,0)\}$ and $q(A_2)$ is the distribution over the sample space $\bits$ which takes the value $0$ with probability $1$.}

\subsubsection*{Statistical Distance}

For two distributions $p(A),q(A)$, the \emph{statistical} (or \emph{total variation}) \emph{distance} $\stat{p(A) - q(A)}$ between them is defined to be $\stat{p(A) - q(A)} =  \max_\CQ \left(p(A \in \CQ) - q(A \in \CQ)\right)$ where $\CQ$ ranges over all events. The following propositions are easy to prove.

\begin{proposition} 
	\label{prop:statdef}
	$\stat{p(A) - q(A)} = \frac12 \sum_{a} |p(a) - q(a)| = \sum_{a: p(a)>q(a)} (p(a) - q(a)).$
\end{proposition}

We say $p(A)$ and $q(A)$ are \emph{$\eps$-close} if $|p(A)-q(A)| \leq \epsilon$ and we write it as $p(A) \papprox{\epsilon} q(A)$.

\begin{proposition} \label{prop:convexstat}
	If $p(AB), q(AB)$ are such that $p(A) = q(A)$, then 
	\[\stat{p(B) - q(B)} = \Ex{p(a)}{\stat{p(B|a) - q(B|a)}}.\]
\end{proposition}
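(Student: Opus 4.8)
The plan is to reduce both sides of the claimed identity to the explicit $\ell_1$ formula for statistical distance supplied by Proposition~\ref{prop:statdef}, and then match the two expressions term by term. By that proposition, $\stat{p(B) - q(B)} = \frac{1}{2}\sum_b \abs{p(b) - q(b)}$, and likewise $\stat{p(B|a) - q(B|a)} = \frac{1}{2}\sum_b \abs{p(b|a) - q(b|a)}$ for each fixed value $a$; hence the right-hand side of the claim is $\frac{1}{2}\sum_a p(a)\sum_b \abs{p(b|a) - q(b|a)}$, and it suffices to show that this double sum equals $\frac{1}{2}\sum_b \abs{p(b) - q(b)}$.

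The heart of the argument is the expansion over the value of $A$ together with the hypothesis. Writing $p(b) = \sum_a p(a)\,p(b|a)$ and, using $p(A) = q(A)$, $q(b) = \sum_a q(a)\,q(b|a) = \sum_a p(a)\,q(b|a)$ — the two mixtures now have the \emph{same} weights $p(a)$ — and noting that $A$ is a coarsening of $B$ here, so that for each $b$ only the single consistent value $a = a(b)$ contributes, we get $p(b) = p(a(b))\,p(b|a(b))$ and $q(b) = p(a(b))\,q(b|a(b))$, whence $\abs{p(b) - q(b)} = p(a(b))\,\abs{p(b|a(b)) - q(b|a(b))}$. Now sum over all $b$ and regroup the terms according to the associated value of $a$; since $p(b|a) = q(b|a) = 0$ for any $b$ inconsistent with $A = a$, the inner sum extends harmlessly over all $b$, and we obtain $\frac{1}{2}\sum_b \abs{p(b) - q(b)} = \sum_a p(a)\cdot\frac{1}{2}\sum_b \abs{p(b|a) - q(b|a)} = \Ex{p(a)}{\stat{p(B|a) - q(B|a)}}$, as desired.

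The one point deserving care — and the only place where the argument could go wrong — is the reason we get an \emph{equality} rather than merely the inequality ``$\le$'' that convexity of statistical distance gives for a mixture with matching weights. This is precisely the conjunction of two facts: the hypothesis $p(A) = q(A)$, which makes the mixture weights coincide, and the disjointness of the fibers $\{\, b : a(b) = a \,\}$ over distinct $a$, which rules out any cancellation when passing from $\sum_b\abs{p(b) - q(b)}$ to the grouped sum. With both in hand the term-by-term matching above is exact, and everything else is routine manipulation of finite sums, so I do not anticipate a genuine obstacle.
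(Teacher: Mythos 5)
There is a genuine gap in the middle of your argument, and it is exactly at the step you flagged as ``the one point deserving care.'' You invoke the claim that ``$A$ is a coarsening of $B$ here,'' i.e.\ that under both $p$ and $q$ the value of $A$ is a deterministic function $a(b)$ of $B$, and use this to get disjoint fibers and a term-by-term match. But that assumption is nowhere in the hypothesis of the proposition, which only says $p(A)=q(A)$. Without it the claimed equality is in fact \emph{false}: take $A,B\in\{0,1\}$ with $A$ uniform under both $p$ and $q$, $B=A$ under $p$ and $B=1-A$ under $q$. Then $p(A)=q(A)$, both $p(B)$ and $q(B)$ are uniform so $\stat{p(B)-q(B)}=0$, yet $\stat{p(B|a)-q(B|a)}=1$ for every $a$, so the right-hand side is $1$. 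Your ``disjointness of fibers'' step is precisely what fails here: each $b$ receives contributions from a mixture over $a$, and the absolute value of the mixture can be strictly smaller than the mixture of absolute values.

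What is true under the stated hypothesis, and what the paper actually uses in the proof of Theorem~\ref{thm:statclose}, is only the inequality
\[
\stat{p(B)-q(B)} \;=\; \tfrac12\sum_b\Bigl|\sum_a p(a)\bigl(p(b|a)-q(b|a)\bigr)\Bigr| \;\le\; \tfrac12\sum_b\sum_a p(a)\bigl|p(b|a)-q(b|a)\bigr| \;=\; \Ex{p(a)}{\stat{p(B|a)-q(B|a)}},
\]
which is just the triangle inequality once the mixture weights coincide. (The genuine equality under $p(A)=q(A)$ is for the \emph{joint} distance, $\stat{p(AB)-q(AB)}=\Ex{p(a)}{\stat{p(B|a)-q(B|a)}}$, which you can prove by exactly the computation you wrote but without any assumption on $A$ being determined by $B$.) So you should either add the hypothesis that $A$ is a function of $B$ under both distributions---which is not satisfied in the paper's application, where $A$ is the random vertex $V$ and $B$ is the transcript $M$---or, better, replace ``$=$'' by ``$\le$,'' drop the coarsening claim entirely, and observe that the triangle inequality already gives everything the paper needs.
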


\begin{lemma} \label{prop:statcond}
	If $\CE$ is an event such that $p(\CE) = 1 -\delta$, then $\stat{p(A|\CE) - p(A)} = \delta.$
\end{lemma}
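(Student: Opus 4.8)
The plan is to apply Proposition~\ref{prop:statdef} directly after unpacking the definition of conditioning. By the convention fixed in the preliminaries, the event $\CE$ is an event in the space of values taken by $A$, so we may write $\CE = \{A \in S\}$ for some set $S$ of values. Then $p(\CE) = \sum_{a \in S} p(a) = 1 - \delta$, and by definition of conditional probability $p(a \mid \CE) = p(a)/p(\CE) = p(a)/(1-\delta)$ for $a \in S$, while $p(a \mid \CE) = 0$ for $a \notin S$.

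Next I would compute the statistical distance via the one-sided formula $\stat{p(A|\CE) - p(A)} = \sum_{a : p(a|\CE) > p(a)} (p(a|\CE) - p(a))$ from Proposition~\ref{prop:statdef}. Since $1/(1-\delta) \geq 1$, we have $p(a|\CE) \geq p(a)$ exactly for $a \in S$ (assuming $\delta > 0$; the case $\delta = 0$ is trivial), so the sum runs over $a \in S$ and equals $\sum_{a \in S}\big(\tfrac{p(a)}{1-\delta} - p(a)\big) = \big(\tfrac{1}{1-\delta} - 1\big)\sum_{a \in S} p(a) = \tfrac{\delta}{1-\delta}\cdot(1-\delta) = \delta$, which is the claimed identity. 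As a cross-check one can instead use the $\tfrac12\sum_a |p(a) - p(a|\CE)|$ form: the mass inside $S$ contributes $\delta$ and the mass outside $S$ contributes $\delta$, for a total of $2\delta$, again giving $\delta$ after halving.

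The only point requiring care — and the thing that makes the statement false without it — is the standing convention that $\CE$ is an event in the value-space of $A$ rather than an arbitrary event in some larger probability space; if $\CE$ were, say, independent of $A$, conditioning would not move the distribution of $A$ at all, despite $\delta$ possibly being large. Once this convention is invoked, the proof is a one-line computation with no genuine obstacle.
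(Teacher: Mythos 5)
Your proof is correct and follows essentially the same route as the paper: both unpack $p(a|\CE)$ via Bayes' rule and then apply Proposition~\ref{prop:statdef}, with you using the one-sided form $\sum_{a:p(a|\CE)>p(a)}$ and the paper using the $\tfrac12\sum_a|\cdot|$ form split into $a\in\CE$ and $a\notin\CE$ — both of which you in fact carry out. Your explicit remark that the lemma relies on the paper's convention that $\CE$ lives in the value-space of $A$ (a hypothesis the paper uses silently when writing $a\notin\CE$) is a worthwhile observation.
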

\begin{proof}
	Note that for any $a \notin \CE$, $p(a|\CE) = 0$ and for $a \in \CE$, using Bayes' rule, we get that 
	\begin{align} \label{eqn:bayes}
	\ p(a|\CE) = \frac{p(a, \CE)}{p(\CE)} = \frac{p(a)}{p(\CE)} = \frac{p(a)}{1- \delta}.
	\end{align}
	By Proposition \ref{prop:statdef}, we have that 
	\begin{align*}
		\ \stat{p(A|\CE) - p(A)} & =\frac12 \sum_{a \in \CE} |p(a|\CE) - p(a)| + \frac12 \sum_{a \notin \CE} |p(a|\CE) - p(a)| = \frac12 \sum_{a \in \CE} |p(a|\CE) - p(a)|+ \frac{\delta}{2} \\
		\ & \stackrel{\eqref{eqn:bayes}}{=} \frac12 \sum_{a \in \CE} \left(\frac{p(a)}{1-\delta} - p(a)\right) + \frac\delta{2} = \frac12  \cdot \frac\delta{1-\delta} \cdot p(\CE) + \frac\delta2 = \delta,
	\end{align*}
	where the second inequality follows from \eqref{eqn:bayes}.
\end{proof}

\subsubsection*{Divergence and Mutual Information}
The \emph{divergence} between distributions $p(A)$ and $q(A)$ is defined to be 
\[
\Div{p(A)}{q(A)} = \sum_a p(a) \log \frac{p(a)}{q(a)}.
\] 
In a probability space $p$, the \emph{mutual information} between $A,B$ conditioned on $C$ is defined as 
\begin{align*}
\ \Infc[p]{A}{B}{C} &= \Ex{p(bc)}{ \Div{p(A|bc)}{p(A|c)}} \\
\                   &= \Ex{p(ac)}{ \Div{p(B|ac)}{p(B|c)}} = \sum_{a,b,c} p(abc) \log \frac{p(a|bc)}{p(a|c)}.
\end{align*}

\subsubsection*{Basic Information Theory Facts}

The proofs of the following basic facts can be found in the book by Cover and Thomas \cite{CT06}. {In the following, $p$ and $q$ are probability spaces (over the same sample space), and $A,B$ and $C$ are random variables on the underlying sample space.}

\begin{proposition} \label{proposition:divpositive} $\Div{p(A)}{q(A)} \geq 0$.
\end{proposition}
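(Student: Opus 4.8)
The plan is to derive the non-negativity of divergence from \emph{Jensen's inequality} applied to the concave function $\log$; this is the statement usually called Gibbs' inequality. First I would handle the degenerate case separately: if there is some $a$ with $p(a) > 0$ but $q(a) = 0$, then with the standard convention $p(a)\log(p(a)/q(a)) = +\infty$ the claimed bound holds trivially. From that point on I may assume $q(a) > 0$ whenever $p(a) > 0$; terms with $p(a) = 0$ contribute $0$ under the convention $0\log 0 = 0$, so the sum defining $\Div{p(A)}{q(A)}$ effectively ranges only over the support $S = \{a : p(a) > 0\}$, on which every summand is finite.

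Next I would rewrite the divergence as $\Div{p(A)}{q(A)} = -\sum_{a \in S} p(a)\log\frac{q(a)}{p(a)}$ and apply Jensen's inequality to the concave function $\log$, using the numbers $\{p(a)\}_{a \in S}$ as probability weights (they sum to $1$). This gives
\[
\sum_{a \in S} p(a)\log\frac{q(a)}{p(a)} \;\le\; \log\!\left(\sum_{a \in S} p(a)\cdot\frac{q(a)}{p(a)}\right) = \log\!\left(\sum_{a \in S} q(a)\right) \le \log 1 = 0,
\]
where the last inequality uses $\sum_{a \in S} q(a) \le \sum_a q(a) = 1$ together with monotonicity of $\log$. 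Negating both sides yields $\Div{p(A)}{q(A)} \ge 0$.

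Honestly there is no real obstacle here, since this is a textbook fact; the only points that need a little care are the support and $+\infty$ bookkeeping described above and, if one also wants the equality characterization (equality iff $p(A)=q(A)$), invoking \emph{strict} concavity of $\log$. An alternative, Jensen-free route would be to apply the elementary inequality $\log x \le (x-1)/\ln 2$ termwise: $-\Div{p(A)}{q(A)} = \sum_{a \in S} p(a)\log\frac{q(a)}{p(a)} \le \frac{1}{\ln 2}\sum_{a \in S} p(a)\left(\frac{q(a)}{p(a)} - 1\right) = \frac{1}{\ln 2}\left(\sum_{a\in S} q(a) - 1\right) \le 0$. Either way the whole argument is only a few lines, so I would simply cite Jensen and write it out.
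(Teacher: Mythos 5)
Your proof is correct and is the standard Gibbs'-inequality argument via Jensen applied to the concave function $\log$. Note, however, that the paper does not prove this proposition at all: it is listed among ``Basic Information Theory Facts'' with a pointer to the textbook of Cover and Thomas \cite{CT06}, so there is no in-paper proof to compare against; your argument (including the careful handling of the support and of the $q(a)=0$ case, and the alternative termwise bound $\log x \le (x-1)/\ln 2$) is exactly the kind of proof one would find there.
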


\begin{proposition} \label{proposition:infoupper} If $A \in \{0,1\}^\ell$, then $\Inf[p]{A}{B} \leq \ell$.
\end{proposition}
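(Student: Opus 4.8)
The plan is to deduce the bound from the non-negativity of divergence (Proposition~\ref{proposition:divpositive}) by comparing the relevant distributions against the uniform distribution $u(A)$ on $\bits^\ell$, for which $u(a)=2^{-\ell}$ for all $a\in\bits^\ell$.

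First I would establish the elementary identity
\[
\Inf[p]{A}{B} + \Div{p(A)}{u(A)} = \Ex{p(b)}{\Div{p(A|b)}{u(A)}},
\]
which follows by writing out the definition of mutual information as $\Inf[p]{A}{B}=\Ex{p(b)}{\Div{p(A|b)}{p(A)}}$, expanding all the divergences, combining the logarithms, and using $\sum_b p(b)\,p(a|b)=p(a)$. Since $\Div{p(A)}{u(A)}\geq 0$ by Proposition~\ref{proposition:divpositive}, this already gives $\Inf[p]{A}{B}\leq \Ex{p(b)}{\Div{p(A|b)}{u(A)}}$.

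Next I would bound each term on the right by $\ell$:
\[
\Div{p(A|b)}{u(A)}=\sum_a p(a|b)\log\frac{p(a|b)}{2^{-\ell}}=\ell+\sum_a p(a|b)\log p(a|b)\leq \ell,
\]
using that $p(a|b)\log p(a|b)\leq 0$ for every $a$ since conditional probabilities lie in $[0,1]$. Taking the expectation over $p(b)$ then yields $\Inf[p]{A}{B}\leq\ell$.

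I do not expect any real obstacle here: this is one of the most basic facts in information theory, amounting to $\Inf[p]{A}{B}\leq H(A)\leq\log|\supp(A)|\leq\ell$. The only mild subtlety is presentational, namely that the preliminaries develop the information measures through divergence rather than entropy, so I phrase the argument to use only the definition of $\Inf[p]{A}{B}$ and Proposition~\ref{proposition:divpositive}; one could equivalently invoke the identity $\Inf[p]{A}{B}=H(A)-H(A|B)$ together with the maximality of the uniform entropy and $H(A|B)\geq 0$.
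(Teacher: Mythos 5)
Your proof is correct. The paper itself does not prove Proposition~\ref{proposition:infoupper}; it merely cites Cover and Thomas~\cite{CT06} for this and the surrounding basic facts, so there is no in-paper argument to compare against. Your derivation is the standard textbook bound $\Inf[p]{A}{B}=H(A)-H(A|B)\le H(A)\le\ell$ repackaged entirely in terms of divergences against the uniform distribution, which fits cleanly with the way the preliminaries develop information measures through $\Div{\cdot}{\cdot}$ rather than through entropy; the key identity and the termwise bound $\Div{p(A|b)}{u(A)}\le\ell$ are both verified correctly (using the convention $0\log 0=0$).
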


\begin{proposition}[Pinsker's Inequality] \label{proposition:pinsker} 
$$\stat{p(A)- q(A)}^2 \leq \dfrac{\ln 2}{2} \cdot \Div{p(A)}{q(A)} \le \Div{p(A)}{q(A)}.$$
\end{proposition}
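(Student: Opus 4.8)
The plan is to treat the two inequalities separately, the second being essentially free. Since $\ln 2 < 2$ we have $\tfrac{\ln 2}{2} < 1$, and $\Div{p(A)}{q(A)} \ge 0$ by \propositionref{proposition:divpositive}; scaling a nonnegative number by a factor less than $1$ only decreases it, so $\tfrac{\ln 2}{2}\Div{p(A)}{q(A)} \le \Div{p(A)}{q(A)}$.

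For the main inequality $\stat{p(A)-q(A)}^2 \le \tfrac{\ln 2}{2}\Div{p(A)}{q(A)}$, I would first pass to natural logarithms: writing $\Div{p(A)}{q(A)} = \tfrac{1}{\ln 2}\sum_a p(a)\ln\tfrac{p(a)}{q(a)}$, the goal becomes $2\stat{p(A)-q(A)}^2 \le \sum_a p(a)\ln\tfrac{p(a)}{q(a)}$. The first real step is to reduce to a two-outcome problem. Let $\CQ = \{a : p(a) > q(a)\}$; by \propositionref{prop:statdef} we have $\stat{p(A)-q(A)} = p(A\in\CQ) - q(A\in\CQ)$, and in particular $p(\CQ) \ge q(\CQ)$. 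Splitting the sum $\sum_a p(a)\ln\tfrac{p(a)}{q(a)}$ into the two blocks $\CQ$ and its complement and applying the log-sum inequality $\sum_i \alpha_i\ln(\alpha_i/\beta_i) \ge \big(\sum_i\alpha_i\big)\ln\tfrac{\sum_i\alpha_i}{\sum_i\beta_i}$ to each block (a consequence of convexity of $t\mapsto t\ln t$, for which I would include a one-line argument if it is not taken as known) shows that this sum is at least $p(\CQ)\ln\tfrac{p(\CQ)}{q(\CQ)} + (1-p(\CQ))\ln\tfrac{1-p(\CQ)}{1-q(\CQ)}$. Hence it suffices to prove the two-point inequality $2(a-b)^2 \le a\ln\tfrac ab + (1-a)\ln\tfrac{1-a}{1-b}$ for $1 \ge a \ge b \ge 0$ (using the convention $0\ln 0 = 0$; the inequality is trivial if $q$ vanishes somewhere $p$ does not, since then the divergence is $+\infty$).

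This last inequality I would establish by elementary calculus: fix $b$ and set $g(a) = a\ln\tfrac ab + (1-a)\ln\tfrac{1-a}{1-b} - 2(a-b)^2$. Then $g(b) = 0$; differentiating gives $g'(a) = \ln\tfrac{a(1-b)}{b(1-a)} - 4(a-b)$, so $g'(b) = 0$; and $g''(a) = \tfrac{1}{a(1-a)} - 4 \ge 0$ since $a(1-a) \le \tfrac14$ on $[0,1]$. Thus $g$ is convex with a critical point at $a = b$, so $g(a) \ge g(b) = 0$ for all $a \in [0,1]$, which is exactly the desired bound. Substituting back $a = p(\CQ)$, $b = q(\CQ)$ and dividing through by $\ln 2$ yields $\stat{p(A)-q(A)}^2 \le \tfrac{\ln 2}{2}\Div{p(A)}{q(A)}$. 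I expect the only step needing care to be the coarse-graining: one must take $\CQ$ to be exactly the set where $p$ dominates $q$, so that the two-point total variation distance genuinely equals the original one, and one must apply the log-sum inequality in the right direction; everything else is routine.
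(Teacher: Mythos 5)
The paper does not give its own proof of this proposition: Pinsker's inequality is listed among ``basic information theory facts'' whose proofs are deferred to the Cover--Thomas textbook, so there is no in-paper argument to compare against. Your proof is a correct, self-contained derivation along the classical Csisz\'ar--Kullback--Pinsker line. The reduction to natural logarithms correctly accounts for the paper's convention that $\Div{p(A)}{q(A)}$ uses $\log_2$; the coarse-graining step applies the log-sum inequality in the right direction to the partition $\{\CQ,\CQ^c\}$ with $\CQ = \{a : p(a) > q(a)\}$, which by \propositionref{prop:statdef} preserves the total variation distance exactly; and the two-point inequality $2(a-b)^2 \le a\ln\tfrac ab + (1-a)\ln\tfrac{1-a}{1-b}$ is correctly verified by the second-derivative argument using $a(1-a)\le\tfrac14$, with $g(b)=g'(b)=0$ and $g''\ge 0$ forcing $g\ge 0$. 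The degenerate case where $q$ vanishes on the support of $p$ is also handled. I see no gaps.
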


\begin{lemma}[Shearer's Inequality \cite{GKR14}]\label{lemma:shearer}
 Let $A= (A_1,\dotsc, A_n)$ where the $A_i$'s are mutually independent. Let $M$ be another random variable and $S \subseteq [n]$ be a random set independent of $A$ and $M$, such that $p(i \in S) \le \mu$ for every $i \in [n]$. Then, we have 
$$\Infc[p]{A_S}{M}{S} \le \mu \cdot \Inf[p]{A}{M}. $$
\end{lemma}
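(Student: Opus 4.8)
The plan is to expand the left-hand side by conditioning on the random set $S$ and then, for each fixed $S$, bound $\Infc[p]{A_S}{M}{S=S}$ by $\Inf[p]{A_S}{M}$ (dropping the conditioning on the event $S=S$ is legitimate because $S$ is independent of $(A,M)$). So first I would write
\[
\Infc[p]{A_S}{M}{S} = \Ex{p(S)}{\Inf[p]{A_S}{M}\,\big|\,S=S} = \sum_{T \subseteq [n]} p(S = T)\cdot\Inf[p]{A_T}{M},
\]
using that conditioned on $\{S=T\}$ the pair $(A,M)$ has the same joint distribution as unconditionally.

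Next, the key structural step is the chain rule together with the fact that the $A_i$ are mutually independent. Writing $A_T = (A_{i})_{i \in T}$ with the indices in increasing order $i_1 < i_2 < \dots < i_k$, the chain rule gives
\[
\Inf[p]{A_T}{M} = \sum_{j=1}^{k} \Infc[p]{A_{i_j}}{M}{A_{i_1}\dots A_{i_{j-1}}} \le \sum_{j=1}^{k}\Infc[p]{A_{i_j}}{M}{A_{<i_j}} = \sum_{i \in T}\Infc[p]{A_i}{M}{A_{<i}},
\]
where the inequality is because conditioning on more of the (mutually independent, hence here "harmless") variables only decreases the conditional mutual information with $M$ — more precisely, since $A_i$ is independent of $A_{<i}$, we have $\Infc[p]{A_i}{M}{A_{S'}} \le \Infc[p]{A_i}{M}{A_{<i}}$ whenever $S' \subseteq [i-1]$, which follows from the chain rule applied to $\Infc[p]{A_i A_{S'}}{M}{\cdot}$ expanded in two orders; and the terms $\Infc[p]{A_{i_j}}{M}{A_{<i_j}}$ are exactly the increments in a single chain-rule expansion of $\Inf[p]{A_{\le i_k}}{M}$, so they sum (over all $i$, not just $i\in T$) to at most $\Inf[p]{A}{M}$.

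Combining the two displays and swapping the order of summation:
\[
\Infc[p]{A_S}{M}{S} \le \sum_{T\subseteq [n]} p(S=T)\sum_{i\in T}\Infc[p]{A_i}{M}{A_{<i}} = \sum_{i=1}^{n} p(i \in S)\cdot \Infc[p]{A_i}{M}{A_{<i}} \le \mu\sum_{i=1}^{n}\Infc[p]{A_i}{M}{A_{<i}} \le \mu\cdot\Inf[p]{A}{M},
\]
where the last inequality is one more chain-rule expansion of $\Inf[p]{A}{M} = \sum_{i=1}^n \Infc[p]{A_i}{M}{A_{<i}}$. I expect the main obstacle to be making rigorous the monotonicity step $\Infc[p]{A_i}{M}{A_{S'}} \le \Infc[p]{A_i}{M}{A_{<i}}$: this is precisely where independence of the $A_i$'s is used (in general conditioning can increase mutual information), and one must present it carefully — either via the two-order chain-rule identity noted above, or by checking directly that $\Infc[p]{A_i}{A_{S'\setminus(<i)}}{M\,A_{S'\cap(<i)}} \ge 0$ rearranges things correctly. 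The rest is bookkeeping with the chain rule and Fubini over the random set.
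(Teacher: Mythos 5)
Your proof is correct: the identity $\Infc[p]{A_S}{M}{S}=\sum_T p(S=T)\,\Inf[p]{A_T}{M}$ uses only independence of $S$ from $(A,M)$; the chain rule plus the monotonicity step $\Infc[p]{A_i}{M}{A_{S'}}\le\Infc[p]{A_i}{M}{A_{<i}}$ for $S'\subseteq[i-1]$ (which you correctly justify by expanding $\Infc[p]{A_i}{M\,A_{[i-1]\setminus S'}}{A_{S'}}$ in two orders and using $\Infc[p]{A_i}{A_{[i-1]\setminus S'}}{A_{S'}}=0$ from mutual independence, together with nonnegativity of conditional mutual information) is exactly where independence of the $A_i$'s enters; and the final Fubini plus chain-rule collapse is routine. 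Note that the paper does not give its own proof of this lemma---it simply cites \cite{GKR14}---and your argument is the standard one found there, so there is nothing to reconcile.
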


\subsection{Classical Communication Complexity} \label{subsec:cc}

The {\em communication complexity} of a protocol is the maximum number of bits that may be exchanged by the protocol.  Communication protocols may use \emph{shared randomness} and henceforth we will refer to such protocols as randomized protocols. We say a randomized protocol computing a Boolean function $f(x,y)$ has error $\delta$, if for every input, the protocol outputs the correct answer with probability at least $1 - \delta$, where the probability is over the shared randomness. 

We briefly describe some basic properties of communication protocols that we need. For more details see the textbooks \cite{KN96} or \cite {RY18}. For a deterministic protocol $\pi$, let $\pi(x,y)$ denote the messages of the protocol on inputs $x,y$. For any transcript $m$ of the protocol, define the events:
\begin{align*}
	\CS_m &= \{x \mid \exists y \text{ such that } \pi(x,y) = m\}, & \CT_m &= \{y \mid \exists x \text{ such that } \pi(x,y) = m\}.
\end{align*}
We then have:
\begin{proposition}[Messages Correspond to Rectangles]
	\label{prop:rec}
	If $m$ is a transcript and $x,y$ are inputs to a deterministic protocol $\pi$, then, $\pi(x,y) = m  \iff x \in \CS_m \wedge y \in \CT_m$.
\end{proposition}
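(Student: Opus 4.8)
The plan is to prove the two implications separately. The forward direction ($\Rightarrow$) is immediate from the definitions: if $\pi(x,y)=m$, then $y$ itself is a witness showing $x\in\CS_m$, and $x$ itself is a witness showing $y\in\CT_m$. All the content is in the reverse direction ($\Leftarrow$), which I would establish by an induction on the rounds of the protocol, using the defining feature of a \emph{deterministic} protocol: in each round, the identity of the player who speaks and the halting decision are functions of the transcript so far, and the bit that player sends is a function of that player's own input and the transcript so far — not of the other player's input.

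Concretely, assume $x\in\CS_m$ and $y\in\CT_m$, so there exist $y'$ and $x'$ with $\pi(x,y')=m$ and $\pi(x',y)=m$. Write $m=m_1 m_2\cdots m_k$ for the successive messages. I would prove, by induction on $i\in\{0,1,\dots,k\}$, that $m_1\cdots m_i$ is a prefix of the transcript produced by $\pi$ on input $(x,y)$. The base case $i=0$ is the empty prefix. For the inductive step, suppose $m_1\cdots m_{i-1}$ is a prefix of $\pi(x,y)$; since the speaker in round $i$ is determined by $m_1\cdots m_{i-1}$ alone, the same player speaks in round $i$ of all three runs. If that player is Alice, then in the run on $(x,y')$ she sends $m_i$, and since that bit depends only on $x$ and $m_1\cdots m_{i-1}$, she sends $m_i$ in the run on $(x,y)$ as well; if it is Bob, use the run on $(x',y)$ symmetrically. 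Hence $m_1\cdots m_i$ is a prefix of $\pi(x,y)$. Taking $i=k$, the transcript on $(x,y)$ begins with $m$; and because $m$ is a complete transcript (a leaf of the protocol tree, as witnessed by $\pi(x,y')=m$), the halting condition fires there, so the run on $(x,y)$ stops at $m$, giving $\pi(x,y)=m$.

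I do not expect a genuine obstacle here: the only point requiring care is pinning down the formal model of a deterministic protocol so that the two structural facts invoked above — speaker and halting depend on the transcript alone, and each message depends only on the speaker's input and the transcript — hold by definition. Once that is fixed, the induction is entirely routine, and an alternative phrasing that avoids an explicit induction (tracing the root-to-leaf path in the protocol tree determined by $m$ and observing that consistency with $x$ on Alice's nodes and with $y$ on Bob's nodes suffices to follow it) gives the same argument.
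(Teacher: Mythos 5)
The paper does not actually prove this proposition: it is stated as a basic property of deterministic protocols, with the reader referred to the textbooks \cite{KN96,RY18} for details. Your proof is correct and is exactly the standard textbook argument — the forward direction is immediate from the definitions of $\CS_m$ and $\CT_m$, and the reverse direction proceeds by the usual round-by-round induction (equivalently, following the root-to-leaf path in the protocol tree), using that each speaker's message depends only on that speaker's own input and the transcript so far, and that the speaker identity and halting condition are functions of the transcript alone. There is no gap; this fills in what the paper leaves implicit.
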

\vspace*{2pt}
Proposition \ref{prop:rec} implies:
\begin{proposition}[Markov Property of Protocols]\label{prop:markov}
	Let $X$ and $Y$ be random inputs to a deterministic protocol and let $M$ denote the messages of this protocol. If $X$ and $Y$ are independent then $X - M -Y$.
\end{proposition}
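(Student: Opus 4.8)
The plan is to derive the Markov chain $X - M - Y$ as a direct consequence of the rectangle structure in Proposition~\ref{prop:rec}. Unpacking the definition, $X - M - Y$ asserts that $p(xmy) = p(m)\,p(x|m)\,p(y|m)$ for all values $x,y,m$. It suffices to check this identity for every transcript $m$ with $p(m) > 0$, since when $p(m) = 0$ both sides vanish. So first I would fix such a transcript $m$.

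Next I would invoke Proposition~\ref{prop:rec} to rewrite the event $\{M = m\}$ as the combinatorial rectangle $\{X \in \CS_m\} \cap \{Y \in \CT_m\}$. Since $X$ and $Y$ are independent by hypothesis, this gives $p(m) = p(X \in \CS_m)\cdot p(Y \in \CT_m)$, and similarly $p(X = x, M = m) = p(x)\,\I[x \in \CS_m]\cdot p(Y \in \CT_m)$ and $p(M = m, Y = y) = p(X \in \CS_m)\cdot p(y)\,\I[y \in \CT_m]$. Dividing through by $p(m)$ (which is a product of the two marginal rectangle probabilities, each necessarily positive), I obtain $p(x|m) = p(x)\,\I[x \in \CS_m]/p(X \in \CS_m)$ and $p(y|m) = p(y)\,\I[y \in \CT_m]/p(Y \in \CT_m)$.

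Finally I would multiply these expressions back together: $p(m)\,p(x|m)\,p(y|m)$ telescopes to $p(x)\,p(y)\,\I[x \in \CS_m]\,\I[y \in \CT_m]$, which by independence of $X,Y$ and another application of Proposition~\ref{prop:rec} equals $p(x,y)\,\I[\pi(x,y) = m] = p(xmy)$. This is exactly the Markov condition. I do not expect any genuine obstacle here; the only points needing a moment's care are the bookkeeping for zero-probability transcripts (disposed of at the start) and being explicit that the step "rectangle event $=$ product event" is precisely where the hypothesis that $X$ and $Y$ are independent gets used — a deterministic protocol by itself does not make $X$ and $Y$ conditionally independent given $M$ when the input distribution is correlated.
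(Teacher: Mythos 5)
Your argument is correct and is exactly the route the paper intends: the paper states that Proposition~\ref{prop:markov} follows from Proposition~\ref{prop:rec} without writing out the details, and your proposal fills in precisely those details (rectangle structure of the transcript event, independence of $X$ and $Y$, and the resulting factorization of $p(xmy)$). No gaps.
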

\vspace*{2pt}

\begin{lemma}[Errors and Statistical Distance] 
	\label{lemma:error}
	Let $h(x,y)$ be a boolean function and $p(X,Y)$ be a distribution such that $p(h=0) = p(h=1) = \frac12$. If $\pi$ is a deterministic protocol with messages $M$ that computes $h$ with error $\delta$ on the distribution $p(XY)$, then $|p(M|h=0)-p(M|h=1)|\ge 1 - 2\delta$. 
\end{lemma}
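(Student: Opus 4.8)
The plan is to exhibit a single event that distinguishes $p(M \mid h=0)$ from $p(M \mid h=1)$ by almost all of the probability mass, namely the set of transcripts on which the protocol outputs~$1$. Since $\pi$ is deterministic, its output is a fixed function of the transcript; write $O(m) \in \bits$ for the bit that $\pi$ outputs on transcript~$m$, and let $\CQ = \{m : O(m) = 1\}$.

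First I would translate the error guarantee into a statement about the two conditional message distributions. Put $\delta_b = \BP_p[\,O(M) \ne b \mid h = b\,]$ for $b \in \bits$, so that $\delta_b$ is the error probability of $\pi$ conditioned on $h = b$. By the law of total probability together with the assumption $p(h=0) = p(h=1) = \tfrac12$,
\[
\delta \;\ge\; \BP_p[\,O(M) \ne h(X,Y)\,] \;=\; \tfrac12\,\delta_0 + \tfrac12\,\delta_1 ,
\]
hence $\delta_0 + \delta_1 \le 2\delta$.

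Next I would feed the event $\CQ$ into the variational characterization of statistical distance from Proposition~\ref{prop:statdef}: for any two distributions one has $\stat{p(M|h=1) - p(M|h=0)} \ge p(M \in \CQ \mid h=1) - p(M \in \CQ \mid h=0)$. Now $p(M \in \CQ \mid h=1) = \BP_p[\,O(M) = 1 \mid h = 1\,] = 1 - \delta_1$, since on the event $h=1$ the protocol is correct exactly when $O(M) = 1$; and $p(M \in \CQ \mid h=0) = \BP_p[\,O(M) = 1 \mid h = 0\,] = \delta_0$, since on the event $h=0$ outputting~$1$ is precisely an error. Subtracting and using that $\stat{\cdot}$ is symmetric, $\stat{p(M|h=0) - p(M|h=1)} \ge (1 - \delta_1) - \delta_0 = 1 - (\delta_0 + \delta_1) \ge 1 - 2\delta$, which is the claim.

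There is essentially no obstacle here; the only point requiring (minor) care is the observation that for a deterministic protocol the output is a genuine function of the transcript alone, which is what makes the set of output-$1$ transcripts a legitimate distinguishing event, together with the bookkeeping that relates the per-side errors $\delta_0,\delta_1$ to the average error $\delta$ under~$p$ via $p(h=0)=p(h=1)=\tfrac12$.
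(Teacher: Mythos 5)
Your proof is correct and follows essentially the same route as the paper: both arguments take the distinguishing event to be the set of transcripts on which the (deterministic) protocol outputs a fixed bit, then relate the difference of the two conditional probabilities to the average error using $p(h=0)=p(h=1)=\tfrac12$. The only cosmetic difference is that you split the error into the per-side quantities $\delta_0,\delta_1$ and use the inequality $\delta \ge \tfrac12(\delta_0+\delta_1)$, while the paper writes the success probability directly as $1-\delta$ and rearranges; your version is marginally more careful about allowing the error to be at most $\delta$ rather than exactly $\delta$.
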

\begin{proof}
	Since $|p(M|h=0)-p(M|h=1)|=\max_{\CQ} (p(M\in \CQ|h=0) - p(M \in \CQ|h=1))$ it suffices to exhibit an event $\CQ$ such that $p(M\in \CQ|h=0) - p(M \in \CQ|h=1) = 1-2\delta$.  Let $\CM_0$ denote the event that the protocol outputs a zero. Then, since $p(h=0)=p(h=1)=\frac12$, writing the probability of success in terms of $\CM_0$, we have
	\[ 1 - \delta = \frac{p(M\in \CM_0|h=0)}{2} + \frac{1- p(M \in \CM_0|h=1)}{2} = \frac12 + \frac{p(M \in \CM_0|h=0) - p(M \in \CM_0|h=1)}{2}.\]
On rearranging, the above gives us that $p(M \in \CM_0|h=0) - p(M \in \CM_0|h=1) = 1-2\delta$ and hence the statistical distance must be at least $1-2\delta$.
\end{proof}

\subsection{Quantum Information Theory}

Here we briefly state the facts we need from quantum information theory. For details, see the textbooks \cite{W13} or \cite{W18}.

\subsubsection*{Quantum States and Measurements}

Overloading the notation, we use capital letters $A, B$, etc.\ to represent registers and use $\CH_A, \CH_B$, etc.\ to denote the associated Hilbert spaces. As before, given registers $A = A_1,\ldots, A_n$ and a set $S \subseteq [n]$, we will use $A_S$ to denote the sequence of registers $\{A_i\}_{i \in S}$. For any register $A$, $|A| = \lceil\log (\dim \CH_A)\rceil$ denotes the number of qubits in $A$. Given a Hilbert space $\CH_A$, we use $\{\ket[A]{a}\}$ to denote a canonical orthonormal basis, and if $A$ is a single-qubit register we use $\{\ket[A]{0}, \ket[A]{1}\}$ to denote the computational basis for the Hilbert space $\CH_A$. We write $U_A$ to denote a unitary acting on the Hilbert space $\CH_A$ corresponding to a register $A$. 

A \emph{density operator} on $\CH_A$ is a linear operator from $\CH_A$ to $\CH_A$ that is positive semi-definite and has a unit trace. The set of all density operators on a Hilbert space $\CH_A$ will be denoted by $\CD(\CH_A)$. Since a linear operator on a finite-dimensional Hilbert space can be described equivalently with a matrix representation, we will use these notions interchangeably. 

A \emph{(quantum) state} $\rho_A$ on a register $A$ is a density operator on $\CH_A$. A state $\rho_A$ is called \emph{pure} if it has rank 1. For a unit vector $\ket[A] \psi \in \CH_A$ (viewed as a column vector), we denote by $\bra[A]\psi$ its adjoint (a row vector), and by $\psi_A$ the corresponding state $\ketbra[A] \psi \psi $, but we will also sometimes use the vector $\ket[A] \psi$ to refer to the corresponding pure state. A classical distribution $p(A)$ can be viewed as the diagonal state $\sum_a p(a)\ketbra[A]{a}{a}$ and vice versa, so we will refer to any diagonal state as a classical state.

We use $\rho_A \otimes \sigma_B$ to denote the tensor product of $\rho_A$ and $\sigma_B$ on the Hilbert space $\CH_A \otimes \CH_B$.  We adopt the convention of omitting Identity operators from a tensor product: instead of $U_R \otimes I_A$ or $\bra[R]{r} \otimes I_A$, we write $U_R$ or $\bra[R]{r}$ since the subscripts will convey the necessary information.

A state $\rho_{XA}$ is called a \emph{classical-quantum} state with $X$ being the classical register if it is of the form $\rho_{XA} = \sum_{x} p(x) \ketbra[X]{x}{x} \otimes \rho^x_A$ where $p(X)$ is a classical probability distribution and $\rho^x_A$ is a state on the register $A$. 

Given a linear operator $M_{AB}$ on $\CH_A \otimes \CH_B$, the \emph{partial trace} of $M_{AB}$ over $A$ is defined as 
	\[ \Tr_A(M_{AB}) = \sum_{a} \bra[A]{a}M_{AB}\ket[A]{a}.\]
	
The partial trace operation is linear: $\Tr_A(M_{AB} + M'_{AB}) = \Tr_A(M_{AB}) + \Tr_A(M'_{AB})$ and satisfies the following identities: $\Tr_A (M_A \otimes M_B) = \Tr_A(M_A) M_B$ and $\Tr_A (U_B M_{AB}) = U_B \Tr_A(M_{AB})$.

With the above, we can define the notion of a marginal or reduced state: for a bipartite state $\rho_{AB}$, the \emph{marginal state} $\rho_B$ on the register $B$ is defined as $\rho_B := \Tr_A(\rho_{AB})$. Note that if we have a classical quantum state $\rho_{XA}$, then the marginal state $\rho_X$ is a classical state. 

Given a state $\rho_A$ we can always consider it as a marginal of a pure state $\rho_{EA} = \ketbra[EA]{\rho_{EA}}{\rho_{EA}}$ on a larger system. Such a state $\ket[EA]{\rho_{EA}}$ is called a \emph{purification} of $\rho_A$. We will adopt the convention of using the same Greek letters to denote the purification: if we say that $\ket[EA]{\rho_{EA}}$ is a purification with reference register $E$, then it is a purification of the state $\rho_A$, that is, $\rho_{A} = \Tr_E(\ketbra[EA]{\rho_{EA}}{\rho_{EA}})$. Given a classical state $\rho_X = \sum_x p(x) \ketbra[X]{x}{x}$, we define $\sum_{x} \sqrt{p(x)}\ket[X]{x}\ket[X]{x}$ to be its \emph{canonical} purification.

A \emph{positive operator valued measurement} (POVM) is a collection $\{\Lambda_i\}_i$ of linear operators acting on a Hilbert space $\CH_A$ such that for each $i$, the operator $\Lambda_i$ is positive semi-definite, and $\sum_i \Lambda_i = I_{A}$. The probability that the outcome of applying a POVM on a quantum state $\rho_A \in \CD(\CH)$ is $j$ is given by $\Tr(\Lambda_j \rho_A)$. Given a single-qubit register $A$, we will specifically be interested in measurement in the computational basis, which corresponds to the POVM $\{\ketbra[A]{0}{0}, \ketbra[A]{1}{1}\}$. Given a state $\rho_A \in \CD(\CH_A)$, the probability that the measurement outcome is the bit $b \in \bits$ is $\Tr(\ketbra[A]{b}{b}\rho_A)$.

We say that $U_{XA}$ is a unitary with $X$ as a \emph{control} register if $U_{XA} = \sum_x \ketbra[X]{x}{x} \otimes U^x_A$ for some $U^x_A$'s. Also, note that in this case $U^\dag_{XA}$ is a unitary controlled by $X$ as well.

\subsubsection*{Distance Measures}

Recall that the trace norm $\norm{M}$ of a matrix $M$ is defined as $\norm{M} = \Tr \sqrt{M^\dag M}$. Equivalently, $\norm{M}$ is the sum of the singular values of~$M$. Then, the \emph{trace distance} between two quantum states $\rho_A$ and $\sigma_A$ is defined as $\norm{\rho_A-\sigma_A}$.  We say two states $\rho_A$ and $\sigma_A$ are \emph{$\eps$-close} in trace norm if $\norm{\rho_A-\sigma_A} \le \eps$, and write this as $\rho_A \papprox{\eps} \sigma_A$.

The \emph{fidelity} between two quantum states is defined as $\Fi{\rho_A}{\sigma_A} = \norm{\sqrt{\rho_A}\sqrt{\sigma_A}}$ (note that some papers define fidelity as the square of our definition). If $\rho_A$ and $\sigma_A$ are pure states, then their fidelity is just the absolute value of the inner product of the corresponding vectors. The \emph{Hellinger distance} between the states is $\he{\rho_A}{\sigma_A} = \sqrt{1 - \Fi{\rho_A}{\sigma_A}} = \sqrt{1 - \norm{\sqrt{\rho_A}\sqrt{\sigma_A}}}$. If $\psi_A = \ketbra[A]{\psi}{\psi}$ is a pure state, for brevity we will sometimes write $\he{\ket[A]{\psi}}{\ket[A]{\sigma}}$ (or $\Fi{\ket[A]{\psi}}{\ket[A]{\sigma}}$) to mean $\he{\psi_A}{\sigma_A}$ (or $\Fi{\psi_A}{\sigma_A}$). The Hellinger distance is a metric and in particular satisfies the triangle inequality: $\he{\rho_A}{\sigma_A} \le \he{\rho_A }{\psi_A} + \he {\psi_A}{\sigma_A}$.

The trace distance and Hellinger distance are both invariant under applying unitaries and decrease under taking marginals: 
\begin{proposition} \label{prop:invar} Given unitaries $U_A$ and $V_A$, it holds that 
	\[ \norm{U_A(\rho_A - \sigma_A)V_A^\dag} = \norm{\rho_A - \sigma_A} \text{ and } \he{U_A \rho_A V_A^\dag}{U_A \sigma_A V_A^\dag} = \he{\rho_A}{\sigma_A}. \]
\end{proposition}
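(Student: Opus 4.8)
Everything follows from one elementary fact, which the plan is to verify first: for any positive semidefinite operator $P_A$ on $\CH_A$ and any unitary $W_A$ on $\CH_A$, one has $\sqrt{W_A P_A W_A^\dag} = W_A \sqrt{P_A}\, W_A^\dag$. Indeed, $W_A \sqrt{P_A}\, W_A^\dag$ is positive semidefinite, since for every vector $v$ we have $\langle v| W_A \sqrt{P_A}\, W_A^\dag |v\rangle = \langle W_A^\dag v|\sqrt{P_A}|W_A^\dag v\rangle \ge 0$, and its square is $W_A \sqrt{P_A}\, W_A^\dag W_A \sqrt{P_A}\, W_A^\dag = W_A P_A W_A^\dag$ because $W_A^\dag W_A = I_A$; by uniqueness of the positive semidefinite square root it must therefore equal $\sqrt{W_A P_A W_A^\dag}$.

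\textbf{Trace norm.} I would apply this with the Hermitian (but in general not positive) operator $M_A := \rho_A - \sigma_A$. Using $U_A^\dag U_A = I_A$ we get $(U_A M_A V_A^\dag)^\dag (U_A M_A V_A^\dag) = V_A M_A^\dag M_A V_A^\dag$, hence $\sqrt{(U_A M_A V_A^\dag)^\dag (U_A M_A V_A^\dag)} = V_A \sqrt{M_A^\dag M_A}\, V_A^\dag$ by the fact above. Taking the trace and using cyclicity together with $V_A^\dag V_A = I_A$ yields $\norm{U_A(\rho_A - \sigma_A)V_A^\dag} = \Tr\sqrt{M_A^\dag M_A} = \norm{\rho_A - \sigma_A}$. (Equivalently, $\norm{M_A}$ is the sum of the singular values of $M_A$, i.e.\ the square roots of the eigenvalues of $M_A^\dag M_A$, and $V_A M_A^\dag M_A V_A^\dag$ is unitarily similar to $M_A^\dag M_A$, so it has the same spectrum.)

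\textbf{Hellinger distance.} Since $\he{\rho_A}{\sigma_A} = \sqrt{1 - \Fi{\rho_A}{\sigma_A}}$, it suffices to show the fidelity is unchanged; here the natural transformation is conjugation by a unitary (so that density operators map to density operators), so I would take $U_A = V_A$. Writing $\sqrt{U_A \rho_A U_A^\dag} = U_A \sqrt{\rho_A}\, U_A^\dag$ and likewise for $\sigma_A$ via the square-root identity, one computes $\Fi{U_A \rho_A U_A^\dag}{U_A \sigma_A U_A^\dag} = \norm{(U_A \sqrt{\rho_A}\, U_A^\dag)(U_A \sqrt{\sigma_A}\, U_A^\dag)} = \norm{U_A \sqrt{\rho_A}\sqrt{\sigma_A}\, U_A^\dag} = \norm{\sqrt{\rho_A}\sqrt{\sigma_A}} = \Fi{\rho_A}{\sigma_A}$, where the third equality is the unitary-invariance of the trace norm just established. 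Substituting into the definition of $\he{\cdot}{\cdot}$ gives the claim, and the identical argument with $U_A$ replaced by $U_A \otimes V_{B}$ (or any other unitary) covers local unitaries acting on subsystems.

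\textbf{Main obstacle.} There is essentially no obstacle here: the mathematical content is the square-root identity $\sqrt{W_A P_A W_A^\dag} = W_A \sqrt{P_A}\, W_A^\dag$, whose only subtlety is invoking uniqueness of the positive semidefinite square root. The one point to state carefully is that the two-sided equality $\norm{U_A M_A V_A^\dag} = \norm{M_A}$ holds for an \emph{arbitrary} operator $M_A$ — which is exactly what is needed for the Hermitian difference $\rho_A - \sigma_A$ — whereas the fidelity (and hence Hellinger) invariance is cleanest when phrased as invariance under a single unitary conjugation, since only then does a state map to a state.
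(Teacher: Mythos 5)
The paper states Proposition~\ref{prop:invar} in its quantum-information preliminaries without proof, deferring to the standard textbooks it cites, so there is no paper proof to compare against. Your argument is correct and complete: the square-root identity $\sqrt{W_A P_A W_A^\dag} = W_A\sqrt{P_A}\,W_A^\dag$ is verified by uniqueness of the PSD square root, the trace-norm invariance then follows by a clean computation with $M_A^\dag M_A$ and cyclicity of trace (or equivalently by noting the singular values are unchanged), and the fidelity/Hellinger invariance reduces to the trace-norm invariance you just established.

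You also correctly flag the one genuine subtlety in how the paper states this: for general $U_A\neq V_A$, the operators $U_A\rho_A V_A^\dag$ and $U_A\sigma_A V_A^\dag$ fail to be Hermitian, let alone density operators, so $\he{\cdot}{\cdot}$ is simply not defined on them via the paper's definition $\he{\rho}{\sigma}=\sqrt{1-\norm{\sqrt\rho\sqrt\sigma}}$. Reading the Hellinger claim as invariance under a single unitary conjugation ($V_A=U_A$) is exactly right, and is all the paper ever uses (in Appendix~B the same unitary is applied to both arguments). Your proof handles this precisely the way it should. No gaps.
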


\begin{proposition} \label{prop:marginal-trace}
	$\norm{\rho_A - \sigma_A} \le \norm{\rho_{AB} - \sigma_{AB}} \text{ and } \he{\rho_A}{\sigma_A} \le \he{\rho_{AB}}{\sigma_{AB}}.$\\
\end{proposition}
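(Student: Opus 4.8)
The plan is to prove both inequalities by the same principle: the partial trace $\Tr_B$ can be ``undone'' at the level of the objects that witness distinguishability — test operators for the trace norm, purifications for the fidelity — so that any witness separating $\rho_A$ from $\sigma_A$ lifts to one separating $\rho_{AB}$ from $\sigma_{AB}$, which can only be larger.

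For the trace-distance bound I would first recall the variational formula $\norm{X} = \max\{\Tr(\Pi X) : \Pi = \Pi^\dag,\ -I \preceq \Pi \preceq I\}$ for Hermitian $X$, which follows by diagonalizing $X$ and taking $\Pi = \mathrm{sign}(X)$. Apply it to $X = \rho_A - \sigma_A$ and let $\Pi_A$ be an optimizer, so $\norm{\rho_A - \sigma_A} = \Tr(\Pi_A(\rho_A - \sigma_A))$. Now consider $\Pi_A \otimes I_B$ on $\CH_A \otimes \CH_B$: it is Hermitian and satisfies $-I \preceq \Pi_A \otimes I_B \preceq I$, and using linearity of the partial trace together with the basic identity $\Tr((\Pi_A \otimes I_B) M_{AB}) = \Tr(\Pi_A \Tr_B(M_{AB}))$, one gets $\Tr((\Pi_A \otimes I_B)(\rho_{AB} - \sigma_{AB})) = \Tr(\Pi_A(\rho_A - \sigma_A)) = \norm{\rho_A - \sigma_A}$. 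Since the left-hand side is at most $\norm{\rho_{AB} - \sigma_{AB}}$ by the variational formula on $\CH_A \otimes \CH_B$, the first inequality follows.

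For the Hellinger-distance bound, since $\he{\rho_A}{\sigma_A} = \sqrt{1 - \Fi{\rho_A}{\sigma_A}}$ and $t \mapsto \sqrt{1-t}$ is decreasing on $[0,1]$, it suffices to show $\Fi{\rho_A}{\sigma_A} \ge \Fi{\rho_{AB}}{\sigma_{AB}}$, i.e.\ monotonicity of fidelity under partial trace. Here I would invoke Uhlmann's theorem (see \cite{W13,W18}): $\Fi{\rho_A}{\sigma_A}$ equals the maximum of $|\braket{\psi}{\phi}|$ over all purifications $\ket{\psi}, \ket{\phi}$ of $\rho_A, \sigma_A$ on an arbitrary common reference system. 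The key point is that any purification $\ket[ABE]{\psi}$ of $\rho_{AB}$ is also a purification of $\rho_A$, now with reference register $BE$, and likewise for $\sigma$. Hence, choosing purifications $\ket[ABE]{\psi}, \ket[ABE]{\phi}$ that achieve $\Fi{\rho_{AB}}{\sigma_{AB}} = |\braket{\psi}{\phi}|$, these are legitimate purifications of $\rho_A$ and $\sigma_A$, so $\Fi{\rho_A}{\sigma_A} \ge |\braket{\psi}{\phi}| = \Fi{\rho_{AB}}{\sigma_{AB}}$.

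Neither step is a real obstacle; the only imported facts are the variational formula for the trace norm and Uhlmann's theorem, both standard. The one mild subtlety, if a self-contained treatment were wanted, is Uhlmann's theorem, whose proof aligns two purifications via a singular-value decomposition; alternatively one could sidestep it using the POVM characterization $\Fi{\rho}{\sigma} = \min_{\{\Pi_i\}} \sum_i \sqrt{\Tr(\Pi_i \rho)\,\Tr(\Pi_i \sigma)}$ and repeat the ``pad with $I_B$'' argument from the trace-norm case, since $\Tr((\Pi_i \otimes I_B)\rho_{AB}) = \Tr(\Pi_i \rho_A)$.
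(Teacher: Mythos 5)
The paper states this proposition without proof, deferring to the textbooks it cites for such standard facts, so there is no in-paper argument to compare against. Your proof is correct and is the standard one. Two small remarks that tie it to material already in the paper: for the trace-distance half you could equally well invoke Proposition~\ref{prop:tracepovm} in place of the signed variational formula, since $\Lambda_A \otimes I_B$ remains a positive semidefinite operator with eigenvalues at most one and the same partial-trace identity applies; and for the Hellinger half, Uhlmann's theorem as stated in Proposition~\ref{prop:uhlmann} is phrased as an optimization over unitaries on the reference of a fixed purification, so the cleanest way to match that phrasing is to fix purifications $\ket[EAB]{\rho}$, $\ket[EAB]{\sigma}$ of $\rho_{AB}$, $\sigma_{AB}$ and observe that taking the maximum over $U_{EB}$ (reference for $\rho_A$, $\sigma_A$) can only beat the maximum over $U_E$ (reference for $\rho_{AB}$, $\sigma_{AB}$) since $U_E \otimes I_B$ is a special case. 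This is exactly your ``purifications of $\rho_{AB}$ are also purifications of $\rho_A$'' observation, just written in the paper's notation.
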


The Hellinger and trace distance are related in the following way:

\begin{proposition} \label{prop:he-and-trace}
	For quantum states $\rho_A$ and $\sigma_A$, it holds that
	\[ \he{\rho_A}{\sigma_A}^2 \le \frac12 \norm {\rho_A - \sigma_A} \le \sqrt{2}\, \he {\rho_A} {\sigma_A}. \]
\end{proposition}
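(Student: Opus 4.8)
The plan is to recognize this as (essentially) the Fuchs--van de Graaf inequalities, written in the paper's convention where $\Fi{\rho_A}{\sigma_A} = \norm{\sqrt{\rho_A}\sqrt{\sigma_A}}$ is the (non-squared) fidelity and $\he{\rho_A}{\sigma_A} = \sqrt{1 - \Fi{\rho_A}{\sigma_A}}$, so that $\he{\rho_A}{\sigma_A}^2 = 1 - \Fi{\rho_A}{\sigma_A}$. With this identity the two claimed bounds are exactly
\[
1 - \Fi{\rho_A}{\sigma_A} \;\le\; \tfrac12\,\norm{\rho_A - \sigma_A} \;\le\; \sqrt{2\bigl(1 - \Fi{\rho_A}{\sigma_A}\bigr)},
\]
and I would prove the left and right halves separately, using Uhlmann's theorem and the data-processing (monotonicity) properties of trace distance and fidelity --- all standard and available from the textbooks already cited for this subsection (\cite{W13,W18}).

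For the right-hand inequality I would first invoke Uhlmann's theorem to pick purifications $\ket[AE]{\psi}$ of $\rho_A$ and $\ket[AE]{\phi}$ of $\sigma_A$ on a common space with $\abs{\braket{\psi}{\phi}} = \Fi{\rho_A}{\sigma_A}$. A one-line computation in the two-dimensional subspace spanned by $\ket[AE]{\psi}$ and $\ket[AE]{\phi}$ shows that $\psi_{AE} - \phi_{AE}$ is traceless with nonzero eigenvalues $\pm\sqrt{1 - \abs{\braket{\psi}{\phi}}^2}$, hence $\tfrac12\norm{\psi_{AE} - \phi_{AE}} = \sqrt{1 - \Fi{\rho_A}{\sigma_A}^2}$. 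Since $\rho_A - \sigma_A = \Tr_E(\psi_{AE} - \phi_{AE})$, monotonicity of trace distance under taking marginals (Proposition~\ref{prop:marginal-trace}) gives $\tfrac12\norm{\rho_A - \sigma_A} \le \sqrt{1 - \Fi{\rho_A}{\sigma_A}^2} = \sqrt{(1 - \Fi{\rho_A}{\sigma_A})(1 + \Fi{\rho_A}{\sigma_A})} \le \sqrt{2}\,\he{\rho_A}{\sigma_A}$, using $\Fi{\rho_A}{\sigma_A} \le 1$.

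For the left-hand inequality I would pass to a classical measurement statistic. Let $\Pi$ be the projector onto the nonnegative eigenspace of the Hermitian operator $\rho_A - \sigma_A$; measuring $\{\Pi, I-\Pi\}$ on $\rho_A$ and on $\sigma_A$ yields distributions $p(B)$ and $q(B)$ on $\bits$ with $\stat{p(B) - q(B)} = \tfrac12\norm{\rho_A - \sigma_A}$ (the standard variational description of trace distance, since $\Tr(\rho_A - \sigma_A)=0$ forces its positive and negative parts to have equal trace $\tfrac12\norm{\rho_A-\sigma_A}$). By monotonicity of fidelity under the measurement channel, $\sum_b \sqrt{p(b)q(b)} \ge \Fi{\rho_A}{\sigma_A}$. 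Combining this with the elementary estimate $\sum_b \abs{p(b) - q(b)} = \sum_b \abs{\sqrt{p(b)} - \sqrt{q(b)}}\,(\sqrt{p(b)} + \sqrt{q(b)}) \ge \sum_b (\sqrt{p(b)} - \sqrt{q(b)})^2 = 2\bigl(1 - \sum_b \sqrt{p(b)q(b)}\bigr)$, and Proposition~\ref{prop:statdef}, I get $\tfrac12\norm{\rho_A - \sigma_A} = \stat{p(B) - q(B)} \ge 1 - \sum_b\sqrt{p(b)q(b)} \ge 1 - \Fi{\rho_A}{\sigma_A} = \he{\rho_A}{\sigma_A}^2$.

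The main obstacle is not any single calculation but deciding how much of the standard quantum-information machinery to reprove: the argument leans on Uhlmann's theorem and on the data-processing inequality for fidelity (its non-decrease under a measurement channel), neither of which is recorded in the excerpt. Since Section~\ref{sec:prelims} already cites \cite{W13,W18} for all quantum-information background, I would cite these two facts from there rather than reproduce their proofs; if a fully self-contained treatment were required, the genuinely nontrivial ingredient to supply would be the monotonicity of fidelity (equivalently, Uhlmann's theorem), with everything else being the short computations above.
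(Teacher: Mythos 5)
Your proof is correct. The paper states Proposition~\ref{prop:he-and-trace} without proof as a standard preliminary fact (citing the textbooks \cite{W13,W18}); it is exactly the Fuchs--van de Graaf inequalities rewritten via $\he{\rho_A}{\sigma_A}^2 = 1 - \Fi{\rho_A}{\sigma_A}$ and the elementary bound $\sqrt{1-F^2}\le\sqrt{2(1-F)}$, and your two-sided argument (Uhlmann purifications plus monotonicity of trace distance for the upper bound; measurement in the eigenbasis of $\rho_A-\sigma_A$ plus monotonicity of fidelity for the lower bound) is the standard derivation. The one ingredient you use that is not recorded in the paper's preliminaries is the data-processing inequality for fidelity under a measurement channel, but you correctly flag this and defer to the cited textbooks; everything else (Propositions \ref{prop:statdef}, \ref{prop:marginal-trace}, \ref{prop:uhlmann} and the short eigenvalue computation for pure states) checks out.
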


The trace distance normalized by $2$ is the largest probability difference a POVM could produce between the two states, which is the quantum generalization of total variation distance:
\begin{proposition} \label{prop:tracepovm}
	For states $\rho_A$ and $\sigma_A$ in $\CD(\CH_A)$, it holds that 
	\[ \frac12 \norm{\rho_A - \sigma_A} = \max_\Lambda \Tr(\Lambda(\rho_A - \sigma_A)),\]
	where $\Lambda$ ranges over all positive semi-definite operators over $\CH_A$ that have eigenvalues at most one.
\end{proposition}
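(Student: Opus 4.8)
\textbf{Proof plan for Proposition~\ref{prop:tracepovm}.}
The plan is to pass to the Hermitian difference operator $\Delta := \rho_A - \sigma_A$, which is self-adjoint and satisfies $\Tr(\Delta) = 0$ since both $\rho_A$ and $\sigma_A$ have unit trace. First I would diagonalize $\Delta = \sum_i \lambda_i \ketbra[A]{e_i}{e_i}$ with real eigenvalues $\lambda_i$ and orthonormal eigenvectors, and write $\Delta = \Delta_+ - \Delta_-$ for its Jordan decomposition into positive and negative parts, where $\Delta_+ = \sum_{i:\lambda_i>0}\lambda_i\ketbra[A]{e_i}{e_i}$ and $\Delta_- = \sum_{i:\lambda_i<0}(-\lambda_i)\ketbra[A]{e_i}{e_i}$ are positive semi-definite with mutually orthogonal supports. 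Then $\norm{\Delta} = \Tr(\Delta_+) + \Tr(\Delta_-)$, and since $\Tr(\Delta) = \Tr(\Delta_+) - \Tr(\Delta_-) = 0$ we get $\Tr(\Delta_+) = \Tr(\Delta_-) = \tfrac12\norm{\Delta}$. So the whole statement reduces to showing $\max_\Lambda \Tr(\Lambda\Delta) = \Tr(\Delta_+)$, the maximum over operators $\Lambda$ with $0 \preceq \Lambda \preceq I_A$.

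For the upper bound, fix any such $\Lambda$ and write $\Tr(\Lambda\Delta) = \Tr(\Lambda\Delta_+) - \Tr(\Lambda\Delta_-)$. Here I would use the elementary fact that the trace of a product of two positive semi-definite operators is nonnegative: for $P,Q \succeq 0$ one has $\Tr(PQ) = \Tr(\sqrt{P}\,Q\,\sqrt{P}) \ge 0$ since $\sqrt{P}\,Q\,\sqrt{P} \succeq 0$. Applying this with $P = \Lambda$, $Q = \Delta_-$ gives $\Tr(\Lambda\Delta_-) \ge 0$, and applying it with $P = I_A - \Lambda$ (which is PSD since $\Lambda \preceq I_A$), $Q = \Delta_+$ gives $\Tr(\Lambda\Delta_+) \le \Tr(\Delta_+)$. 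Hence $\Tr(\Lambda\Delta) \le \Tr(\Delta_+) = \tfrac12\norm{\Delta}$.

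For the matching lower bound I would simply exhibit the optimal $\Lambda$: let $\Pi_+$ be the orthogonal projector onto $\mathrm{span}\{\ket[A]{e_i} : \lambda_i > 0\}$. Then $0 \preceq \Pi_+ \preceq I_A$, and because $\Pi_+$ acts as the identity on the support of $\Delta_+$ and annihilates the support of $\Delta_-$, we get $\Tr(\Pi_+\Delta) = \Tr(\Pi_+\Delta_+) - \Tr(\Pi_+\Delta_-) = \Tr(\Delta_+) - 0 = \tfrac12\norm{\Delta}$. This also shows the supremum is attained, justifying writing $\max$ rather than $\sup$. Combining the two bounds completes the proof.

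There is no serious obstacle here; the only point requiring care is the trace-of-product-of-PSD-operators inequality used twice above, which I would either state as a one-line lemma or justify inline via the $\sqrt{P}\,Q\,\sqrt{P}$ trick. Everything else is bookkeeping with the Jordan decomposition of $\Delta$.
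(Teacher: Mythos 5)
Your proof is correct and complete: the Jordan decomposition of $\rho_A-\sigma_A$ into orthogonally supported positive and negative parts, the observation that $\Tr(\Delta_+)=\Tr(\Delta_-)=\tfrac12\norm{\Delta}$ from tracelessness, the upper bound via $\Tr(PQ)\ge 0$ for positive semi-definite $P,Q$, and the attainment by the projector $\Pi_+$ are exactly the standard textbook argument for this fact. The paper itself states this proposition without proof, deferring to the quantum information textbooks it cites, and your argument is precisely the one found there, so there is nothing to reconcile.
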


\begin{proposition}[Uhlmann's Theorem] \label{prop:uhlmann}
	Let $\ket[EA]{\rho}$ and $\ket[EA]{\sigma}$ be pure states. Then, we have
	\begin{align*}
	\ \Fi{\rho_A}{\sigma_A} &= \max_{U_E} \Fi{U_E\ket[EA]{\rho}}{\ket[EA]{\sigma}}, \text{ or equivalently, } \\
	\	\ \he{\rho_A}{\sigma_A} &= \min_{U_E} \he{U_E\ket[EA]{\rho}}{\ket[EA]{\sigma}},
	\end{align*}
	where $U_E$ ranges over all unitaries acting on the register $E$.\\
\end{proposition}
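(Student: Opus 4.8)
The plan is to prove the fidelity identity $\Fi{\rho_A}{\sigma_A}=\max_{U_E}\Fi{U_E\ket[EA]{\rho}}{\ket[EA]{\sigma}}$; the Hellinger form then follows for free, because $\he{\cdot}{\cdot}=\sqrt{1-\Fi{\cdot}{\cdot}}$ and $t\mapsto\sqrt{1-t}$ is decreasing, so minimizing the Hellinger distance over $U_E$ is exactly maximizing the fidelity. Since $U_E\ket[EA]{\rho}$ and $\ket[EA]{\sigma}$ are pure states, their fidelity equals $\big|\bra[EA]{\sigma}U_E\ket[EA]{\rho}\big|$ (with $U_E$ acting as $U_E\otimes I_A$), so the task reduces to showing
\[
\max_{U_E}\ \big|\bra[EA]{\sigma}\,U_E\,\ket[EA]{\rho}\big|\ =\ \norm{\sqrt{\rho_A}\sqrt{\sigma_A}}\ =\ \Fi{\rho_A}{\sigma_A}.
\]

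First I would pass to the operator--vector correspondence. Embedding $\CH_A$ into $\CH_E$ if necessary so that the two spaces share a dimension $d$ (harmless: the existence of a purification $\ket[EA]{\rho}$ already forces $\dim\CH_E\ge\mathrm{rank}(\rho_A)$, and similarly for $\sigma_A$), fix bases and set $\mathsf{vec}(M)=\sum_{i,j}M_{ij}\,\ket[E]{i}\ket[A]{j}$ for $d\times d$ matrices $M$, with the conjugation convention chosen so that the standard identities $\braket{\mathsf{vec}(M)}{\mathsf{vec}(N)}=\Tr(M^\dagger N)$, $\Tr_E(\ketbra{\mathsf{vec}(M)}{\mathsf{vec}(M)})=M^\dagger M$, and $(U_E\otimes I_A)\,\mathsf{vec}(M)=\mathsf{vec}(U_E M)$ hold (any leftover complex conjugates are immaterial, since below we take absolute values and range over all unitaries). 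From the second identity, $\mathsf{vec}(W\sqrt{\rho_A})$ is a purification of $\rho_A$ for every unitary $W$ on $\CH_E$, and by polar decomposition these are \emph{all} the purifications of $\rho_A$ in $\CH_E\otimes\CH_A$ -- this is the standard ``unitary freedom in the purifications''. Hence the two given purifications can be written $\ket[EA]{\rho}=\mathsf{vec}(W_\rho\sqrt{\rho_A})$ and $\ket[EA]{\sigma}=\mathsf{vec}(W_\sigma\sqrt{\sigma_A})$ for some unitaries $W_\rho,W_\sigma$.

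The computation is then short: using the three identities and cyclicity of the trace,
\[
\bra[EA]{\sigma}\,U_E\,\ket[EA]{\rho}=\braket{\mathsf{vec}(W_\sigma\sqrt{\sigma_A})}{\mathsf{vec}(U_E W_\rho\sqrt{\rho_A})}=\Tr\!\big(\sqrt{\sigma_A}\,W_\sigma^\dagger U_E W_\rho\,\sqrt{\rho_A}\big)=\Tr\!\big(\sqrt{\rho_A}\sqrt{\sigma_A}\,V\big),
\]
where $V:=W_\sigma^\dagger U_E W_\rho$ runs over all unitaries on $\CH_E$ as $U_E$ does, so $\max_{U_E}\big|\bra[EA]{\sigma}U_E\ket[EA]{\rho}\big|=\max_{V}\big|\Tr(\sqrt{\rho_A}\sqrt{\sigma_A}\,V)\big|$. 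I would then close the argument with the standard variational formula $\max_{V}|\Tr(MV)|=\norm{M}$, valid for any square matrix $M$: the maximum is attained by taking $V$ to be (an extension to a unitary of) $U_0^\dagger$, where $M=U_0|M|$ is a polar decomposition, and the upper bound $|\Tr(MV)|\le\norm{M}$ follows by expanding $M$ in its singular value decomposition and using Cauchy--Schwarz. Applied to $M=\sqrt{\rho_A}\sqrt{\sigma_A}$ this gives exactly $\norm{\sqrt{\rho_A}\sqrt{\sigma_A}}=\Fi{\rho_A}{\sigma_A}$, which is the claim.

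The one place needing care is the dimension bookkeeping in the second step: the $\mathsf{vec}$ correspondence presupposes $\dim\CH_E=\dim\CH_A$, and ``every purification has the form $\mathsf{vec}(W\sqrt{\rho_A})$'' presupposes $\dim\CH_E\ge\mathrm{rank}(\rho_A)$ (and $\ge\mathrm{rank}(\sigma_A)$) -- both automatic once the purifications exist, but a fully careful write-up should first restrict to the common support and enlarge/embed the spaces, or else avoid $\mathsf{vec}$ altogether and run the same computation through Schmidt decompositions of $\ket[EA]{\rho},\ket[EA]{\sigma}$ taken in eigenbases $\{\ket[A]{e_j}\}$ (eigenvalues $p_j$) of $\rho_A$ and $\{\ket[A]{f_k}\}$ (eigenvalues $q_k$) of $\sigma_A$; there the identity $\bra[A]{f_k}\sqrt{\sigma_A}\sqrt{\rho_A}\ket[A]{e_j}=\sqrt{q_k p_j}\,\braket[A]{f_k}{e_j}$ plays the role of the $\mathsf{vec}$ identities, and maximizing over the free choice of Schmidt vectors on the $\CH_E$ side again yields $\norm{\sqrt{\rho_A}\sqrt{\sigma_A}}$, by Cauchy--Schwarz together with the singular value decomposition of $\sqrt{\rho_A}\sqrt{\sigma_A}$. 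Everything else -- the three $\mathsf{vec}$ identities and $\max_V|\Tr(MV)|=\norm{M}$ -- is routine linear algebra, so no single step is a serious obstacle.
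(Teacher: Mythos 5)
The paper does not prove this proposition at all: Uhlmann's theorem appears in the preliminaries as a stated textbook fact, with the reader directed to Watrous's or Wilde's book. So there is no in-paper argument to compare against; you are supplying a proof where the authors supplied a citation. That said, your argument is the standard one and it is correct. You reduce Hellinger to fidelity via monotonicity of $t\mapsto\sqrt{1-t}$, pass to the operator--vector correspondence so that purifications take the form $\mathsf{vec}(W\sqrt{\rho_A})$, collapse the overlap to $\Tr(\sqrt{\rho_A}\sqrt{\sigma_A}\,V)$ by cyclicity, and finish with $\max_V|\Tr(MV)|=\norm{M}$. The only place I would push back slightly is the handling of the $\mathsf{vec}$ conventions: with the literal definition you wrote, $\Tr_E\big(\ketbra{\mathsf{vec}(M)}{\mathsf{vec}(M)}\big)$ comes out as $\overline{M^\dagger M}$ rather than $M^\dagger M$, so the reduced state of $\mathsf{vec}(W\sqrt{\rho_A})$ is $\rho_A^{\top}$, not $\rho_A$. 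You do flag this (``any leftover complex conjugates are immaterial''), and indeed it washes out because $\norm{\overline{M}}=\norm{M}$, but a careful write-up should either state the convention precisely (e.g.\ define $\mathsf{vec}(M)=\sum_{ij}M_{ji}\ket[E]{i}\ket[A]{j}$, or insert a complex conjugate) or, as you suggest, run the argument through Schmidt decompositions, which avoids the issue entirely. Your dimension-bookkeeping caveats are also apt. In short: correct, standard, and appropriately self-aware about the one technical wrinkle.
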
 

The unitary $U_E$ which minimizes the Hellinger distance in Uhlmann's theorem is the one for which $\sqrt{\rho_E}\sqrt{\sigma_E} U_E$ is positive semidefinite (such a unitary is always guaranteed to exist) but we will only need the following simple case:

\begin{proposition}\label{prop:control}
Let $p(X,Y)$ and $q(X,Y)$ be distributions such that $p(X)=q(X)$. Then for the quantum states $\ketalt{\rho}{X\X Y\Y} = \sum_{xy} \sqrt{p(x,y)}\ketalt{xxyy}{X\X Y\Y}$ and $\ketalt{\sigma}{X\X Y\Y} = \sum_{xy} \sqrt{q(x,y)}\ketalt{xxyy}{X\X Y\Y}$, there exists a unitary $W_{XY \Y}$ with $X$ as a control register such that $W_{XY \Y} \ketalt{\rho}{X\X Y\Y} =  \ketalt{\sigma}{X\X Y\Y}$.\\
\end{proposition}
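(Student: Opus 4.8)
The plan is to construct the unitary $W_{XY\Y}$ explicitly as a controlled unitary, using $X$ as the control and acting on the pair $Y\Y$ for each fixed value $x$ of the control. First I would write both states grouped by the value of the first register: $\ketalt{\rho}{X\X Y\Y} = \sum_x \sqrt{p(x)}\,\ketalt{xx}{X\X}\otimes\ketalt{\psi^x}{Y\Y}$ where $\ketalt{\psi^x}{Y\Y} = \sum_y \sqrt{p(y|x)}\,\ketalt{yy}{Y\Y}$, and similarly $\ketalt{\sigma}{X\X Y\Y} = \sum_x \sqrt{q(x)}\,\ketalt{xx}{X\X}\otimes\ketalt{\phi^x}{Y\Y}$ with $\ketalt{\phi^x}{Y\Y} = \sum_y \sqrt{q(y|x)}\,\ketalt{yy}{Y\Y}$. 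Since $p(X)=q(X)$ by hypothesis, the coefficients $\sqrt{p(x)}=\sqrt{q(x)}$ match up, so it suffices to find, for each $x$ with $p(x)>0$, a unitary $W^x_{Y\Y}$ with $W^x_{Y\Y}\ketalt{\psi^x}{Y\Y} = \ketalt{\phi^x}{Y\Y}$; then $W_{XY\Y} := \sum_x \ketbra[X]{x}{x}\otimes W^x_{Y\Y}$ (extended arbitrarily, e.g.\ by the identity, on the $x$ with $p(x)=0$) does the job and is controlled by $X$ by construction.

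For the per-$x$ step, the key observation is that both $\ketalt{\psi^x}{Y\Y}$ and $\ketalt{\phi^x}{Y\Y}$ are unit vectors in $\CH_Y\otimes\CH_{\Y}$ supported on the ``diagonal'' subspace spanned by $\{\ketalt{yy}{Y\Y}\}_y$ — indeed they are the canonical purifications of the classical states $p(Y|x)$ and $q(Y|x)$ respectively, living on register $\Y$ with purifying register $Y$ (or vice versa). Any two unit vectors in a Hilbert space can be mapped to one another by a unitary, so such a $W^x_{Y\Y}$ exists; if one wants it to be natural, one can build it as a product of two unitaries controlled by $Y$ on register $\Y$: first map $\ketalt{\psi^x}{Y\Y}\mapsto \sum_y \sqrt{p(y|x)}\,\ketalt{y}{Y}\ketalt{0}{\Y}$ by a $Y$-controlled rotation on $\Y$, and then... actually the cleanest route is just to invoke that a unitary taking one unit vector to another always exists. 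The only subtlety is bookkeeping: ensuring the global $W_{XY\Y}$ is genuinely of the controlled form $\sum_x \ketbra[X]{x}{x}\otimes W^x_{Y\Y}$, which is immediate from how we assembled it.

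The main obstacle — which is really more of a modeling point than a difficulty — is handling the values $x$ with $p(x)=q(x)=0$: these contribute nothing to either state, so we are free to define $W^x_{Y\Y}$ however we like there (say the identity), and this does not affect $W_{XY\Y}\ketalt{\rho}{X\X Y\Y}=\ketalt{\sigma}{X\X Y\Y}$. A secondary point is that this is exactly the special case of Uhlmann's theorem (Proposition \ref{prop:uhlmann}) where both states are already classical and share the same marginal on the control register, so the minimizing unitary achieves fidelity $1$ on each block; phrasing it via the explicit block decomposition avoids having to track the optimal Uhlmann unitary and makes the ``control register'' structure manifest. I would end by noting that $W^\dag_{XY\Y}$ is then also controlled by $X$, as recorded in the preliminaries, in case the reverse direction is needed later.
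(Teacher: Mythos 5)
Your proposal is correct and takes essentially the same route as the paper: the paper simply writes $W_{XY\Y} = \sum_x \ketbra[X]{x}{x}\otimes U^x_{Y\Y}$ where $U^x_{Y\Y}$ maps $\sum_y\sqrt{p(x,y)}\ket[Y\Y]{yy}$ to $\sum_y\sqrt{q(x,y)}\ket[Y\Y]{yy}$ (these two vectors have equal norm $\sqrt{p(x)}=\sqrt{q(x)}$, so such a unitary exists), while you normalize each block by pulling out $\sqrt{p(x)}$ to work with unit vectors $\ketalt{\psi^x}{Y\Y},\ketalt{\phi^x}{Y\Y}$ — an inessential cosmetic difference. Your explicit handling of the degenerate $p(x)=0$ blocks is a nice touch the paper leaves implicit.
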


The above is a special case of Uhlmann's Theorem as $\rho_{\X} = \sigma_{\X}$ but one can explicitly take $W_{XY \Y} = \sum_{x} \ketbra[X]{x}{x} \otimes U^x_{Y \Y}$ where $U^x_{Y\Y}$ is any unitary that maps the vector  $\sum_{y} \sqrt{p(x,y)} \ket[Y\Y]{yy}$ to $\sum_{y} \sqrt{q(x,y)} \ket[Y\Y]{yy}$.

\subsubsection*{Quantum Divergence and Mutual Information}

The \emph{divergence} (or \emph{relative entropy}) between two quantum states $\rho_A, \sigma_A \in \CD(\CH_A)$ is defined as 
\[ \Div{\rho_A}{\sigma_A} = \Tr(\rho_A \log \rho_A) - \Tr(\rho_A \log \sigma_A).\]
Note that the divergence between two states $\rho_A$ and $\sigma_A$ is always non-negative, and equal to zero iff $\rho_A=\sigma_A$.
The \emph{quantum mutual information} of the bipartite state $\rho_{AB}$ is defined as 
\begin{equation} \label{prop:qdiv-mutinf}
	\ \Inf[\rho]{A}{B} = \Div{\rho_{AB}}{\rho_{A} \otimes \rho_B}. 
\end{equation}
For a tripartite quantum state $\rho_{ABC} \in \CD(\CH_A \otimes \CH_B \otimes \CH_C)$, the \emph{conditional quantum mutual information} is defined as $\Infc[\rho]{A}{B}{C} = \Inf[\rho]{A}{BC} - \Inf[\rho]{A}{C}$.
For empty $C$, this equals the definition of mutual information in \eqref{prop:qdiv-mutinf}. 

It follows from the non-negativity of divergence that quantum mutual information is also non-negative, but it turns out that conditional mutual information is non-negative as well:
\begin{proposition}[Strong subadditivity]\label{prop:strong-subadd} 
$\Infc[\rho]{A}{B}{C}  \ge 0$.
\end{proposition}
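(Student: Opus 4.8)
The plan is \emph{not} to prove strong subadditivity from scratch: it is one of the cornerstones of quantum information theory, originally due to Lieb and Ruskai, and a genuine proof requires substantial operator-theoretic input. Instead I would derive it in essentially one line from the \emph{monotonicity of quantum relative entropy under partial trace}, which I would take as the key external ingredient (it is established in the textbooks cited, e.g.\ \cite{W13,W18}). Concretely, this ingredient states: for any bipartite states $\tau_{RB}$ and $\omega_{RB}$ on $\CH_R \otimes \CH_B$,
\[ \Div{\Tr_B(\tau_{RB})}{\Tr_B(\omega_{RB})} \le \Div{\tau_{RB}}{\omega_{RB}}. \]

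First I would unfold the definition of conditional quantum mutual information into divergences. From $\Inf[\rho]{A}{B} = \Div{\rho_{AB}}{\rho_A \otimes \rho_B}$ and $\Infc[\rho]{A}{B}{C} = \Inf[\rho]{A}{BC} - \Inf[\rho]{A}{C}$ we obtain
\[ \Infc[\rho]{A}{B}{C} = \Div{\rho_{ABC}}{\rho_A \otimes \rho_{BC}} - \Div{\rho_{AC}}{\rho_A \otimes \rho_C}. \]
Then I would apply monotonicity with the partial trace over $B$, taking $R = AC$: since $\Tr_B(\rho_{ABC}) = \rho_{AC}$ and $\Tr_B(\rho_A \otimes \rho_{BC}) = \rho_A \otimes \Tr_B(\rho_{BC}) = \rho_A \otimes \rho_C$ (using the identity $\Tr_B(M_R \otimes M_{BC}) = M_R \otimes \Tr_B(M_{BC})$ and linearity of the partial trace), monotonicity yields $\Div{\rho_{AC}}{\rho_A \otimes \rho_C} \le \Div{\rho_{ABC}}{\rho_A \otimes \rho_{BC}}$. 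Substituting this into the displayed identity gives $\Infc[\rho]{A}{B}{C} \ge 0$, as desired.

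The main obstacle is, of course, hidden inside the black box: proving monotonicity of relative entropy is the hard step, and it is in fact well known to be \emph{equivalent} to strong subadditivity, so the reduction above only repackages the difficulty rather than removing it. A self-contained argument would have to go through Lieb's concavity theorem (equivalently, the joint convexity of quantum relative entropy) together with the Lindblad--Uhlmann reduction, or through one of the more recent recoverability-based proofs; each of these is lengthy and squarely outside the scope of this paper. Accordingly, the sensible choice here is to cite strong subadditivity as a known deep fact and, when a self-contained chain of reasoning is wanted, to supply only the one-line derivation above from monotonicity.
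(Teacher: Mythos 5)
The paper itself does not prove this proposition: it is listed among the quantum-information preliminaries that the authors explicitly defer to the textbooks \cite{W13,W18}, so there is no in-paper argument to compare against. Your derivation is correct: unfolding the definition gives $\Infc[\rho]{A}{B}{C} = \Div{\rho_{ABC}}{\rho_A \otimes \rho_{BC}} - \Div{\rho_{AC}}{\rho_A \otimes \rho_C}$, and monotonicity of quantum relative entropy under the partial trace over $B$ (applied to the pair $\rho_{ABC}$ and $\rho_A \otimes \rho_{BC}$, whose $B$-marginals are $\rho_{AC}$ and $\rho_A \otimes \rho_C$) gives exactly the inequality needed. You are also right, and appropriately candid, that this only trades one deep theorem for an equivalent one: monotonicity of relative entropy and strong subadditivity are interderivable, so the argument is a repackaging rather than a proof from more elementary principles. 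Given that the paper treats the statement as a cited black box, your choice to cite it as well and supply the one-line reduction as optional scaffolding is exactly the right level of detail.
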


\begin{proposition}[Chain Rule]
$  \Inf[\rho]{A}{BC} = \Inf[\rho]{A}{C} + \Infc[\rho]{A}{B}{C}.$
\end{proposition}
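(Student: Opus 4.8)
The plan is to note that, under the definitions fixed in this subsection, the chain rule is merely a rearrangement of the definition of conditional quantum mutual information. Indeed, $\Infc[\rho]{A}{B}{C}$ was \emph{defined} to equal $\Inf[\rho]{A}{BC} - \Inf[\rho]{A}{C}$; adding $\Inf[\rho]{A}{C}$ to both sides gives $\Inf[\rho]{A}{BC} = \Inf[\rho]{A}{C} + \Infc[\rho]{A}{B}{C}$, which is exactly the asserted identity. There is therefore no real obstacle here: the proposition is recorded only so that it can later be cited by name when manipulating the information that a communication protocol's messages carry about its inputs.

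For completeness one could instead derive it from the divergence-based formula $\Inf[\rho]{A}{B} = \Div{\rho_{AB}}{\rho_A \otimes \rho_B}$. Writing $S(A) := -\Tr(\rho_A \log \rho_A)$ for the von Neumann entropy and using $\log(\rho_A \otimes \rho_B) = \log \rho_A \otimes I + I \otimes \log \rho_B$ (valid since the two factors commute), one expands $\Inf[\rho]{A}{B} = S(A) + S(B) - S(AB)$, and similarly $\Inf[\rho]{A}{BC} = S(A) + S(BC) - S(ABC)$ and $\Inf[\rho]{A}{C} = S(A) + S(C) - S(AC)$. Substituting these three expansions into the definition of $\Infc[\rho]{A}{B}{C}$ and cancelling the $S(A)$ terms reproduces the claimed equality; the only point requiring any care along this alternative route is the additivity of the logarithm on commuting tensor factors, which is a standard fact. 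Either way the identity follows immediately.
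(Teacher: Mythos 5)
Your observation is correct and matches the paper's implicit treatment: the paper defines $\Infc[\rho]{A}{B}{C} = \Inf[\rho]{A}{BC} - \Inf[\rho]{A}{C}$ in the preceding paragraph and states the chain rule as a proposition without proof, precisely because it is a one-line rearrangement of that definition. Your alternative derivation via the entropy expansion $\Inf[\rho]{A}{B} = S(A)+S(B)-S(AB)$ is also sound, though it is more than what is needed here.
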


\begin{proposition} \label{prop:qmutinfbound}
$  \Infc[\rho]{A}{B}{C} \le 2\min\{|A|,|B|\}.$
\end{proposition}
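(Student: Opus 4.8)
The plan is to reduce the bound to the two unconditional inequalities $\Inf[\rho]{A}{BC}\le 2|A|$ and $\Inf[\rho]{B}{AC}\le 2|B|$, together with the symmetry $\Infc[\rho]{A}{B}{C}=\Infc[\rho]{B}{A}{C}$. First I would observe that, by the chain rule, $\Inf[\rho]{A}{BC}=\Inf[\rho]{A}{C}+\Infc[\rho]{A}{B}{C}$, and since $\Inf[\rho]{A}{C}=\Div{\rho_{AC}}{\rho_A\otimes\rho_C}\ge 0$ by non-negativity of divergence (the quantum analog of Proposition~\ref{proposition:divpositive}), this yields $\Infc[\rho]{A}{B}{C}\le\Inf[\rho]{A}{BC}$. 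So it suffices to prove $\Inf[\rho]{A}{BC}\le 2|A|$; the same argument with $A$ and $B$ interchanged gives $\Infc[\rho]{B}{A}{C}\le\Inf[\rho]{B}{AC}\le 2|B|$, and then $\Infc[\rho]{A}{B}{C}\le 2\min\{|A|,|B|\}$ follows. For the symmetry, write $S(\cdot)$ for the von Neumann entropy $-\Tr\!\big((\cdot)\log(\cdot)\big)$; unfolding the definitions of mutual information and divergence gives $\Inf[\rho]{A}{BC}=S(A)+S(BC)-S(ABC)$ and $\Inf[\rho]{A}{C}=S(A)+S(C)-S(AC)$, hence $\Infc[\rho]{A}{B}{C}=S(AC)+S(BC)-S(ABC)-S(C)$, which is manifestly invariant under swapping $A$ and $B$.

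It remains to bound $\Inf[\rho]{A}{BC}=S(A)+S(BC)-S(ABC)$ by $2|A|$. The term $S(A)$ is at most $\log(\dim\CH_A)\le|A|$, since the entropy of a state is bounded by the logarithm of the dimension of its support. For the term $S(BC)-S(ABC)$ I would pass to a purification: let $\ket[ABCD]{\psi}$ be a purification of $\rho_{ABC}$ on a fresh reference register $D$. Then $S(ABC)=S(D)$ and $S(BC)=S(AD)$ (marginals of a pure state on complementary subsystems have equal entropy), and subadditivity of von Neumann entropy, $S(AD)\le S(A)+S(D)$, gives $S(BC)-S(ABC)=S(AD)-S(D)\le S(A)\le|A|$. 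Here subadditivity is itself nothing but the non-negativity of $\Inf[\psi]{A}{D}=\Div{\psi_{AD}}{\psi_A\otimes\psi_D}$. Adding the two estimates gives $\Inf[\rho]{A}{BC}\le 2|A|$, which gives the claim.

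The only non-routine ingredient is the Araki--Lieb-type estimate $S(BC)-S(ABC)\le S(A)$, and the point to get right is the purification step that turns it into ordinary subadditivity of von Neumann entropy (equivalently, non-negativity of divergence); everything else is bookkeeping. The ceiling in $|A|=\lceil\log(\dim\CH_A)\rceil$ only works in our favour, so it needs no special attention.
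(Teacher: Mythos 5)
Your proof is correct. The paper does not include a proof of Proposition~\ref{prop:qmutinfbound}; it is stated without argument as a textbook fact, so there is no in-paper proof to compare against. Your route — reducing via the chain rule and non-negativity of mutual information to $\Inf[\rho]{A}{BC}\le 2|A|$, using the symmetry $\Infc[\rho]{A}{B}{C}=\Infc[\rho]{B}{A}{C}$ (visible from the entropy expansion $S(AC)+S(BC)-S(ABC)-S(C)$), and then bounding $\Inf[\rho]{A}{BC}=S(A)+\big(S(BC)-S(ABC)\big)$ by the dimension bound $S(A)\le|A|$ plus the purification/Araki--Lieb step $S(BC)-S(ABC)\le S(A)$ — is the standard argument, and each step is sound.
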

\begin{proposition} \label{prop:qind} If $\rho_{AB} = \rho_A \otimes \rho_B$, then $\Inf[\rho]{A}{B} = 0$.
\end{proposition}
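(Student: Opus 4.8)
The plan is to simply unwind the definitions. By definition \eqref{prop:qdiv-mutinf}, $\Inf[\rho]{A}{B} = \Div{\rho_{AB}}{\rho_A \otimes \rho_B}$, so under the hypothesis $\rho_{AB} = \rho_A \otimes \rho_B$ the two arguments of the divergence coincide. First I would substitute $\rho_{AB} = \rho_A \otimes \rho_B$ into the divergence to get $\Div{\rho_A \otimes \rho_B}{\rho_A \otimes \rho_B}$, and then invoke the fact recorded just after the definition of quantum divergence in the preliminaries — namely that $\Div{\rho_A}{\sigma_A} \geq 0$ with equality iff $\rho_A = \sigma_A$ — to conclude that this quantity is $0$. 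That already finishes the argument.

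If one prefers a self-contained computation rather than citing the equality-case of nonnegativity, I would instead expand using the definition $\Div{\rho_A}{\sigma_A} = \Tr(\rho_A \log \rho_A) - \Tr(\rho_A \log \sigma_A)$ directly: with $\rho_A = \sigma_A = \rho_A \otimes \rho_B$ the two trace terms are identical and cancel, giving $0$. (Should it be useful to relate this to the marginals, one can also note $\log(\rho_A \otimes \rho_B) = (\log \rho_A)\otimes I + I \otimes (\log \rho_B)$, but this is not needed for the statement as phrased.)

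There is no real obstacle here: the only thing to be slightly careful about is that divergence is only well-defined / the manipulation only valid when the support condition holds, but here $\sigma_A = \rho_A$ so $\ker \sigma_A \subseteq \ker \rho_A$ trivially, and the expression is finite and equal to zero. Hence $\Inf[\rho]{A}{B} = 0$, as claimed.
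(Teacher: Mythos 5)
Your proof is correct and is exactly the immediate consequence of the paper's definitions that the authors implicitly rely on (the paper states this proposition without proof, as a standard fact). Substituting $\rho_{AB} = \rho_A \otimes \rho_B$ into the definition $\Inf[\rho]{A}{B} = \Div{\rho_{AB}}{\rho_A \otimes \rho_B}$ and using $\Div{\rho}{\rho} = 0$ is the intended one-line argument.
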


\subsubsection*{Basic Lemmas about Divergence and Mutual Information}

Below $\rho_{ABC}, \sigma_{ABC} \in \CD(\CH_A \otimes \CH_B \otimes \CH_C)$ and $U_B$ is a unitary acting on $B$.

\begin{proposition}[Pinsker's inequality]\label{prop:quantumpinsker}
	$ \frac18 \norm{\rho_A - \sigma_A}^2 \le \he{\rho_A}{\sigma_A}^2 \le \Div{\rho_A}{\sigma_A}.$\\
\end{proposition}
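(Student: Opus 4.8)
The statement is really two separate inequalities, and the left one is essentially free: squaring the right-hand estimate in \propositionref{prop:he-and-trace}, namely $\tfrac12\norm{\rho_A-\sigma_A}\le\sqrt2\,\he{\rho_A}{\sigma_A}$, gives $\tfrac14\norm{\rho_A-\sigma_A}^2\le 2\,\he{\rho_A}{\sigma_A}^2$, which is exactly $\tfrac18\norm{\rho_A-\sigma_A}^2\le\he{\rho_A}{\sigma_A}^2$. So all the work is in the right inequality $\he{\rho_A}{\sigma_A}^2\le\Div{\rho_A}{\sigma_A}$.

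Write $F:=\Fi{\rho_A}{\sigma_A}\in(0,1]$, so that $\he{\rho_A}{\sigma_A}^2=1-F$. First I would peel off the scalar part: from $\ln x\le x-1$ and $\ln 2<1$ one gets, for $F\in(0,1]$, the chain $1-F\le-\ln F\le-\log F\le-2\log F$ (with $\log$ to base~$2$, as in the paper), so it suffices to prove the ``fidelity Pinsker'' bound $\Div{\rho_A}{\sigma_A}\ge-2\log\Fi{\rho_A}{\sigma_A}$. Since $\Fi{\rho_A}{\sigma_A}=\norm{\sqrt{\rho_A}\sqrt{\sigma_A}}\ge\Tr(\sqrt{\rho_A}\sqrt{\sigma_A})\ge0$ and $\log$ is monotone, it is in fact enough to establish
\[\Div{\rho_A}{\sigma_A}\ \ge\ -2\log\Tr\!\big(\sqrt{\rho_A}\sqrt{\sigma_A}\big).\]

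To get this I would use an interpolation argument. Assuming $\sigma_A$ invertible (the general case follows by restricting to the support of $\rho_A$ and a limiting argument), set $g(s):=\Tr(\rho_A^{\,1-s}\sigma_A^{\,s})$ for $s\in[0,1]$. Two facts do the job. First, $s\mapsto\log g(s)$ is convex on $[0,1]$: for commuting $\rho_A,\sigma_A$ this is just the log-convexity of a sum of exponentials, and in general it follows from a standard complex-interpolation (Hadamard three-lines) argument — equivalently it is the log-convexity that underlies the monotonicity in the order of the quantum R\'enyi divergences. Second, $g(0)=\Tr\rho_A=1$ and, by a routine differentiation of a trace (the non-commutativity is harmless, since the product rule and linearity of $\Tr$ suffice), $(\log g)'(0)=-\Div{\rho_A}{\sigma_A}$. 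Therefore $\log g(0)=0$, and the supporting line at $s=0$ gives $\log g(s)\ge -s\,\Div{\rho_A}{\sigma_A}$ for all $s\in[0,1]$; at $s=\tfrac12$ this reads $\log\Tr(\sqrt{\rho_A}\sqrt{\sigma_A})\ge-\tfrac12\Div{\rho_A}{\sigma_A}$, which rearranges to the displayed bound. Chaining everything, $\he{\rho_A}{\sigma_A}^2=1-F\le-2\log F\le-2\log\Tr(\sqrt{\rho_A}\sqrt{\sigma_A})\le\Div{\rho_A}{\sigma_A}$.

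The one genuinely quantum step — and thus the main obstacle — is the log-convexity of $g$. In the commutative case $\Div{p}{q}\ge-2\log\sum_i\sqrt{p_iq_i}$ is a one-line application of Jensen's inequality to the concave map $\log$; when $\rho_A$ and $\sigma_A$ do not commute one genuinely needs operator-analytic input: complex interpolation, or Lieb's concavity theorem, or the monotonicity in $\alpha$ of the sandwiched R\'enyi divergences $\widetilde D_\alpha$, which satisfy $\widetilde D_{1/2}=-2\log\Fi{\rho_A}{\sigma_A}$ and $\widetilde D_\alpha\to\Div{\rho_A}{\sigma_A}$ as $\alpha\to1$. In an actual write-up I would either invoke one of these, or — most economically — simply cite $\Div{\rho_A}{\sigma_A}\ge-2\log\Fi{\rho_A}{\sigma_A}$ from a standard reference such as Wilde~\cite{W18} or Watrous~\cite{W13}, keeping only the two short scalar reductions above.
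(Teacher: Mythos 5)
The paper itself gives no proof of Proposition~\ref{prop:quantumpinsker}: it is stated alongside the other quantum-information preliminaries with the blanket remark that details can be found in the textbooks \cite{W13,W18}. So there is no ``paper proof'' to compare against, and the question is simply whether your self-contained derivation is sound.

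It is. The left inequality really is just a restatement of the right-hand estimate in \propositionref{prop:he-and-trace}, squared. For the right inequality, your scalar chain $1-F\le-\ln F\le-\log F\le-2\log F$ is correct (using $\ln F\le F-1$, that $\log F=\ln F/\ln 2\le\ln F$ for $F\le 1$, and that $\log F\le 0$), the reduction from $\Fi{\rho_A}{\sigma_A}=\norm{\sqrt{\rho_A}\sqrt{\sigma_A}}$ to $\Tr(\sqrt{\rho_A}\sqrt{\sigma_A})$ via $|\Tr M|\le\norm{M}$ is correct, and the interpolation step is the standard one: with $g(s)=\Tr(\rho_A^{1-s}\sigma_A^s)$ one has $g(0)=1$, $(\log g)'(0)=-\Div{\rho_A}{\sigma_A}$ (with logs taken in the same base throughout, as you implicitly do), and log-convexity of $g$ gives the tangent-line bound $\log g(1/2)\ge-\tfrac12\Div{\rho_A}{\sigma_A}$, which is exactly the Petz R\'enyi monotonicity $D_{1/2}\le D_1$. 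You correctly identify log-convexity of $g$ as the one step that genuinely uses operator structure and cannot be done with Jensen alone in the non-commuting case; citing it (via three-lines interpolation, Lieb's theorem, or R\'enyi-divergence monotonicity, e.g.\ from \cite{W13} or \cite{W18}) is exactly the right move, and in fact matches the paper's own policy of deferring such facts to those references. The only cosmetic remark is that your chain proves the stronger bound $\Div{\rho_A}{\sigma_A}\ge-2\log\Fi{\rho_A}{\sigma_A}$ and then discards a factor of~2; that is fine, since the proposition only needs $\he{\rho_A}{\sigma_A}^2=1-F$ on the left.
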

The proposition below says that mutual information does not change under local operations:
\begin{proposition}\label{prop:local}
If $\sigma_{ABC} = U_B~ \rho_{ABC}~ U_B^\dag$, then  $\Inf[\sigma]{A}{BC} = \Inf[\rho]{A}{BC}.$\\
\end{proposition}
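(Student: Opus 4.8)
The plan is to reduce the claim to two elementary facts: that the relevant marginals of $\sigma_{ABC}$ transform in a controlled way under $U_B$, and that the quantum divergence $\Div{\cdot}{\cdot}$ is invariant when the same unitary is conjugated onto both of its arguments. Since by definition $\Inf[\rho]{A}{BC} = \Div{\rho_{ABC}}{\rho_A \otimes \rho_{BC}}$, it suffices to identify the two states inside the divergence for $\sigma$ and express them in terms of the corresponding states for $\rho$.

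First I would compute the relevant marginals of $\sigma_{ABC} = U_B\,\rho_{ABC}\,U_B^\dag$. Tracing out $A$ and using the identity $\Tr_A(U_B M_{AB}) = U_B \Tr_A(M_{AB})$ (applied with $M_{AB} = \rho_{ABC}U_B^\dag$, together with linearity of the partial trace) gives $\sigma_{BC} = U_B\,\rho_{BC}\,U_B^\dag$. Tracing out $B$ and $C$, and using that $U_B$ acts only on $\CH_B$ so that $\Tr_B(U_B X U_B^\dag) = \Tr_B(X)$ by cyclicity of the trace over $B$, gives $\sigma_A = \rho_A$. Because $U_B$ acts trivially on $\CH_A$, it commutes with $\rho_A$ (read as $\rho_A \otimes I_{BC}$), so $\sigma_A \otimes \sigma_{BC} = \rho_A \otimes (U_B \rho_{BC} U_B^\dag) = U_B\,(\rho_A \otimes \rho_{BC})\,U_B^\dag$. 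Hence both arguments of the divergence for $\sigma$ are obtained from those for $\rho$ by conjugation with the single unitary $U_B$.

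It then remains to invoke unitary invariance of divergence: for any unitary $V$ and any states $\rho,\sigma$ we have $\Div{V\rho V^\dag}{V\sigma V^\dag} = \Div{\rho}{\sigma}$. This is immediate from the definition $\Div{\rho}{\sigma} = \Tr(\rho\log\rho) - \Tr(\rho\log\sigma)$: writing the spectral decomposition $\sigma = \sum_i \lambda_i \ketbra{i}{i}$, one sees that $V\sigma V^\dag = \sum_i \lambda_i V\ketbra{i}{i}V^\dag$ is the spectral decomposition of $V\sigma V^\dag$ (as $\{V\ket{i}\}$ is orthonormal), so $\log(V\sigma V^\dag) = V(\log\sigma)V^\dag$, and likewise for $\rho$; then cyclicity of the trace gives $\Tr(V\rho V^\dag \log(V\rho V^\dag)) = \Tr(\rho\log\rho)$ and $\Tr(V\rho V^\dag \log(V\sigma V^\dag)) = \Tr(\rho\log\sigma)$. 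Applying this with $V = U_B$ to the states from the previous paragraph yields $\Inf[\sigma]{A}{BC} = \Div{U_B \rho_{ABC} U_B^\dag}{U_B(\rho_A \otimes \rho_{BC})U_B^\dag} = \Div{\rho_{ABC}}{\rho_A\otimes\rho_{BC}} = \Inf[\rho]{A}{BC}$. The argument is routine; the only point needing a moment's care is the spectral identity $\log(VXV^\dag) = V(\log X)V^\dag$, together with the observation that the support condition $\supp(\rho)\subseteq\supp(\sigma)$ governing finiteness of the divergence is itself preserved under conjugation by $V$, so the equality holds (with both sides $+\infty$) in the degenerate case as well.
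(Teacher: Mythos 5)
The paper states Proposition~\ref{prop:local} without proof (it appears in the preliminaries as a standard fact about quantum mutual information, attributed implicitly to the textbooks cited there), so there is no in-paper argument to compare against. Your proof is correct and is the standard one: you show that $\sigma_{BC}=U_B\rho_{BC}U_B^\dagger$ and $\sigma_A=\rho_A$ so that both arguments of the divergence $\Inf[\sigma]{A}{BC}=\Div{\sigma_{ABC}}{\sigma_A\otimes\sigma_{BC}}$ are the $U_B$-conjugates of the corresponding states for $\rho$, and then invoke unitary invariance of the relative entropy, itself derived from the spectral identity $\log(VXV^\dagger)=V(\log X)V^\dagger$ and cyclicity of trace. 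The steps where you track the marginals and the remark that the support condition is preserved under conjugation are exactly the small points worth being explicit about; nothing is missing.
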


Furthermore, \eqref{prop:qdiv-mutinf} combined with Pinsker's inequality, gives us 
\begin{proposition}\label{prop:he-and-mutinf}
	Let $\rho_{AB} \in \CD(\CH_A \otimes \CH_B)$, then $\he{\rho_{AB}}{\rho_{A} \otimes \rho_B} \le \sqrt{\Inf[\rho]{A}{B}}.$\\
\end{proposition}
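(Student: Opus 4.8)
The plan is to derive this as an immediate corollary of the two facts the statement already points to: the definitional identity $\Inf[\rho]{A}{B} = \Div{\rho_{AB}}{\rho_{A} \otimes \rho_B}$ from \eqref{prop:qdiv-mutinf}, and the Hellinger--divergence half of quantum Pinsker (Proposition~\ref{prop:quantumpinsker}), namely $\he{\rho_A}{\sigma_A}^2 \le \Div{\rho_A}{\sigma_A}$ for arbitrary states $\rho_A,\sigma_A$ on a common Hilbert space.

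Concretely, the single step is a substitution. Apply Proposition~\ref{prop:quantumpinsker} with the two states $\rho_{AB}$ and $\rho_A \otimes \rho_B$ in place of $\rho_A$ and $\sigma_A$; this is legitimate because the tensor product of two density operators is again a density operator, so $\rho_A \otimes \rho_B \in \CD(\CH_A \otimes \CH_B)$. This yields $\he{\rho_{AB}}{\rho_A \otimes \rho_B}^2 \le \Div{\rho_{AB}}{\rho_A \otimes \rho_B}$. Rewriting the right-hand side via \eqref{prop:qdiv-mutinf} gives $\he{\rho_{AB}}{\rho_A \otimes \rho_B}^2 \le \Inf[\rho]{A}{B}$, and taking square roots (both sides are non-negative --- the left by definition of Hellinger distance, the right by non-negativity of divergence, cf.\ Proposition~\ref{proposition:divpositive} in the classical case and the remark on quantum divergence) gives exactly $\he{\rho_{AB}}{\rho_A \otimes \rho_B} \le \sqrt{\Inf[\rho]{A}{B}}$.

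There is essentially no obstacle; the only point worth checking is that the Hellinger--divergence bound in Proposition~\ref{prop:quantumpinsker} is stated unconditionally for all pairs of states, so no support or full-rank hypothesis on $\rho_A \otimes \rho_B$ is required. One could alternatively try to route through the trace-distance half $\frac{1}{8}\norm{\rho_{AB} - \rho_A \otimes \rho_B}^2 \le \Div{\rho_{AB}}{\rho_A \otimes \rho_B}$ and then relate trace distance to Hellinger distance via Proposition~\ref{prop:he-and-trace}, but that inequality goes in the wrong direction for bounding $\he{\cdot}{\cdot}$ from above, so the direct Hellinger route above is the natural one.
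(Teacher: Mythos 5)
Your proof is correct and matches the paper's approach exactly: the paper presents this proposition as an immediate corollary of the identity $\Inf[\rho]{A}{B} = \Div{\rho_{AB}}{\rho_A\otimes\rho_B}$ from \eqref{prop:qdiv-mutinf} combined with the Hellinger half of quantum Pinsker (Proposition~\ref{prop:quantumpinsker}), which is precisely the substitution you carry out.
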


Define $\Infc[\rho]{A_S}{B}{S} := \BE_{S}[\Inf[\rho]{A_S}{B}]$, then we have the following quantum version of Shearer's inequality from \cite{ATYY17}:

\begin{lemma}[Quantum Shearer's Lemma \cite{ATYY17}]\label{lemma:qshearer}
	Let $A = A_1,\ldots,A_m$ and $B$ be registers. Let $\rho \in \CD(\CH_{A} \otimes \CH_B)$ be a state such that $\rho_A = \rho_{A_1} \otimes \rho_{A_2} \otimes \cdots \otimes \rho_{A_m}$. Let $S \subseteq [m]$ be a random set independent of $\rho_{AB}$ such that $\BP[i \in S] \le \mu$ for every $i \in [m]$. Then, we have
	\[ \Infc[\rho]{A_S}{B}{S}  \le \mu \cdot \Inf[\rho]{A}{B}.\]
\end{lemma}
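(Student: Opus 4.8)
The plan is to replay the standard proof of the classical Shearer inequality (\lemmaref{lemma:shearer}, from \cite{GKR14}) in the quantum setting, using only the chain rule, strong subadditivity (\propositionref{prop:strong-subadd}), and the product structure $\rho_A = \rho_{A_1} \otimes \cdots \otimes \rho_{A_m}$. First I would fix an arbitrary realization $S = \{i_1 < i_2 < \cdots < i_k\} \subseteq [m]$ of the random set, and expand, using the chain rule together with the symmetry of quantum mutual information, along the ordering $i_1, \ldots, i_k$:
\[ \Inf[\rho]{A_S}{B} = \sum_{j=1}^{k} \Infc[\rho]{A_{i_j}}{B}{A_{i_1}\cdots A_{i_{j-1}}}. \]

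The heart of the argument is a monotonicity-under-conditioning claim. Writing $T_j = \{i_1,\ldots,i_{j-1}\}$ and $Z_j = [i_j-1]\setminus T_j$ (note $T_j \subseteq [i_j-1]$ since $i_1 < \cdots < i_{j-1} < i_j$), I claim $\Infc[\rho]{A_{i_j}}{B}{A_{T_j}} \le \Infc[\rho]{A_{i_j}}{B}{A_{[i_j-1]}}$. To see this, note that because $\rho_A$ is a product state, the marginal $\rho_{A_{i_j}A_{T_j}A_{Z_j}}$ factors as $\rho_{A_{i_j}} \otimes \rho_{A_{T_j}A_{Z_j}}$, so $\Infc[\rho]{A_{i_j}}{A_{Z_j}}{A_{T_j}} = 0$ (it lies between $0$ and $\Inf[\rho]{A_{i_j}}{A_{T_j}A_{Z_j}} = 0$, using strong subadditivity and \propositionref{prop:qind}). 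Two applications of the chain rule then give
\[ \Infc[\rho]{A_{i_j}}{B}{A_{[i_j-1]}} = \Infc[\rho]{A_{i_j}}{B A_{Z_j}}{A_{T_j}} = \Infc[\rho]{A_{i_j}}{B}{A_{T_j}} + \Infc[\rho]{A_{i_j}}{A_{Z_j}}{A_{T_j}B}, \]
and the last term is non-negative by strong subadditivity, proving the claim. Summing over $j$ gives, for every fixed $S$, the bound $\Inf[\rho]{A_S}{B} \le \sum_{i\in S} \Infc[\rho]{A_i}{B}{A_{[i-1]}}$.

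Finally I would take the expectation over $S$. Since $S$ is independent of $\rho_{AB}$, each $c_i := \Infc[\rho]{A_i}{B}{A_{[i-1]}}$ is a fixed non-negative number, so $\BE_S\big[\sum_{i\in S} c_i\big] = \sum_{i=1}^m \BP[i\in S]\, c_i \le \mu \sum_{i=1}^m c_i$ using $\BP[i\in S]\le\mu$; and the chain rule gives $\sum_{i=1}^m c_i = \Inf[\rho]{A}{B}$. Combining, $\Infc[\rho]{A_S}{B}{S} = \BE_S[\Inf[\rho]{A_S}{B}] \le \mu\cdot\Inf[\rho]{A}{B}$, as desired.

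The only genuine obstacle is the monotonicity-under-conditioning step: conditional quantum mutual information is not monotone in its conditioning register in general, and the inequality really relies on the product structure of $\rho_A$ to make the cross term $\Infc[\rho]{A_{i_j}}{A_{Z_j}}{A_{T_j}}$ vanish. Everything else is routine bookkeeping --- verifying that each marginal invoked is genuinely a product state and that each chain-rule application is applied in a legal direction.
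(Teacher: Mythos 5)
Your proof is correct. Note that the paper does not reproduce a proof of this lemma at all---it is simply quoted from \cite{ATYY17}---so there is nothing to compare against in the text, but your argument is a sound, self-contained derivation that follows the standard Shearer template used in the classical case (cf.\ \lemmaref{lemma:shearer} from \cite{GKR14}). The one non-routine step is the monotonicity claim $\Infc[\rho]{A_{i_j}}{B}{A_{T_j}} \le \Infc[\rho]{A_{i_j}}{B}{A_{[i_j-1]}}$: you correctly identify that this requires the product structure of $\rho_A$ (to kill the cross terms $\Inf[\rho]{A_{i_j}}{A_{T_j}}$, $\Inf[\rho]{A_{i_j}}{A_{T_j}A_{Z_j}}$, and $\Infc[\rho]{A_{i_j}}{A_{Z_j}}{A_{T_j}}$ via \propositionref{prop:qind}) together with strong subadditivity (\propositionref{prop:strong-subadd}) for the residual term $\Infc[\rho]{A_{i_j}}{A_{Z_j}}{A_{T_j}B}\ge 0$, and both chain-rule manipulations check out. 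The final averaging step is clean and uses non-negativity of the $c_i$'s, which again comes from strong subadditivity.
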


\subsection{Quantum Communication Complexity}

We consider quantum protocols where Alice and Bob are allowed to exchange qubits and they share some pure entangled state in the beginning, for instance a number of EPR-pairs that they are not charged for. Any lower bound in this model also translates to a lower bound in other models of quantum communication (Yao's model~\cite{yao:qcircuit} with qubit communication without prior entanglement or the Cleve-Buhrman model~\cite{cleve&buhrman:subs} with classical communication and prior entanglement). 

The total state of a quantum protocol consists of: Alice and Bob's input registers $X$ and $Y$, Alice's private register $A$,
the communication channel $C$, and Bob's private register $B$. We assume that initially Alice and Bob share some pure entangled state $\psi_{A'B'}$ where $A'$ and $B'$ are part of Alice's and Bob's private registers $A$ and $B$ respectively, while the rest of the qubits in their private workspaces are initially zero ($\ket{0}$). The channel is also initially zero. Before the start of the protocol Alice and Bob copy their inputs from the input registers to their private workspaces. Let $\ket[AB]{\psi}$ denote the state of registers $A$ and $B$ at the start. This includes the initial entangled state on $A'$ and $B'$, a bunch of zero qubits and copy of their inputs $x$ and $y$. 

Given an input distribution $p(X,Y)$ on the inputs, the starting state of the protocol is then 
$$ \pstatealt{\rho}{0}{XYABC} = \sum_{xy} p(xy) \ketbra[XY]{xy}{xy} \otimes \ketbra[AB]{\psi}{\psi} \otimes \ketbra[C]{0}{0}.$$ 

Note that the marginal state $\pstatealt{\rho}{0}{XY}$ is a classical state but not necessarily pure if $X$ and $Y$ are not independent. To make the above state a pure state, we will add purifying registers $\XR$ and $\YR$ and consider the canonical  purification of $\pstatealt{\rho}{0}{XY}$ which is the pure state 
$$\ket[X\XR Y\YR]{\rho^{(0)}} = \sum_{xy} \sqrt{p(xy)} \ket[X\XR Y\YR]{xxyy}.$$

With the above purifying registers, the initial global state of the protocol is the pure state
$$\ket[X\XR Y\YR ABC]{\rho^{(0)}}  = \ket[X\XR Y\YR]{\rho^{(0)}} \otimes \ket[AB]{\psi} \otimes \ket[C]{0}.$$ 

At each step of the protocol, either Alice or Bob applies a unitary to a subset of the registers. We will assume that they alternate: on odd rounds Alice acts and on even rounds Bob acts. We will also assume that the channel consists of one qubit. These assumptions can be made without loss of generality as they only affect the communication by a constant factor.

In an odd round $r$, Alice applies a fixed unitary transformation $\pstatealt{U}{r}{XAC} = \sum_x \ketbra[X]{x}{x}\otimes \pstatesup{U}{r}{x}{AC}$ to her private register and the channel. This corresponds to her private computation as well as to putting a one-qubit message on the channel. Note that the unitary uses the input register only as a control and does not change its contents. In an even round, Bob proceeds similarly. Hence the content of the input registers $X$ and $Y$ as well as the corresponding purifying registers $\XR$ and $\YR$ remain unchanged throughout the protocol.
 
We assume that in the last round of the protocol Bob talks. The final state of an $\ell$-round protocol (for even $\ell$) on input distribution $p(X,Y)$ is the following pure state:
$$ \ket[X\XR Y\YR ABC]{\rho^{(\ell)}} = \pstatealt{U}{\ell}{YBC} \pstatealt{U}{\ell-1}{XAC} \cdots \pstatealt{U}{1}{XAC} \ket[X\XR Y\YR ABC]{\rho^{(0)}}.$$

For technical reasons it will be convenient to assume that at the end  of the protocol, the channel contains the answer. A measurement of the channel qubit in the computational basis then determines the output bit of the protocol. We say that the protocol computes $f(x,y)$  on a distribution $p(X,Y)$ if the probability of error on the input distribution $p(X,Y)$ is at most $\eps$.  Note that we may consider the run of the protocol on a fixed input $x,y$ by taking the initial distribution $p(X,Y)$ such that $p(x,y)=1$. We say that the protocol computes $f(x,y)$ with error $\eps$ if for every input $x,y$ the probability of error is at most $\eps$.

For notational convenience, throughout this work we will sometimes write $\rho^{(r)}$ instead of $\pstate{\rho}{r}$ to denote the global state of the protocol on all the registers after round $r$. When referring to the marginal states, however, we will always write the corresponding registers.

\subsubsection*{Basic Properties of Quantum Protocols}

In the following preliminary lemmas $\pstate{\rho}{r}$ and $\pstate{\sigma}{r}$  are the states of a quantum protocol after $r$ rounds when it is run on input distributions $p(XY)$ and $q(XY)$ respectively.  Moreover, $\ell$ will denote the last round of the protocol. The following proposition is easily seen to be true since the protocol applies the same sequence of unitaries on every input $x,y$: 

\begin{proposition} \label{prop:qprotocol} There are pure states $\{\purifycond{\psi}{r}{xy}{ABC}\}_{xy}$ such that
\begin{align*}
\ \purify{\rho}{r}{\allreg} = \sum_{xy} \sqrt{p(xy)}\ket[X\XR Y\YR]{xxyy}\otimes \purifycond{\psi}{r}{xy}{ABC}\\
\ \purify{\sigma}{r}{\allreg} = \sum_{xy} \sqrt{q(xy)}\ket[X\XR Y\YR]{xxyy}\otimes \purifycond{\psi}{r}{xy}{ABC}
\end{align*}
\end{proposition}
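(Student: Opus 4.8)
The plan is a routine induction on the round number $r$, whose only content is to track the control structure of the protocol unitaries. First I would record the base case $r=0$: by construction $\purify{\rho}{0}{\allreg} = \ket[X\XR Y\YR]{\rho^{(0)}}\otimes\ket[AB]{\psi}\otimes\ket[C]{0}$ with $\ket[X\XR Y\YR]{\rho^{(0)}} = \sum_{xy}\sqrt{p(xy)}\ket[X\XR Y\YR]{xxyy}$, so setting $\purifycond{\psi}{0}{xy}{ABC} := \ket[AB]{\psi}\otimes\ket[C]{0}$ (which is pure and does not even depend on $x,y$) gives the claimed decomposition; the identical choice works for $\sigma^{(0)}$ since the shared entangled state $\ket[AB]{\psi}$, the zero workspace, and the zero channel are the same regardless of the input distribution.

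For the inductive step I would assume the stated form after round $r-1$ and consider, say, an odd round $r$ in which Alice acts (the even case is symmetric, swapping $X,A$ for $Y,B$). The round-$r$ unitary is $\pstatealt{U}{r}{XAC} = \sum_x \ketbra[X]{x}{x}\otimes \pstatesup{U}{r}{x}{AC}$, which uses $X$ only as a control and acts as the identity on $\XR$, $Y$, $\YR$ and $B$. Applying it to the inductive expression for $\purify{\rho}{r-1}{\allreg}$ and using that $\ketbra[X]{x}{x}$ annihilates all branches except the $x$-branch, one obtains
$$\purify{\rho}{r}{\allreg} = \sum_{xy}\sqrt{p(xy)}\,\ket[X\XR Y\YR]{xxyy}\otimes \pstatesup{U}{r}{x}{AC}\,\purifycond{\psi}{r-1}{xy}{ABC},$$
so it suffices to define $\purifycond{\psi}{r}{xy}{ABC} := \pstatesup{U}{r}{x}{AC}\,\purifycond{\psi}{r-1}{xy}{ABC}$, which is again a pure state because unitaries preserve purity. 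The key point, which I would emphasize, is that $\pstatesup{U}{r}{x}{AC}$ is a fixed piece of the protocol independent of the input distribution, so the very same states $\purifycond{\psi}{r}{xy}{ABC}$ simultaneously witness the decomposition of $\purify{\sigma}{r}{\allreg}$ coming from the inductive hypothesis applied to $\sigma^{(r-1)}$ — the two runs differ only in the amplitudes $\sqrt{p(xy)}$ versus $\sqrt{q(xy)}$ attached to each branch.

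There is essentially no obstacle here; the proof is a few lines once the notation is in place. The only thing requiring care is the bookkeeping: one must use that $\pstatealt{U}{r}{XAC}$ (and Bob's $\pstatealt{U}{r}{YBC}$ in even rounds) factorizes over the computational basis of the relevant input register and never touches the purifying registers $\XR,\YR$, so that each $(x,y)$-branch evolves independently under a sequence of unitaries determined solely by $(x,y)$. That common sequence of branch-unitaries is exactly the object the proposition names $\purifycond{\psi}{r}{xy}{ABC}$, and its independence from the input distribution is what makes the two displayed identities hold with the same $\purifycond{\psi}{r}{xy}{ABC}$'s.
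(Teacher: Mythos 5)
Your proof is correct and is precisely the formalization of the paper's one-line justification (``the protocol applies the same sequence of unitaries on every input $x,y$''); the paper gives no further argument for this proposition, so the routine induction you spell out is exactly what is implicitly being invoked. One cosmetic note: the paper's prose says $\ket[AB]{\psi}$ already contains copies of the inputs $x,y$, which would make $\purifycond{\psi}{0}{xy}{ABC}$ depend on $(x,y)$ rather than be a fixed state as you claim, but this is an internal inconsistency in the paper's own description (its displayed tensor-product form of the initial state agrees with your reading) and it does not affect the proposition or your induction, since the proposition permits $(x,y)$-dependence and only requires independence from the input distribution.
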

Note that after the first round the states $\purifycond{\psi}{1}{xy}{ABC}$ only depend on $x$.

The above proposition implies that if $p(X,Y)$ is a product distribution on $X$ and $Y$, and if in a round $r$, Bob applies a unitary $\pstatealt{U}{r}{YBC}$, then the marginal states
 \[\pstatealt{\rho}{r}{X\XR Y\YR BC} = \pstatealt{U}{r}{YBC} \pstatealt{\rho}{r-1}{X\XR Y\YR BC} {\left(\pstatealt{U}{r}{YBC}\right)}^\dag \text{ and } \pstatealt{\rho}{r}{X\XR Y\YR A} = \pstatealt{\rho}{r-1}{X\XR Y\YR A} ,\]
 and a similar statement also holds when Alice acts. 

The following lemma follows easily from Proposition \ref{prop:qprotocol}:
\begin{lemma} \label{prop:qindistinguish} $\frac12\norm{\pstatealt{\rho}{r}{C}- \pstatealt{\sigma}{r}{C}} \le \stat{p(XY)-q(XY)}$.
\end{lemma}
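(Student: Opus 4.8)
The plan is to exploit the structure of a quantum protocol exposed in Proposition~\ref{prop:qprotocol}: on every input $(x,y)$, the protocol applies the same fixed sequence of unitaries, so the only thing distinguishing the global state $\pstate{\rho}{r}$ from $\pstate{\sigma}{r}$ is the initial weighting by $\sqrt{p(xy)}$ versus $\sqrt{q(xy)}$ on the orthonormal input/purifier vectors $\ket[X\XR Y\YR]{xxyy}$. Concretely, by Proposition~\ref{prop:qprotocol} we may write
\[
\purify{\rho}{r}{\allreg}=\sum_{xy}\sqrt{p(xy)}\,\ket[X\XR Y\YR]{xxyy}\otimes\purifycond{\psi}{r}{xy}{ABC},\qquad
\purify{\sigma}{r}{\allreg}=\sum_{xy}\sqrt{q(xy)}\,\ket[X\XR Y\YR]{xxyy}\otimes\purifycond{\psi}{r}{xy}{ABC},
\]
with the \emph{same} $\purifycond{\psi}{r}{xy}{ABC}$'s in both expressions. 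The key point is that the vectors $\ket[X\XR Y\YR]{xxyy}$ are mutually orthogonal and live on the registers $X\XR Y\YR$ which are disjoint from $C$, so these are exactly the purifications of classical-quantum states whose classical part is $p(XY)$ (resp.\ $q(XY)$) and whose quantum part after tracing out $X\XR Y\YR$ is governed by the $\purifycond{\psi}{r}{xy}{ABC}$'s.

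First I would pass to the classical-quantum states on $X\XR Y\YR C$ obtained by tracing out $AB$ from the above: this gives $\pstatealt{\rho}{r}{X\XR Y\YR C}=\sum_{xy}p(xy)\,\ketbra[X\XR Y\YR]{xxyy}{xxyy}\otimes\xi^{xy}_C$ and likewise for $\sigma$ with $q(xy)$, where $\xi^{xy}_C=\Tr_{AB}\!\big(\purifycond{\psi}{r}{xy}{ABC}\big)$ does not depend on whether we started from $p$ or $q$. Since these are block-diagonal with respect to the orthogonal projectors $\ketbra[X\XR Y\YR]{xxyy}{xxyy}$, the trace norm of the difference is simply $\norm{\pstatealt{\rho}{r}{X\XR Y\YR C}-\pstatealt{\sigma}{r}{X\XR Y\YR C}}=\sum_{xy}|p(xy)-q(xy)|\cdot\norm{\xi^{xy}_C}=\sum_{xy}|p(xy)-q(xy)|=2\stat{p(XY)-q(XY)}$, using that each $\xi^{xy}_C$ has unit trace and Proposition~\ref{prop:statdef}. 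Finally, applying Proposition~\ref{prop:marginal-trace} to discard the registers $X\XR Y\YR$ gives $\norm{\pstatealt{\rho}{r}{C}-\pstatealt{\sigma}{r}{C}}\le\norm{\pstatealt{\rho}{r}{X\XR Y\YR C}-\pstatealt{\sigma}{r}{X\XR Y\YR C}}=2\stat{p(XY)-q(XY)}$, and dividing by $2$ yields the claim.

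I do not expect any serious obstacle here; the only thing requiring a little care is the bookkeeping that the $\purifycond{\psi}{r}{xy}{ABC}$'s (hence the $\xi^{xy}_C$'s) are genuinely the same in the $\rho$ and $\sigma$ expansions — this is precisely the content of Proposition~\ref{prop:qprotocol} and follows because the protocol's unitaries act identically regardless of the input distribution. An alternative, even shorter route that avoids the explicit block-diagonal computation: note that the map taking the classical state $\sum_{xy}p(xy)\ketbra[X\XR Y\YR]{xxyy}{xxyy}$ (canonically purified) through the protocol's fixed unitaries and then tracing to $C$ is a fixed quantum channel applied to $p(XY)$, and quantum channels are contractive in trace norm, so $\norm{\pstatealt{\rho}{r}{C}-\pstatealt{\sigma}{r}{C}}\le\norm{p(XY)-q(XY)}_1=2\stat{p(XY)-q(XY)}$. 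Either way the step is routine given Proposition~\ref{prop:qprotocol}.
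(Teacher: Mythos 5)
Your proof is correct and follows essentially the same approach as the paper: both key off Proposition~\ref{prop:qprotocol} to observe that the conditional states $\purifycond{\psi}{r}{xy}{ABC}$ are identical under $p$ and $q$, so the only difference in the $C$-marginals is the reweighting $p(xy)$ versus $q(xy)$. The paper traces directly to $C$ and applies the triangle inequality to the mixture $\sum_{xy}(p(xy)-q(xy))\pstatesup{\psi}{r}{xy}{C}$, whereas you pass through the block-diagonal state on $X\XR Y\YR C$ (where the trace norm is an exact sum) and then invoke monotonicity under partial trace; your alternative via contractivity of quantum channels is also a valid one-liner. All three routes are equivalent in substance.
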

\begin{proof} Let $\delta = \stat{p(XY)-q(XY)}$. Then using Proposition \ref{prop:qprotocol}, we can write
\begin{align*}
\ \frac12\norm{\pstatealt{\rho}{r}{C}- \pstatealt{\sigma}{r}{C}} = \frac12 \norm{\sum_{xy}(p(xy)-q(xy))\pstatesup{\psi}{r}{xy}{C}} \le \frac12 \sum_{xy}|p(xy)-q(xy)|\norm{\pstatesup{\psi}{r}{xy}{C}} \le \delta,
\end{align*}
where the second inequality is the triangle inequality and the last one follows from Proposition \ref{prop:statdef} and the fact that $\{\pstatesup{\psi}{r}{xy}{C}\}_{xy}$ are density operators and have unit trace.
\end{proof}
Using Proposition \ref{prop:tracepovm}, the above also implies that if $p(XY)$ and $q(XY)$ are $\delta$-close, then the output distributions of the protocol for both cases are $\delta$-close.

\begin{lemma}[Errors and Trace Norm]\label{prop:qerror} Given a boolean function $f(x,y)$, let $p(X,Y)$ be a distribution supported on its 0-inputs and $q(X,Y)$ be a distribution supported on its 1-inputs. If an $\ell$-round quantum protocol computes $f(x,y)$ with error $\delta$, then $\frac12\norm{\pstatealt{\rho}{\ell}{C}- \pstatealt{\sigma}{\ell}{C}} \ge 1 - 2\delta$.
\end{lemma}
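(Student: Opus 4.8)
The plan is to mirror the proof of \lemmaref{lemma:error} in the quantum setting, exploiting the fact that the output bit of the protocol is the outcome of measuring the channel register $C$ in the computational basis, so that the operator $\Lambda_0 := \ketbra[C]{0}{0}$ is precisely the ``output $0$'' event. First I would record, via \propositionref{prop:qprotocol}, that the reduced channel states are convex combinations of the per-input channel states: tracing out all registers but $C$ from the expressions in \propositionref{prop:qprotocol} and using orthonormality of $\{\ket[X\XR Y\YR]{xxyy}\}_{xy}$ yields $\pstatealt{\rho}{\ell}{C} = \sum_{xy} p(xy)\, \pstatesup{\psi}{\ell}{xy}{C}$ and $\pstatealt{\sigma}{\ell}{C} = \sum_{xy} q(xy)\, \pstatesup{\psi}{\ell}{xy}{C}$, where $\pstatesup{\psi}{\ell}{xy}{C}$ is the channel state of the protocol run on the fixed input $(x,y)$.

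Next, since $p(X,Y)$ is supported on $0$-inputs of $f$ and the protocol has error at most $\delta$ on every input, for each $(x,y)$ in the support of $p$ the protocol outputs $0$ with probability $\Tr(\Lambda_0\, \pstatesup{\psi}{\ell}{xy}{C}) \ge 1 - \delta$; averaging over $p(xy)$ gives $\Tr(\Lambda_0\, \pstatealt{\rho}{\ell}{C}) \ge 1 - \delta$. Symmetrically, since $q(X,Y)$ is supported on $1$-inputs, outputting $0$ is an error on the support of $q$, so $\Tr(\Lambda_0\, \pstatesup{\psi}{\ell}{xy}{C}) \le \delta$ there, whence $\Tr(\Lambda_0\, \pstatealt{\sigma}{\ell}{C}) \le \delta$. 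Finally, since $\Lambda_0$ is positive semidefinite with eigenvalues at most one, \propositionref{prop:tracepovm} gives
\[ \frac12\norm{\pstatealt{\rho}{\ell}{C} - \pstatealt{\sigma}{\ell}{C}} \ge \Tr\big(\Lambda_0(\pstatealt{\rho}{\ell}{C} - \pstatealt{\sigma}{\ell}{C})\big) \ge (1 - \delta) - \delta = 1 - 2\delta. \]

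There is essentially no obstacle here: the statement is the direct quantum translation of \lemmaref{lemma:error}, and the only point needing a little care is the bookkeeping that the \emph{per-input} error guarantee (``error at most $\delta$ on every input $x,y$'') passes to the averaged statements about $\pstatealt{\rho}{\ell}{C}$ and $\pstatealt{\sigma}{\ell}{C}$, which is immediate from the convex-combination structure noted in the first step.
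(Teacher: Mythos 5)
Your proof is correct and matches the paper's argument: both identify the output-$0$ projector $\ketbra[C]{0}{0}$ as the distinguishing POVM element, use the per-input error guarantee (averaged by linearity of the trace, which you spell out via \propositionref{prop:qprotocol}) to get $\Tr(\ketbra[C]{0}{0}\pstatealt{\rho}{\ell}{C}) \ge 1-\delta$ and $\Tr(\ketbra[C]{0}{0}\pstatealt{\sigma}{\ell}{C}) \le \delta$, and then apply \propositionref{prop:tracepovm}. The only difference is that you make the convex-combination decomposition explicit, whereas the paper asserts the averaged bounds directly.
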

\begin{proof} Recall that the last bit of the channel contains the answer and since the output of a protocol is given by a measurement of the channel qubit in the computational basis, the probabilities that the output is $0$ under $\rho_C$ and $\sigma_C$ are respectively given by $\Tr(\ketbra[C]{0}{0}\rho_C) \ge 1 - \delta$ and $\Tr(\ketbra[C]{0}{0}\sigma_C) \le \delta$.
	Using Proposition \ref{prop:tracepovm}, we have 
	\[\frac12\norm{\pstatealt{\rho}{\ell}{C}- \pstatealt{\sigma}{\ell}{C}} \ge \Tr(\ketbra[C]{0}{0}(\rho_C - \sigma_C)) \ge (1 - \delta) - \delta = 1 - 2\delta. \qedhere\]
\end{proof}

Quantum protocols have no notion of a transcript, but the following lemma still gives a bound on how much information is revealed by a quantum protocol in terms of the communication.

\begin{lemma}[Information Cost] Let $p(XY)$ be a product input distribution on $X$ and $Y$. Then, for any round $r$ in the communication protocol, it holds that
	\begin{align*}
		\ \Inf[\rho^{(r)}]{X}{Y\YR BC} \le 2r \text{ and } \Inf[\rho^{(r)}]{Y}{X\XR AC} \le 2r. 
	\end{align*}
\end{lemma}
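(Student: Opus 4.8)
The plan is to prove the first bound $\Inf[\rho^{(r)}]{X}{Y\YR BC}\le 2r$ by induction on $r$; the second then follows by exchanging the roles of Alice and Bob (and with them $X\leftrightarrow Y$, $\XR\leftrightarrow\YR$, $A\leftrightarrow B$). For the base case $r=0$, since $p(X,Y)$ is a product distribution the starting state $\purify{\rho}{0}{\allreg}$ has register $X$ in tensor product with $Y\YR BC$ — the only register correlated with $X$ is the purifying copy $\XR$, which is excluded — so $\Inf[\rho^{(0)}]{X}{Y\YR BC}=0$ by Proposition~\ref{prop:qind}.

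For the inductive step I would distinguish the two types of round. If $r$ is even, Bob applies $\pstatealt{U}{r}{YBC}$, a unitary acting entirely inside $Y\YR BC$ (trivially on $\YR$), so grouping those registers and invoking Proposition~\ref{prop:local} gives $\Inf[\rho^{(r)}]{X}{Y\YR BC}=\Inf[\rho^{(r-1)}]{X}{Y\YR BC}\le 2(r-1)$. If $r$ is odd, Alice applies $\pstatealt{U}{r}{XAC}$, which also touches the ``left'' register $X$ and the channel $C$; here I would first peel off the channel using the chain rule and Proposition~\ref{prop:qmutinfbound} (recall $|C|=1$):
\[ \Inf[\rho^{(r)}]{X}{Y\YR BC}=\Inf[\rho^{(r)}]{X}{Y\YR B}+\Infc[\rho^{(r)}]{X}{C}{Y\YR B}\le \Inf[\rho^{(r)}]{X}{Y\YR B}+2, \]
and then reduce to the claim that Alice's round does not change the marginal on $XY\YR B$, i.e. $\pstatealt{\rho}{r}{XY\YR B}=\pstatealt{\rho}{r-1}{XY\YR B}$. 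Granting this, strong subadditivity and the induction hypothesis give $\Inf[\rho^{(r)}]{X}{Y\YR B}=\Inf[\rho^{(r-1)}]{X}{Y\YR B}\le \Inf[\rho^{(r-1)}]{X}{Y\YR BC}\le 2(r-1)$, and combining with the display yields the bound $2r$.

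The one substantive (though short) step is this marginal invariance, which I would prove straight from the explicit form of the global state in Proposition~\ref{prop:qprotocol}. Writing $\purify{\rho}{r}{\allreg}=\sum_{xy}\sqrt{p(xy)}\ket[X\XR Y\YR]{xxyy}\otimes\purifycond{\psi}{r}{xy}{ABC}$ and tracing out $\XR AC$, the partial trace over $\XR$ — which carries a copy of $X$ — annihilates every cross term with $x\neq x'$, leaving
\[ \pstatealt{\rho}{r}{XY\YR B}=\sum_{x,y,y'}\sqrt{p(xy)p(xy')}\,\ketbra[X]{x}{x}\otimes\ketbra[Y\YR]{yy}{y'y'}\otimes\Tr_{AC}\!\left[\ketbra[ABC]{\psi^{(r),xy}}{\psi^{(r),xy'}}\right]. \]
Since $r$ is odd, $\purifycond{\psi}{r}{xy}{ABC}=\pstatesup{U}{r}{x}{AC}\,\purifycond{\psi}{r-1}{xy}{ABC}$ with the \emph{same} $x$-controlled unitary on the ket and the bra of each surviving term; as $\pstatesup{U}{r}{x}{AC}$ is unitary on $AC$, $\Tr_{AC}[\pstatesup{U}{r}{x}{AC}\,M\,(\pstatesup{U}{r}{x}{AC})^\dag]=\Tr_{AC}[M]$, so each term collapses to its round-$(r-1)$ counterpart and the two marginals coincide. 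The mirror-image invariance needed for the second inequality (Bob acting, with $\YR$ now playing the role of $\XR$) is proved identically. I expect this marginal-invariance claim to be the main point to get right; it is also the only place where the product structure of $p(X,Y)$ and the presence of the purifying registers $\XR,\YR$ are used — everything else is just the chain rule, strong subadditivity, Proposition~\ref{prop:local}, and the one-qubit bound on the channel.
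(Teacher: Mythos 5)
Your proof matches the paper's argument step for step: induction on rounds, base case via the product structure, even rounds via unitary invariance of mutual information (Proposition~\ref{prop:local}), odd rounds by peeling off the channel with the chain rule and Proposition~\ref{prop:qmutinfbound} and then invoking the invariance of the marginal on $XY\YR B$, finishing with strong subadditivity. The only difference is that you spell out the partial-trace computation showing $\pstatealt{\rho}{r}{XY\YR B}=\pstatealt{\rho}{r-1}{XY\YR B}$ (correctly using that $\pstatesup{U}{r}{x}{AC}$ cancels under $\Tr_{AC}$), whereas the paper treats this as an immediate consequence of the control-register structure noted after Proposition~\ref{prop:qprotocol}; this is an expansion of a step the paper states briefly, not a different route.
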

\begin{proof}
	The proof is by induction on the number of rounds. We will only prove the first inequality as the second one follows analogously. When $r=0$, no messages have been exchanged and since $p(x,y)=p(x)p(y)$ for any $x,y$, it follows that the initial state is of the form $\pstate{\rho}{0} = \pstatealt{\rho}{0}{X\XR} \otimes \pstatealt{\rho}{0}{Y\YR} \otimes \pstatealt{\rho}{0}{ABC}$. So, using Proposition \ref{prop:qind}, it follows that $\Inf[\rho^{(0)}]{X}{Y\YR BC} = 0.$
	
	Now, let us assume that the statement holds for $r-1$ rounds. When $r$ is even, Bob applies a unitary $\pstatealt{U}{r}{YBC}$. Since $p(XY)$ is a product distribution on $X$ and $Y$, from Proposition \ref{prop:qprotocol}, it follows that $\pstatealt{\rho}{r}{XY\YR BC} = \pstatealt{U}{r}{YBC} \pstatealt{\rho}{r-1}{XY\YR BC}{\left(\pstatealt{U}{r}{YBC}\right)}^\dag$. Hence, using Proposition \ref{prop:local}, we have
	\begin{align*}
	\ \Inf[\rho^{(r)}]{X}{Y\YR BC} &=\Inf[\rho^{(r-1)}]{X}{Y\YR BC} \le 2(r-1),
	\end{align*}
where the inequality follows from the inductive hypothesis.
	
	When $r$ is odd, Alice applies a unitary $\pstatealt{U}{r}{XAC}$ with $X$ as control. Using chain rule, we can write
	\begin{align*}
	\ \Inf[\rho^{(r)}]{X}{Y\YR BC} &= \Inf[\rho^{(r)}]{X}{Y\YR B} + \Infc[\rho^{(r)}]{X}{C}{Y\YR B} \\
	\							  &\le 	\Inf[\rho^{(r)}]{X}{Y\YR B} + 2	= \Inf[\rho^{(r-1)}]{X}{Y\YR B} + 2 \\
	\							  & \le  \Inf[\rho^{(r-1)}]{X}{Y\YR BC} + 2 \le 2(r-1) + 2 = 2r,
	\end{align*}
where the first inequality follows from Proposition \ref{prop:qmutinfbound}, the second equality follows since $\rho^{(r)}_{XY\YR B} = \rho^{(r-1)}_{XY\YR B}$ as Alice applies a unitary $U_{XAC}$ with $X$ as a control register, and the second inequality follows from chain rule and non-negativity of conditional mutual information.
\end{proof}

\section{Classical Communication Lower Bound}
\label{sec:classical}

In this section, we present a new proof of the classical communication lower bound that we will later generalize to the quantum setting. We will prove that any randomized protocol for the $\sink$ function that errs with probability at most $1/3$ must communicate at least $\Omega(t)$ bits. 

As is standard, to prove this we use a hard distribution $p(XY)$ on the inputs.  

\paragraph{Hard Input Distribution $p(X,Y)$:} Let $p_0(X,Y)$ and $p_1(X,Y)$ denote the uniform distribution on $\sink^{-1}(0)$ and $\sink^{-1}(1)$ respectively. In the input distribution $p(X,Y)$, the input is sampled from $p_0(X,Y)$ with probability $\frac12$ and from $p_1(X,Y)$ with probability $\frac12$.  \\  

Since we have a distribution on the inputs, we may assume without loss of generality that the randomized protocol is deterministic.
We will prove a lower bound on the communication by showing that if the length of the messages of the protocol is at most $\frac12\eps^3 t$, then the distribution of the messages looks almost the same under the distributions $p_0(X,Y)$ and $p_1(X,Y)$: denoting by $p_0(M)$ and $p_1(M)$ the induced distributions on the messages under $p_0(X,Y)$ and $p_1(X,Y)$, respectively, we will show that $p_0(M)$ and $p_1(M)$ are  $8\eps$-close in statistical distance. To show this, we use the \emph{fooling distribution} method from \cite{RS15}. We will give another distribution $u(X,Y)$ such that the induced distribution $u(M)$ will be $4\eps$-close to each of $p_0(M)$ and $p_1(M)$.  For the $\sink$ function, this
{\bf fooling distribution} $u(X,Y)$ is the uniform distribution on $\bits^{n+n}$.
More precisely, we prove:

\begin{theorem}\label{thm:statclose} 
Let $\eps > 0$ be a constant and $t$ be large enough. Then, for any deterministic protocol for the $\sink$ function with communication at most $\frac12\eps^3 t$, we have that 
$p_0(M) \papprox{4\eps} u(M) \papprox{4\eps} p_1(M).$\\
\end{theorem}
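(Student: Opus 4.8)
The two inequalities in the chain are established by quite different arguments. The bound $p_0(M)\papprox{4\eps}u(M)$ is the easy one: under the uniform distribution $u(X,Y)$ a fixed vertex $v\in[t]$ is the sink of the tournament $X\oplus Y$ exactly when $(X\oplus Y)_{N(v)}=z_{N(v)}$, which has probability $2^{-(t-1)}$, so a union bound gives $\BP_u[\sink=1]\le t\cdot 2^{-(t-1)}$. Since $p_0(X,Y)=u(X,Y\mid \sink=0)$, Lemma~\ref{prop:statcond} yields $\stat{u(X,Y)-p_0(X,Y)}=\BP_u[\sink=1]\le t\cdot 2^{-(t-1)}$; as $M$ is a deterministic function of the inputs, statistical distance can only shrink under this map, so $\stat{u(M)-p_0(M)}\le t\cdot 2^{-(t-1)}$, which is far below $4\eps$ once $t$ is large enough relative to the constant $\eps$.

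The content is in the other inequality, $u(M)\papprox{4\eps}p_1(M)$, where the two input distributions are themselves extremely far apart. I would begin by realizing $p_1$ as a mixture of conditionings of $u$: for each $v\in[t]$ let $\CE_v$ be the collision event $\{X_{N(v)}=Y_{N(v)}\oplus z_{N(v)}\}$ (i.e.\ ``$v$ is a sink of $X\oplus Y$''), and set $u_v(X,Y):=u(X,Y\mid\CE_v)$. Because a tournament has at most one sink and the events $\CE_v$ are equiprobable and play symmetric roles, one checks that $p_1(X,Y)=\tfrac1t\sum_{v}u_v(X,Y)$, hence $p_1(M)=\tfrac1t\sum_v u_v(M)$, and by convexity of statistical distance $\stat{p_1(M)-u(M)}\le \E_v\big[\stat{u_v(M)-u(M)}\big]$. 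So it suffices to show that for a typical vertex $v$, conditioning on $\CE_v$ barely perturbs the message distribution.

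To control this, first bound how much the protocol reveals about the edges incident to a random vertex. Under $u$ the bits $(X_e)_e$ are mutually independent, the set $N(v)$ for uniformly random $v$ is independent of $(X,Y,M)$, and each edge lies in $N(v)$ with probability $2/t$; Shearer's inequality (Lemma~\ref{lemma:shearer}) together with $\Inf[u]{X}{M}\le|M|\le\tfrac12\eps^3 t$ then gives $\E_v\big[\Inf[u]{X_{N(v)}}{M}\big]=\Infc[u]{X_{N(v)}}{M}{v}\le\eps^3$, and symmetrically $\E_v\big[\Inf[u]{Y_{N(v)}}{M}\big]\le\eps^3$. Now I would invoke the conditioning lemma of \cite{RS15} (Lemma~\ref{lemma:condition}), which is designed precisely for this situation: when $X$ and $Y$ are independent and the transcript carries little information about the colliding coordinates $X_{N(v)}$ and $Y_{N(v)}$, conditioning on the collision $\CE_v$ changes the message distribution by little. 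Since, by Markov's inequality applied to the two Shearer bounds, all but an $O(\eps)$-fraction of vertices $v$ satisfy $\Inf[u]{X_{N(v)}}{M},\Inf[u]{Y_{N(v)}}{M}\le \eps^2$, the conditioning lemma yields $\stat{u_v(M)-u(M)}=O(\eps)$ for those $v$, while each remaining $v$ contributes at most the trivial bound $1$; averaging over $v$ gives $\E_v[\stat{u_v(M)-u(M)}]\le 4\eps$. (It is the cube in the communication budget $\tfrac12\eps^3 t$ that creates the room for this Markov step; alternatively, since the claim is vacuous when $\eps\ge1$, one may assume $\eps<1$ and average the conditioning lemma directly via Jensen, using concavity of the square root.)

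The crux --- and the only step that is not bookkeeping --- is this conditioning step. The event $\CE_v$ has probability $2^{-(t-1)}$, so the naive estimate of Lemma~\ref{prop:statcond} is useless; the point is that although $\CE_v$ is exponentially rare it is ``almost independent of what the protocol does,'' and turning that intuition into a quantitative statement is exactly the content of the RS15 conditioning lemma. The structural facts this rests on are the Markov property $X-M-Y$ of deterministic protocols on product distributions (Proposition~\ref{prop:markov}), which is what makes the conditioning lemma applicable here, and the rectangle structure behind it; the remaining ingredients --- the union bound, the mixture identity for $p_1$, Shearer's inequality and Pinsker's inequality --- are routine.
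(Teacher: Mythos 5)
Your proposal follows the paper's proof essentially step by step: the same $p_0$-to-$u$ bound via $\BP_u[\sink=1]\le t2^{-(t-1)}$ and Lemma~\ref{prop:statcond}, the same realization of $p_1$ as the mixture $\E_{v}[u(\cdot\mid\CE_v)]$ pushed through convexity to $\E_v[\stat{u_v(M)-u(M)}]$, the same Shearer bound $\Infc[u]{X_{N(V)}}{M}{V},\Infc[u]{Y_{N(V)}}{M}{V}\le\eps^3$, and the same invocation of the RS15 conditioning lemma.

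The one place where your primary route does not quite close is the Markov step. Lemma~\ref{lemma:condition} has a \emph{cube} root, so vertices with $\Inf[u]{X_{N(v)}}{M},\Inf[u]{Y_{N(v)}}{M}\le\eps^2$ only give $\stat{u_v(M)-u(M)}\le 4\eps^{2/3}$, not $O(\eps)$; combined with the $\le 2\eps$ mass of bad vertices, this yields roughly $4\eps^{2/3}+2\eps$, which is worse than $4\eps$ for small $\eps$. The clean route is the one you mention as an alternative: apply Jensen directly with concavity of the \emph{cube} root (you wrote ``square root,'' a typo), giving $\E_v\big[2\sqrt[3]{\Inf[u]{X_{N(v)}}{M}}\big]\le 2\sqrt[3]{\eps^3}=2\eps$ for each of $X$ and $Y$. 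That is exactly what the paper does and gives $4\eps$ on the nose.
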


Since the input distribution $p(X,Y)$ is balanced, using Lemma \ref{lemma:error}, the distributions $p_0(M)$ and $p_1(M)$ must have statistical distance at least $1/3$ if the protocol has error $1/3$ on $p(X,Y)$. So, it must be that $8\eps \ge 1/3$, and hence $\eps \ge 1/24$, and the $\Omega(t)$ lower bound on the communication (Theorem~\ref{th:CMS}) follows. 

Next, we prove Theorem \ref{thm:statclose}. Before the proof, it will be helpful to keep in mind how the distributions $p_1(X,Y)$, $p_0(X,Y)$ and $u(X,Y)$ are related. Note that by definition, $p_0(X,Y) = u(X,Y | \sink = 0)$ and $p_1(X,Y) = u(X,Y| \sink = 1)$. Also, notice that the input distributions $p_0(X,Y)$ and $u(X,Y)$ are already very close in statistical distance:

\begin{claim}\label{claim:zerodist}
$p_0(X,Y) \papprox{\gamma} u(X,Y)$ with $\gamma = t2^{-(t-1)} = o(1)$.
\end{claim}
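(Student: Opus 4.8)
The plan is to recognize this as a direct instance of Lemma~\ref{prop:statcond}. By definition $p_0(X,Y) = u(X,Y \mid \CE)$ where $\CE$ is the event $\{\sink(X\oplus Y) = 0\}$, so Lemma~\ref{prop:statcond} applies verbatim once we know $u(\CE) = 1 - \gamma$; it then yields $\stat{p_0(X,Y) - u(X,Y)} = \gamma$, which is exactly the claimed $\gamma$-closeness.

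The only computation needed is $u(\sink = 1)$, i.e.\ the probability that a uniformly random tournament on $[t]$ has a sink. First I would note that when $(X,Y)$ is uniform on $\bits^{n+n}$, the parity $Z = X\oplus Y$ is uniform on $\bits^n$, hence represents a uniformly random orientation of the $\binom{t}{2}$ edges. For a fixed vertex $v$, the event ``$v$ is a sink'' prescribes the orientation of each of the $t-1$ edges incident to $v$, and so has probability $2^{-(t-1)}$. As observed in the introduction, a tournament has at most one sink, so these $t$ events (over $v \in [t]$) are pairwise disjoint; summing gives $u(\sink = 1) = t\,2^{-(t-1)}$, and therefore $u(\CE) = 1 - t\,2^{-(t-1)}$.

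Finally I would apply Lemma~\ref{prop:statcond} with $\delta = \gamma := t\,2^{-(t-1)}$ to conclude $\stat{p_0(X,Y) - u(X,Y)} = \gamma$, i.e.\ $p_0(X,Y) \papprox{\gamma} u(X,Y)$, and remark that $\gamma = t\,2^{-(t-1)} \to 0$ as $t \to \infty$ since the exponential factor dominates the linear one. There is essentially no obstacle here: the statement is a one-line corollary of the earlier lemma, and the only point that deserves an explicit sentence is the pairwise disjointness of the ``$v$ is a sink'' events, which is what lets us add the probabilities exactly rather than merely union-bound.
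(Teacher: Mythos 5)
Your proof is correct and follows exactly the paper's approach: identify $p_0(X,Y)=u(X,Y\mid\sink=0)$, compute $u(\sink=1)=t\,2^{-(t-1)}$ by disjointness of the ``$v$ is a sink'' events, and invoke Lemma~\ref{prop:statcond}. The only (harmless) extra detail you spell out is that $X\oplus Y$ is uniform when $(X,Y)$ is, which the paper leaves implicit.
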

\begin{proof}
Note that under the uniform distribution $u(XY)$, the probability that the function $\sink$ takes value $1$ is exactly $t2^{-(t-1)}$,  because for each vertex $v$, the event that $v$ is the sink has probability exactly $2^{-(t-1)}$, and these events are disjoint for the $t$ vertices. This means that \[u(\sink = 0) = 1 - t2^{-(t-1)}.\]
	 Since $p_0(XY) = u(XY|\sink = 0)$,  Lemma~\ref{prop:statcond} implies $\stat{p_0(XY) - u(XY)} \le t2^{-(t-1)} = \gamma$.
\end{proof}

Furthermore, recall that we can generate the distribution $p_1(X,Y)$ from $u(X,Y)$ by conditioning on a simple collision event: for any vertex $v$, denoting by $N(v)$ the set of edges incident on $v$, the distribution $p_1(X,Y)$ can be generated from $u(X,Y)$ by first picking a uniformly random vertex $V \in [t]$ as the sink, and then conditioning on the event that $X_{N(V)}=Y_{N(V)} \oplus z_{N(V)}$, where $z_{N(v)}$ is the unique string that encodes the orientations of the edges in $N(v)$ when vertex $v$ is the sink.

To complete the proof, we use the following lemma from \cite{RS15}, which bounds the effect of conditioning on a collision event (for completeness we include a proof in Appendix~\ref{appA}). 

\begin{restatable}[Lemma 4.3 in \cite{RS15}]{lemma}{condition} \label{lemma:condition} {Given a probability space $q$, if $A,B \in [r]$ are uniform and independent random variables, and $A-C-B$,} then
	\[q(C) \papprox{\epsilon} q(C|A=B), \text{ with } \epsilon = 2\sqrt[3]{\Inf[q]{C}{A}}+ 2\sqrt[3]{\Inf[q]{C}{B}}.\]
\end{restatable}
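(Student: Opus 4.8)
The plan is to reduce the statistical distance to an expectation of a single reweighting factor, and then bound that factor by separating the values $c$ into a typical and an atypical part. Since $A$ and $B$ are uniform and independent, $q(A{=}B)=1/r$, so Bayes' rule gives $q(c\mid A{=}B)=r\,q(c)\,q(A{=}B\mid c)$, and hence by Proposition~\ref{prop:statdef},
\[
\stat{q(C)-q(C\mid A{=}B)}=\tfrac12\Ex{q(c)}{\,\bigl|\,r\cdot q(A{=}B\mid c)-1\,\bigr|}.
\]
The Markov condition $A-C-B$ lets us write $q(A{=}B\mid c)=\sum_a q(A{=}a|c)\,q(B{=}a|c)=\langle q(A|c),q(B|c)\rangle$, and because the marginals $q(A)$ and $q(B)$ are both equal to the uniform distribution $\upsilon$ on $[r]$, the two cross terms against $\upsilon$ cancel, yielding the identity $r\langle q(A|c),q(B|c)\rangle-1=r\langle q(A|c)-\upsilon,\;q(B|c)-\upsilon\rangle$. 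Thus everything is governed by how far $q(A|c)$ and $q(B|c)$ are from uniform, which I would control through Pinsker's inequality (Proposition~\ref{proposition:pinsker}): from $\Inf[q]{C}{A}=\Ex{q(c)}{\Div{q(A|c)}{\upsilon}}$ one gets $\Ex{q(c)}{\stat{q(A|c)-\upsilon}^2}\le\Inf[q]{C}{A}$, and symmetrically for $B$.

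Next comes the threshold step. I would fix $\tau>0$ and call $c$ \emph{bad} (the event $\mathcal B$) if $\stat{q(A|c)-\upsilon}>\tau$ or $\stat{q(B|c)-\upsilon}>\tau$; then Markov's inequality gives $q(\mathcal B)\le(\Inf[q]{C}{A}+\Inf[q]{C}{B})/\tau^2$. For good $c$ one argues that $r\langle q(A|c),q(B|c)\rangle$ is within $O(\tau)$ of $1$ — after truncating the few coordinates $a$ on which one of the conditional distributions is much larger than $1/r$ — so the good $c$ together contribute $O(\tau)$ to the statistical distance. For bad $c$ one uses $|r\langle\cdot,\cdot\rangle-1|\le r\langle\cdot,\cdot\rangle+1$, so their total contribution is at most $q(\mathcal B)+q(\mathcal B\mid A{=}B)\le(\Inf[q]{C}{A}+\Inf[q]{C}{B})/\tau^2+q(\mathcal B\mid A{=}B)$. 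Balancing the $O(\tau)$ term against the $(\Inf[q]{C}{A}+\Inf[q]{C}{B})/\tau^2$ term by taking $\tau\sim(\Inf[q]{C}{A})^{1/3}$ for the $A$-part and $\tau\sim(\Inf[q]{C}{B})^{1/3}$ for the $B$-part produces the advertised $\epsilon=2\sqrt[3]{\Inf[q]{C}{A}}+2\sqrt[3]{\Inf[q]{C}{B}}$.

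The step I expect to be the main obstacle is bounding $q(\mathcal B\mid A{=}B)$: conditioning on $\{A=B\}$, an event of probability only $1/r$, could a priori multiply the mass of $\mathcal B$ by as much as a factor of $r$, which would wreck the estimate. This is exactly the place where the hypothesis that $A$ and $B$ are \emph{independent} — not merely conditionally independent given $C$ — is essential. Since every value of $A$ collides with $B$ with the same probability $1/r$, the event $\{A=B\}$ is independent of $A$, so the $q(\cdot\mid A{=}B)$-marginals of $A$ and of $B$ are still uniform; in particular one may write $q(C\mid A{=}B)=\Ex{a\sim\upsilon}{q(C\mid A{=}a,\,B{=}a)}$, where the inner distribution is $q(C\mid A{=}a)$ reweighted by $r\,q(B{=}a\mid C)$. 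Averaging these reweightings over $a\sim\upsilon$ and $c\sim q$ can then be handled through the conditional mutual information $\Infc[q]{C}{B}{A}$, which equals $\Inf[q]{C}{B}$ because $A-C-B$ and $A\perp B$, keeping every estimate free of any dependence on $r$. The full argument is carried out in Appendix~\ref{appA}.
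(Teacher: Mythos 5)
Your high-level plan — split the $c$'s by how far $q(A|c)$ and $q(B|c)$ are from uniform, invoke Pinsker plus Markov on the mutual information, and balance the threshold against the bad-set mass to get the cube root — matches the paper. But by rewriting the statistical distance as $\tfrac12\,\E_{q(c)}\bigl[\,\lvert\, r\, q(A{=}B\mid c)-1\,\rvert\,\bigr]$ you have committed yourself to a two-sided bound on $r\,q(A{=}B\mid c)$ for good $c$, and the upper bound is false. Write $q(A{=}a\mid c)=\tfrac1r+\alpha_a$, $q(B{=}a\mid c)=\tfrac1r+\beta_a$; your identity gives $r\,q(A{=}B\mid c)-1=r\sum_a\alpha_a\beta_a$. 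The non-negativity of probabilities bounds this from \emph{below} by $-2\tau$, but the positive part $r\sum_{\alpha_a,\beta_a>0}\alpha_a\beta_a$ can be as large as $r\tau^2$ (put essentially all the excess of both conditionals on a single shared value $a_0$). For a constant threshold $\tau$ this blows up with $r$, so the ``good $c$ contribute $O(\tau)$'' claim fails, and the hand-waved truncation of the spiky coordinates does not repair it. You also correctly flag $q(\mathcal B\mid A{=}B)$ as an obstacle for bad $c$, but the proposed fix via $\Infc[q]{C}{B}{A}$ is not carried through, and it is not clear it closes the gap.

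The paper avoids both problems with one move: it uses the one-sided characterization $\stat{q(C)-q(C\mid A{=}B)}=\max_{\CQ}\bigl(q(C\in\CQ)-q(C\in\CQ\mid A{=}B)\bigr)$ from Proposition~\ref{prop:statdef}. For this it suffices to upper-bound $q(c)-q(c\mid A{=}B)$ term by term, which needs only a \emph{lower} bound $q(A{=}B\mid c)\ge(1-\alpha-\beta)/r$ for good $c$ — that is exactly Claim~\ref{claim:equal}, where the $\alpha_a,\beta_a\ge -1/r$ bounds are enough — and for bad $c$ the term $q(c\mid A{=}B)$ is simply discarded, so the bad contribution is just $q(\mathcal B)\le\alpha+\beta$, with no need to control $q(\mathcal B\mid A{=}B)$ at all. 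If you rewrite your expectation as $\sum_{c:\,q(c)>q(c\mid A{=}B)}\bigl(q(c)-q(c\mid A{=}B)\bigr)$, which is an identity since both distributions sum to one, you land on the paper's argument and both of your obstacles disappear. Also note the paper thresholds on KL divergence directly (bad $c$ iff $\Div{q(A|c)}{q(A)}\ge\alpha^2$ or $\Div{q(B|c)}{q(B)}\ge\beta^2$) rather than on statistical distance, which streamlines the Markov step.
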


\begin{proof}[Proof of Theorem \ref{thm:statclose}]
	Since $\stat{p_0(XY)-u(XY)} \le t2^{-(t-1)} = o(1)$ from Claim \ref{claim:zerodist}, this already implies that $\stat{p_0(M)-u(M)}=o(1)$ since $M$ is a function of $X,Y$. So, we focus on bounding $\stat{p_1(M) - u(M)}$. For this, let $V \in [t]$ and let $u(V)$ be the uniform distribution on $[t]$. Recall that 
	$$
	p_1(XY) = \BE_{u(v)}[u(XY|X_{N(v)}=Y_{N(v)} \oplus z_{N(v)})].
	$$
	We will show that under the fooling distribution $u(XY)$, the messages of the protocol contain little information about $X_{N(V)}$ and $Y_{N(V)}$. Lemma~\ref{lemma:condition} and concavity will then complete the proof.

	Note that for any fixed edge $e$, it holds that $u(e \in N(V)) = \frac2t$. Since under $u(XY)$, the binary random variables $X_e$ (resp.\ $Y_e$) and $X_{e'}$ (resp.\ $Y_{e'}$) are mutually independent for any two edges $e$ and $e'$, applying Shearer's inequality (Lemma \ref{lemma:shearer}), we get that
	\begin{align}\label{eqn:shearer}
		\Infc[u]{X_{N(V)}}{M}{V} & \le  \frac2t \cdot \Inf[u]{X}{M} \le \frac2t \cdot |M| \le \eps^3 \text{ and }\notag\\   \Infc[u]{Y_{N(V)}}{M}{V} & \le  \frac2t \cdot \Inf[u]{Y}{M} \le \frac2t \cdot |M| \le \eps^3.
	\end{align}
	Note that for any $v$, shifting $Y_{N(v)}$ by a fixed string $z_{N(v)}$ does not change the mutual information $\Inf[u]{Y_{N(v)}}{M}$. Furthermore, since $X$ and $Y$ are independent $X - M - Y$ holds. Hence, using Proposition \ref{prop:convexstat} and Lemma \ref{lemma:condition} (with $A=X_{N(v)}$, $B=Y_{N(v)}\oplus z_{N(v)}$, $C=M$), it holds that
	\begin{align*}
	\ \stat{p_1(M) - u(M)} &= \stat{\BE_{u(v)}[u(M|X_{N(v)}=Y_{N(v)} \oplus z_{N(v)})] - u (M)} \\
	\					&= \BE_{u(v)}\left[\stat{~u(M|X_{N(v)}=Y_{N(v)}\oplus z_{N(v)}) - u(M)~}\right] \\
	\					&\le 2\BE_{u(v)} \sqrt[3]{\Inf[u]{X_{N(v)}}{M}} + 2\BE_{u(v)} \sqrt[3]{\Inf[u]{Y_{N(v)}}{M}}.
	\end{align*}
	Further, using concavity of the cube root function over non-negative reals and \eqref{eqn:shearer}, we get that
	\begin{align*}
	\	\stat{p_1(M) - u(M)} &\le 2 \sqrt[3]{\BE_{u(v)} \Inf[u]{X_{N(v)}}{M}} + 2 \sqrt[3]{\BE_{u(v)} \Inf[u]{Y_{N(v)}}{M}}\\
	\					  &= 2 \sqrt[3]{\Infc[u]{X_{N(V)}}{M}{V}} + 2\sqrt[3]{\Infc[u]{X_{N(V)}}{M}{V}} \stackrel{\eqref{eqn:shearer}}{\le} 4\eps.
	\end{align*}
	This shows that for $t$ large enough, $\stat{p_0(M)-u(M)} \le 4 \eps$ and $\stat{p_1(M)-u(M)}\le 4\eps$, concluding the proof.
\end{proof}

\section{Quantum Communication Lower Bound}
\label{sec:quantum}

The proof for the quantum case proceeds similarly to the classical case with some minor differences. Let $p_0(XY)$, and $u(XY)$ be as before: $p_0(XY)$ is uniform on $\sink^{-1}(0)$ and $u(XY)$ is the uniform distribution. Fix an $\ell$-qubit protocol where per our convention $\ell$ is even as Bob sends the last message. Let $\bstate{\omicron}{\ell}$ and $\bstate{\upsilon}{\ell}$ be the final pure states of the protocol on distributions $p_0(XY)$ and $u(XY)$, respectively. Let $V \in [t]$, let $u(V)$ denote the uniform distribution on $[t]$ and let $\iota^{v,(\ell)}$ denote the final pure state of the protocol when run on distribution $u(XY|X_{N(v)}=Y_{N(v)} \oplus z_{N(v)})$, that is, when vertex~$v$ is the sink. Note that the distribution  $u(XY|X_{N(v)}=Y_{N(v)} \oplus z_{N(v)})$ is supported on only the $1$-inputs to the $\sink$ function.  If the protocol computes the $\sink$ function with error at most $1/3$ on every input, then Lemma~\ref{prop:qerror} implies 
\begin{equation}\label{eq:finalstatesclose}
\BE_{u(v)}\left[\norm{\fixedvstate{\ell}{C}- \pstatealt{\omicron}{\ell}{C}}\right]\ge 2/3.
\end{equation}
We are going to argue that if $\ell \ll t^{1/3}$, then the distribution $u(XY)$ is also a fooling distribution for quantum protocols. That is, it must be the case that both $\pstatealt{\omicron}{\ell}{C} \approx \pstatealt{\upsilon}{\ell}{C}$ and, for a typical vertex~$v$, $\fixedvstate{\ell}{C}\approx \pstatealt{\upsilon}{\ell}{C}$
(and hence $\fixedvstate{\ell}{C} \approx  \pstatealt{\omicron}{\ell}{C}$ for a typical $v$).

\begin{theorem}\label{thm:traceclose} 
Let $\eps > 0$ be a constant and $t$ be large enough. Then, for any quantum protocol for the $\sink$ function with communication complexity at most $\ell = \frac18 \eps^{2/3} t^{1/3}$, we have that 
$$
\BE_{u(v)}\left[\norm{\fixedvstate{\ell}{C}- \pstatealt{\omicron}{\ell}{C}}\right]\le \eps.
$$
\end{theorem}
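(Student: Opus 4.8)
The plan is to mimic the proof of Theorem~\ref{thm:statclose} step for step, replacing statistical distance by trace distance, Shearer's inequality (Lemma~\ref{lemma:shearer}) by Quantum Shearer's Lemma (Lemma~\ref{lemma:qshearer}), and the classical conditioning lemma (Lemma~\ref{lemma:condition}) by the quantum conditioning lemma of~\cite{ATYY17} (Lemma~\ref{lemma:qcond}). First I would dispose of the easy half: since Claim~\ref{claim:zerodist} gives $\stat{p_0(XY) - u(XY)} \le t2^{-(t-1)} = o(1)$, Lemma~\ref{prop:qindistinguish} yields $\norm{\pstatealt{\omicron}{\ell}{C} - \pstatealt{\upsilon}{\ell}{C}} = o(1)$, so by the triangle inequality it suffices to prove that $u(XY)$ is a fooling distribution for the protocol, i.e.\ that $\BE_{u(v)}[\norm{\fixedvstate{\ell}{C} - \pstatealt{\upsilon}{\ell}{C}}] \le \eps - o(1)$.

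For the fooling step, I would first note (Proposition~\ref{prop:qprotocol}) that the protocol applies the same unitaries on every input, so the classical input registers $X\XR Y\YR$ stay diagonal throughout and, for any vertex $v$, the state $\iota^{v,(\ell)}$ is precisely the state obtained from $\bstate{\upsilon}{\ell}$ by conditioning these registers on the collision event $X_{N(v)} = Y_{N(v)} \oplus z_{N(v)}$; the purifying registers $\XR_{N(v)}$ and $\YR_{N(v)}$ will act as the ``collision witnesses'' that Lemma~\ref{lemma:qcond} needs. Since $u(XY)$ is a product distribution (indeed a product over all $n$ edges), the Information Cost lemma gives $\Inf[\upsilon^{(\ell)}]{X}{Y\YR BC} \le 2\ell$ and $\Inf[\upsilon^{(\ell)}]{Y}{X\XR AC} \le 2\ell$. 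A fixed edge lies in $N(V)$ with probability $2/t$ over a uniform vertex $V$, and $\pstatealt{\upsilon}{\ell}{X}$ and $\pstatealt{\upsilon}{\ell}{Y}$ are product states across the edge-registers, so Quantum Shearer's Lemma gives
\[
\Infc[\upsilon^{(\ell)}]{X_{N(V)}}{Y\YR BC}{V} \le \tfrac{4\ell}{t}, \qquad \Infc[\upsilon^{(\ell)}]{Y_{N(V)}}{X\XR AC}{V} \le \tfrac{4\ell}{t}
\]
(shifting $Y_{N(v)}$ by the fixed string $z_{N(v)}$ changes none of these quantities). Then I would apply Lemma~\ref{lemma:qcond} for each vertex $v$: it bounds $\norm{\fixedvstate{\ell}{C} - \pstatealt{\upsilon}{\ell}{C}}$ (after passing from a possibly larger register to $C$ via Proposition~\ref{prop:marginal-trace}) by a sum of increasing concave functions of the two mutual informations $\Inf[\upsilon^{(\ell)}]{X_{N(v)}}{Y\YR BC}$ and $\Inf[\upsilon^{(\ell)}]{Y_{N(v)}}{X\XR AC}$ --- and, in contrast to the classical case, the bound may also grow polynomially with the communication $\ell$, which is what degrades the final estimate from $\Omega(t)$ to $\Omega(t^{1/3})$. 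Taking the expectation over $v \sim u(V)$ and using concavity of these functions to move the expectation inside, followed by the two Shearer bounds above, bounds $\BE_{u(v)}[\norm{\fixedvstate{\ell}{C} - \pstatealt{\upsilon}{\ell}{C}}]$ by the same expression evaluated at $4\ell/t$; plugging in the hypothesis $\ell = \tfrac18\eps^{2/3}t^{1/3}$ makes this at most $\eps - o(1)$ for $t$ large, and combining with the easy half finishes the proof.

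The step I expect to be the main obstacle is the correct invocation of Lemma~\ref{lemma:qcond}. A quantum protocol has no transcript on which to condition, so the conditioning must be carried out on the pure protocol state itself using the purifying registers together with Uhlmann's theorem and control-unitary arguments (Propositions~\ref{prop:uhlmann} and~\ref{prop:control}); one must verify that the hypotheses of Lemma~\ref{lemma:qcond} are met --- namely the product/Markov-type structure of $\bstate{\upsilon}{\ell}$ under the fooling distribution, together with the identification of $X_{N(v)}$ and $Y_{N(v)} \oplus z_{N(v)}$ as the colliding classical registers and of the channel (and the remaining registers) as the environment. A secondary technical point is ensuring that the expectation over vertices interacts cleanly with the concavity in Lemma~\ref{lemma:qcond} and with the Shearer estimate, exactly as concavity of the cube root was used at the end of the proof of Theorem~\ref{thm:statclose}.
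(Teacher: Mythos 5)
Your proposal follows the paper's proof essentially step for step: dispose of $\norm{\pstatealt{\omicron}{\ell}{C} - \pstatealt{\upsilon}{\ell}{C}} = o(1)$ via Claim~\ref{claim:zerodist} and Lemma~\ref{prop:qindistinguish}, reduce $\iota^{v,(\ell)}$ to a conditioning of $\bstate{\upsilon}{\ell}$ on the collision $X_{N(v)}=Y_{N(v)}\oplus z_{N(v)}$, apply Lemma~\ref{lemma:qcond}, bound the resulting mutual informations with Quantum Shearer together with the information-cost bound, and pull the expectation over $v$ inside by concavity of $\sqrt{\cdot}$, finishing with the triangle inequality --- this is exactly what the paper does (with the small intermediate step packaged as Lemma~\ref{claim:qfooling}). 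One imprecision to fix: Lemma~\ref{lemma:qcond} requires the mutual-information bounds $\eps_{s}$ for \emph{every} round $s\le\ell$ (and its conclusion is $4\sqrt2\sum_{s=1}^\ell\sqrt{\eps_s}$), not just the final-round quantities $\Inf[\upsilon^{(\ell)}]{X_{N(v)}}{Y\YR BC}$ and $\Inf[\upsilon^{(\ell)}]{Y_{N(v)}}{X\XR AC}$ as you wrote; you should therefore invoke Quantum Shearer at each round $s$, obtaining the uniform bound $\Infc[\upsilon^{(s)}]{X_{N(V)}}{Y\YR BC}{V}\le 4s/t\le 4\ell/t$ (and symmetrically for $Y$), after which the sum over $\ell$ rounds gives the $O(\sqrt{\ell^3/t})$ estimate you anticipate.
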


Combining this theorem with \eqref{eq:finalstatesclose} immediately implies the quantum communication lower bound of $\Omega(t^{1/3})$ promised by Theorem~\ref{th:qlogrank}.

First, $\pstatealt{\omicron}{\ell}{C} \approx \pstatealt{\upsilon}{\ell}{C}$ is clear because $p_0(X,Y)\approx u(X,Y)$ (see Lemma~\ref{prop:qindistinguish}). To prove that $\fixedvstate{\ell}{C}\approx\pstatealt{\upsilon}{\ell}{C}$ for a typical $v$, we will use Lemma~3.6 from \cite{ATYY17} (we state it a bit differently here to make it easier for our application). This allows us to relate the fooling distribution with the input distribution similar to the role of Lemma~\ref{lemma:condition} in the proof for the classical case. The proof of this lemma is an involved round-by-round induction; we include a proof in Appendix~\ref{appB} for completeness. 

\begin{restatable}[Lemma 3.6 in \cite{ATYY17}]{lemma}{qcondition} \label{lemma:qcond}
	Let $X=X_1X_2$ and $Y=Y_1Y_2$ be random variables where $X,Y \in \bits^n$. Let $u'(XY)$ be the uniform distribution on $XY$ and let $q(XY)=u'(XY|X_1=Y_1)$ be another distribution. For every $s \le r$, let $\bstate{\rho}{s}$ and $\bstate{\sigma}{s}$ denote the state of a quantum protocol after $s$ rounds on distributions $u'(XY)$ and $q(XY)$ respectively. If for every $s \le r$,  we have
	\[ \Inf[\rho^{(s)}]{X_1}{Y\YR BC} \le \eps_s \text{ for odd } s, \text{ and } \Inf[\rho^{(s)}]{Y_1}{X\XR AC} \le \eps_s \text{ for even } s,\]
then it holds that
$\norm{\pstatealt{\sigma}{r}{X_1Y_1 C} - \pstatealt{\sigma}{r}{X_1Y_1} \otimes \pstatealt{\rho}{r}{C}}  \le 4\sqrt{2}\sum_{s=1}^r \sqrt{\eps_s}.$\\
	
\end{restatable}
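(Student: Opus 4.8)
The plan is to prove \lemmaref{lemma:qcond} by a round-by-round induction that keeps track, after each round $s$, of how close the conditioned state $\sigma^{(s)}$ is to the unconditioned state $\rho^{(s)}$ on one party's registers, paying only $O(\sqrt{\eps_s})$ per round. The starting point is a structural observation: the protocol unitaries use the input registers $X,Y$ only as controls and never touch the purifying registers $\XR,\YR$, so the conditioning $X_1=Y_1$ that turns $u'$ into $q$ is implemented by the projector $P$ onto the diagonal of $\XR_1\YR_1$ (equivalently of $X_1Y_1$, as these registers are classical and frozen), and $P$ commutes with every round. Hence $\sigma^{(s)}=P\rho^{(s)}P/\Tr(P\rho^{(s)})$ for every $s$, with $\Tr(P\rho^{(s)})=2^{-|X_1|}$ independent of $s$. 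In particular, since $q(X)=u'(X)$ and $q(Y)=u'(Y)$ (only the joint distribution changes), \propositionref{prop:control} gives a unitary $W^{(0)}$ on Bob's registers $Y\YR$, controlled by $X$, with $W^{(0)}\ket{\rho^{(0)}}=\ket{\sigma^{(0)}}$; this is the ``correction'' whose evolution we track.

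The induction maintains that $\he{\sigma^{(s)}_{X\XR AC}}{\rho^{(s)}_{X\XR AC}}\le\delta^A_s$, where $\delta^A$ grows only on Bob's (even) rounds, and symmetrically $\he{\sigma^{(s)}_{Y\YR BC}}{\rho^{(s)}_{Y\YR BC}}\le\delta^B_s$ with $\delta^B$ growing only on Alice's (odd) rounds; the two statements are tracked independently. At $s=0$ both quantities are $0$, since $\sigma^{(0)}$ and $\rho^{(0)}$ agree on each side ($q(X)=u'(X)$, $q(Y)=u'(Y)$, and $ABC$ is still a product state). For the inductive step at an odd round $s$ (Alice applies $U^{(s)}$ on $XAC$): the bound on the $X\XR AC$ side is preserved verbatim, because $U^{(s)}$ acts inside $X\XR AC$ and Hellinger distance is unitarily invariant and nonincreasing under marginals (\propositionref{prop:invar}, \propositionref{prop:marginal-trace}); the bound on the $Y\YR BC$ side degrades, and to control it I would use \propositionref{prop:uhlmann} to realize the round-$(s-1)$ closeness by a correction unitary supported on the complementary registers, push it through $U^{(s)}$ (after which it acquires a component on the channel $C$), and then invoke the hypothesis $\Inf[\rho^{(s)}]{X_1}{Y\YR BC}\le\eps_s$ together with the Pinsker-type bound $\he{\rho_{AB}}{\rho_A\otimes\rho_B}\le\sqrt{\Inf[\rho]{A}{B}}$ (\propositionref{prop:he-and-mutinf}) to show that this channel component is $O(\sqrt{\eps_s})$-close to a unitary that leaves $\rho^{(s)}_C$ alone, re-establishing the bound with $\delta^B_s=\delta^B_{s-1}+O(\sqrt{\eps_s})$. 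Even rounds are the mirror image, with $Y_1$, $X\XR AC$, and the hypothesis $\Inf[\rho^{(s)}]{Y_1}{X\XR AC}\le\eps_s$ playing the roles of $X_1$, $Y\YR BC$, and the odd-round hypothesis. Thus $\delta^A_r=O\!\big(\sum_{s\le r}\sqrt{\eps_s}\big)$.

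To finish, apply the induction at the last (even) round $r$. From $\delta^A_r$ and \propositionref{prop:he-and-trace}, $\sigma^{(r)}_{X_1C}$ is $O\!\big(\sum_{s\le r}\sqrt{\eps_s}\big)$-close to $\rho^{(r)}_{X_1C}$ in trace norm. Since round $r$ is Bob's and acts locally on $Y\YR BC$, \propositionref{prop:local} gives $\Inf[\rho^{(r)}]{X_1}{C}\le\Inf[\rho^{(r)}]{X_1}{Y\YR BC}=\Inf[\rho^{(r-1)}]{X_1}{Y\YR BC}\le\eps_{r-1}$, so by Propositions~\ref{prop:he-and-mutinf} and~\ref{prop:he-and-trace}, $\rho^{(r)}_{X_1C}$ is $O(\sqrt{\eps_{r-1}})$-close to $\rho^{(r)}_{X_1}\otimes\rho^{(r)}_C=2^{-|X_1|}\I\otimes\rho^{(r)}_C$; hence $\sigma^{(r)}_{X_1C}$ is $O\!\big(\sum_{s\le r}\sqrt{\eps_s}\big)$-close to $2^{-|X_1|}\I\otimes\rho^{(r)}_C$. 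Finally, in $\sigma^{(r)}$ the register $X_1$ is classical and always equals $Y_1$, so the copy isometry $\ket[X_1]{x_1}\mapsto\ket[X_1]{x_1}\ket[Y_1]{x_1}$ sends $\sigma^{(r)}_{X_1C}\mapsto\sigma^{(r)}_{X_1Y_1C}$ and $2^{-|X_1|}\I\otimes\rho^{(r)}_C\mapsto\sigma^{(r)}_{X_1Y_1}\otimes\rho^{(r)}_C$; since trace norm is unchanged under isometries, this yields $\norm{\sigma^{(r)}_{X_1Y_1C}-\sigma^{(r)}_{X_1Y_1}\otimes\rho^{(r)}_C}=O\!\big(\sum_{s=1}^{r}\sqrt{\eps_s}\big)$, which gives the stated bound once the constants are tracked carefully.

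The step I expect to be the main obstacle is exactly the degradation estimate in the inductive step, namely showing that passing through the other party's round costs only $O(\sqrt{\eps_s})$. The naive alternative -- trace out one party and then condition on $X_1=Y_1$ all at once -- fails, because that event has probability $2^{-|X_1|}$ and conditioning on so rare an event can inflate trace distances by a factor $2^{|X_1|}$; the round-by-round structure is precisely the device that lets the conditioning be absorbed one message at a time, so that each message contributes only $\sqrt{\eps_s}$. Keeping the correction unitaries localized to a single party's registers (so that they disturb neither $C$ nor the pair $(X_1,Y_1)$) while simultaneously invoking an information bound about the \emph{other} party is the delicate bookkeeping that makes the argument of \cite{ATYY17} intricate, and I would follow their proof for that part.
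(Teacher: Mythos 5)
Your high-level scaffolding matches the paper's: a round-by-round induction using Uhlmann's theorem and the per-round mutual-information hypotheses, with the central difficulty correctly identified as showing that each round of the other party costs only $O(\sqrt{\eps_s})$. Your opening observation that $P\rho^{(s)}P/\Tr(P\rho^{(s)})=\sigma^{(s)}$ with $\Tr(P\rho^{(s)})=2^{-|X_1|}$ is correct, and the copy-isometry endgame (mapping $\sigma^{(r)}_{X_1C}\mapsto\sigma^{(r)}_{X_1Y_1C}$ and $2^{-|X_1|}I\otimes\rho^{(r)}_C\mapsto\sigma^{(r)}_{X_1Y_1}\otimes\rho^{(r)}_C$) is a clean alternative to the paper's final step and is correct.

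However, there is a genuine gap, and it sits exactly where you predict: the degradation estimate. Your induction invariant is $\he{\sigma^{(s)}_{Y\YR BC}}{\rho^{(s)}_{Y\YR BC}}\le\delta^B_s$ (and symmetrically), and the step you sketch is: take an Uhlmann unitary $V$ on $X\XR A$ certifying the round-$(s-1)$ closeness, conjugate it by Alice's $U^{(s)}_{XAC}$ to get $\tilde V$ on $X\XR AC$, and then bound $\he{\bigl(\tilde V\rho^{(s)}\tilde V^\dag\bigr)_{Y\YR BC}}{\rho^{(s)}_{Y\YR BC}}$ by $O(\sqrt{\eps_s})$ using $\Inf[\rho^{(s)}]{X_1}{Y\YR BC}\le\eps_s$. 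But a generic Uhlmann unitary $V$ on $X\XR A$ has no useful structure relative to $U^{(s)}_{XAC}$: they overlap on $A$, so $\tilde V=U^{(s)}_{XAC}V(U^{(s)}_{XAC})^\dag$ is an essentially arbitrary unitary on $X\XR AC$, and nothing about the channel component of $\tilde V$ is tied to $X_1$ alone. The hypothesis $\Inf[\rho^{(s)}]{X_1}{Y\YR BC}\le\eps_s$ only constrains the correlation of the \emph{$X_1$} register with $Y\YR BC$; it says nothing about how an arbitrary unitary on $X\XR AC$ perturbs the $Y\YR BC$ marginal. This is precisely why the paper's proof introduces the auxiliary ``boldface'' registers $\BX\BY$ and constructs, in \claimref{claim:exist}, Uhlmann unitaries $V^{(s)}_{\X_1\BX X'_2 A}$ that deliberately \emph{avoid acting on $X_1$} (they act on $\overline{X_1}$, the fresh registers $\BX$, $X_2\overline{X_2}$, and $A$). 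This register discipline is what lets these corrections commute with the other party's protocol unitaries and with $W_{X_1 Y'_1}$, and it is not achievable if one insists on the simpler invariant you propose, where the Uhlmann correction lives on all of $X\XR A$. The induction in \claimref{claim:induction} then tracks the Hellinger distance between the twisted states $\theta^{(s)}$ and $\lambda^{(s)}$, not between direct marginals of $\sigma^{(s)}$ and $\rho^{(s)}$. In short: the broad shape of the argument is right, but the invariant you chose would need the correction unitaries to have commutation properties they do not have in general, and you acknowledge deferring to \cite{ATYY17} for exactly this step. That step is the proof, not a detail of it, so the proposal is incomplete on the crucial point.
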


Fix a vertex $v\in[t]$. Define $\eps_{v,s} = \Inf[\upsilon^{(s)}]{X_{N(v)}}{Y\YR BC}$ for odd rounds~$s$, and $\eps_{v,s}=\Inf[\upsilon^{(s)}]{Y_{N(v)}}{X\XR AC}$ for even rounds~$s$.
If these $\eps_{v,s}$'s are mostly small, then $\fixedvstate{\ell}{C} \approx  \pstatealt{\upsilon}{\ell}{C}$:

\begin{lemma}\label{claim:qfooling} 
		$\norm{\iota^{v,(\ell)}_{C} - \pstatealt{\upsilon}{\ell}{C}} \le 4\sqrt{2}\sum_{s=1}^{\ell} \sqrt{\eps_{v,s}}.$
\end{lemma}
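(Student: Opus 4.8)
The plan is to apply Lemma~\ref{lemma:qcond} directly, with the right identification of registers. Recall that in Lemma~\ref{lemma:qcond} the inputs are split as $X = X_1 X_2$ and $Y = Y_1 Y_2$, the ``uniform'' distribution is $u'(XY)$, and the conditioned distribution is $q(XY) = u'(XY\mid X_1 = Y_1)$. For our setting, fix the vertex $v$ and identify $X_1 = X_{N(v)}$, $X_2 = X_{[n]\setminus N(v)}$, $Y_1 = Y_{N(v)}$ (up to the fixed shift by $z_{N(v)}$), and $Y_2 = Y_{[n]\setminus N(v)}$. Since the uniform distribution $u(XY)$ over $\bits^{n+n}$ matches $u'$ under this relabeling, the protocol states $\upsilon^{(s)}$ play the role of $\rho^{(s)}$, and the states $\iota^{v,(s)}$ (run on $u(XY\mid X_{N(v)} = Y_{N(v)}\oplus z_{N(v)})$) play the role of $\sigma^{(s)}$.

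The one wrinkle is the shift by the fixed string $z_{N(v)}$: the collision event in our setup is $X_{N(v)} = Y_{N(v)} \oplus z_{N(v)}$ rather than $X_1 = Y_1$. But this is handled by composing the protocol with the fixed, input-independent unitary that XORs $z_{N(v)}$ into the relevant bits of Bob's copy of $Y$ (equivalently, one relabels Bob's input register by the bijection $y \mapsto y \oplus z$ on the $N(v)$-coordinates). This is a reversible relabeling that does not change qubit counts, commutes with the rest of the protocol's structure, leaves $\upsilon^{(s)}_C$ and $\iota^{v,(s)}_C$ unchanged (since the channel marginal is invariant under such a local relabeling and, on the uniform distribution, the relabeled state has the same marginals), and turns the event $X_{N(v)} = Y_{N(v)}\oplus z_{N(v)}$ into the plain equality event required by Lemma~\ref{lemma:qcond}. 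Hence we may apply the lemma to the relabeled protocol without loss of generality.

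With this identification, the hypotheses of Lemma~\ref{lemma:qcond} are precisely $\Inf[\upsilon^{(s)}]{X_{N(v)}}{Y\YR BC} \le \eps_s$ for odd $s$ and $\Inf[\upsilon^{(s)}]{Y_{N(v)}}{X\XR AC} \le \eps_s$ for even $s$, which are exactly the quantities $\eps_{v,s}$ defined just before the statement (the bound for each $s$ being $\eps_s = \eps_{v,s}$). The conclusion of Lemma~\ref{lemma:qcond} then reads $\norm{\iota^{v,(\ell)}_{X_{N(v)}Y_{N(v)}C} - \iota^{v,(\ell)}_{X_{N(v)}Y_{N(v)}} \otimes \upsilon^{(\ell)}_C} \le 4\sqrt{2}\sum_{s=1}^{\ell}\sqrt{\eps_{v,s}}$. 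Tracing out the $X_{N(v)}Y_{N(v)}$ registers and using Proposition~\ref{prop:marginal-trace} (monotonicity of trace distance under partial trace) gives $\norm{\iota^{v,(\ell)}_C - \iota^{v,(\ell)}_{C}} $... more carefully: the partial trace of $\iota^{v,(\ell)}_{X_{N(v)}Y_{N(v)}} \otimes \upsilon^{(\ell)}_C$ over $X_{N(v)}Y_{N(v)}$ is just $\upsilon^{(\ell)}_C$, while the partial trace of $\iota^{v,(\ell)}_{X_{N(v)}Y_{N(v)}C}$ is $\iota^{v,(\ell)}_C$, so we obtain $\norm{\iota^{v,(\ell)}_C - \upsilon^{(\ell)}_C} \le 4\sqrt{2}\sum_{s=1}^{\ell}\sqrt{\eps_{v,s}}$, which is exactly the claimed bound.

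The only genuinely non-routine point is checking that Lemma~\ref{lemma:qcond} applies verbatim after absorbing the shift $z_{N(v)}$, i.e., that permuting/relabeling Bob's input does not affect the channel marginals or the information quantities involved; everything else is bookkeeping about which register is $X_1$ and which is $Y_1$. I expect that to be the main (minor) obstacle, and it is dispatched by the observation that such a relabeling is an input-independent reversible operation under which all the relevant marginal states and mutual informations are invariant.
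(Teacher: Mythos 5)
Your proof is correct and follows essentially the same route as the paper: the paper also invokes Lemma~\ref{lemma:qcond} with the identification $X_1=X_{N(v)}$, $Y_1=Y_{N(v)}\oplus z_{N(v)}$ (absorbing the fixed shift directly into the definition of $Y_1$ rather than into the protocol, which is a slightly cleaner way to say the same thing as your relabeling argument), and then concludes via monotonicity of trace distance under partial trace over $X_1Y_1$. The only cosmetic difference is that the paper handles the ``relabeling preserves states and information quantities'' point with a brief appeal to Proposition~\ref{prop:qprotocol}, whereas you spell it out in more detail.
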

\begin{proof}
To apply Lemma \ref{lemma:qcond}, we will choose $X_1=X_{N(v)}$, $X_2=X_{N(v)^c}$ and $Y_1=Y_{N(v)}\oplus z_{N(v)}, Y_2 = Y_{N(v)^c}$ and $u'(XY) = u(XY)$. Note that $u'(XY)$ is still the uniform distribution. Furthermore, using Proposition \ref{prop:qprotocol}, for every $s$, the state $\rho^{(s)}$ in Lemma \ref{lemma:qcond} is the same as $\mu^{(s)}$  after a suitable relabeling. Hence, it follows that $\Inf[\rho^{(s)}]{X_1}{Y\YR BC} = \Inf[\upsilon^{(s)}]{X_{N(v)}}{Y\YR BC} = \eps_{v,s}$ for odd $s$, and $\Inf[\rho^{(s)}]{Y_1}{X\XR AC} = \Inf[\upsilon^{(s)}]{Y_{N(v)}}{X\XR AC} = \eps_{v,s}$ for even $s$. 

Now, we apply Lemma \ref{lemma:qcond}. Since $\Tr_{X_1Y_1}\left(\iota^{v,(\ell)}_{X_1Y_1C}\right) = \iota^{v,(\ell)}_{C}$ and $\Tr_{X_1Y_1}\left({\iota}^v_{X_1Y_1} \otimes \pstatealt{\upsilon}{\ell}{C}\right) = \pstatealt{\upsilon}{\ell}{C}$, we get that  
$\norm{\iota^{v,(\ell)}_{C} - \pstatealt{\upsilon}{\ell}{C}} \le \norm{\iota^{v,(\ell)}_{X_1Y_1C} - {\iota}^v_{X_1Y_1} \otimes \pstatealt{\upsilon}{\ell}{C}} \le 4\sqrt{2}\sum_{s=1}^{\ell} \sqrt{\eps_{v,s}}$.
\end{proof}

We move on to the proof of the theorem now.

\begin{proof}[Proof of Theorem \ref{thm:traceclose}] 
	Recall that $\stat{p_0(XY)-u(XY)} \le t2^{-(t-1)} = o(1)$ from Claim \ref{claim:zerodist}, and using Lemma \ref{prop:qindistinguish}, this already implies that $\norm{\pstatealt{\omicron}{\ell}{C} - \pstatealt{\upsilon}{\ell}{C}} = o(1)$.
	
	Let us turn to bounding $\BE_{u(v)}\left[\norm{\fixedvstate{\ell}{C}- \pstatealt{\upsilon}{\ell}{C}}\right]$. We first show that under the fooling distribution $u(XY)$, the states of the quantum protocol contain little information about $X_{N(V)}$ and $Y_{N(V)}$. Then, applying Claim \ref{claim:qfooling} and appealing to concavity will complete the proof similar to the classical case.

	Note that for any fixed edge $e$, it holds that $u(e \in N(V)) = \frac2t$, and also recall that under $u(XY)$, the random variables $X_e$ (resp.\ $Y_e$) and $X_{e'}$ (resp.\ $Y_{e'}$) are mutually independent for any two edges $e$ and $e'$. Therefore, using Proposition~\ref{prop:qprotocol} the state $\pstatealt{\upsilon}{\ell}{X} = \tens{e} \pstatealt{\upsilon}{\ell}{X_e}$ (and similarly $\pstatealt{\upsilon}{\ell}{Y} = \tens{e} \pstatealt{\upsilon}{\ell}{Y_e}$). Hence, applying the quantum version of Shearer's inequality (Lemma~\ref{lemma:qshearer}) and using Proposition~\ref{prop:qmutinfbound}, for every round $s \le \ell$ we get that
	\begin{align}\label{eqn:qshearer}
		 	 \Infc[\upsilon^{(s)}]{X_{N(V)}}{Y\YR BC}{V} &\le \frac2t \cdot  \Inf[\upsilon^{(s)}]{X}{Y\YR BC} \le \frac{4\ell}{t} \text{ and} \notag\\
		\Infc[\upsilon^{(s)}]{Y_{N(V)}}{X\XR AC}{V} &\le \frac2t \cdot \Inf[\upsilon^{(s)}]{Y}{X\XR AC} \le \frac{4\ell}{t}.
	\end{align}
Further using Lemma~\ref{claim:qfooling}, concavity, and \eqref{eqn:qshearer} we get 
\begin{align*}
	\  \BE_{u(v)}\left[\norm{\fixedvstate{\ell}{C}- \pstatealt{\upsilon}{\ell}{C}}\right] &\le 4\sqrt{2} ~\BE_{u(v)}\left[\sum_{s=1}^{\ell} \sqrt{\eps_{v,s}}\right] \\
	\		&\le 4\sqrt{2} ~\sum_{s=1}^{\ell} \sqrt{\BE_{u(v)}[\eps_{v,s}]} \\
	\		&=4\sqrt{2} \sum_{\substack{s=1\\s \text{ odd}}}^{\ell} \sqrt{\Infc[\upsilon^{(s)}]{X_{N(V)}}{Y\YR BC}{V}}  + 4 \sqrt{2}\sum_{\substack{s=1\\s \text{ even}}}^{\ell}  \sqrt{\Infc[\upsilon^{(s)}]{Y_{N(V)}}{X\XR AC}{V}}\\
	\ &\le4\sqrt{2}\cdot\frac\ell{2} \sqrt\frac{4\ell}{t} + 4\sqrt{2}\cdot \frac\ell{2}\sqrt\frac{4\ell}{t}= \sqrt\frac{128\ell^3}{t} \le \frac\eps2.
	\end{align*}
Using the triangle inequality, we get that for large enough $t$, the following holds
	\[ \BE_{u(v)}\left[\norm{\fixedvstate{\ell}{C}- \pstatealt{\omicron}{\ell}{C}}\right] \le \BE_{u(v)}\left[\norm{\fixedvstate{\ell}{C}- \pstatealt{\upsilon}{\ell}{C}}\right] + \norm{\pstatealt{\omicron}{\ell}{C} - \pstatealt{\upsilon}{\ell}{C}} \le \frac\eps2 + o(1) \le \eps.\qedhere
	\]
\end{proof}

\section{Future Work}

One obvious remaining open problem is to close the gap between the current lower bound of $\Omega(t^{1/3})$ on the quantum communication complexity of the $\sink$ function, and the best known upper bound of $\widetilde{O}(\sqrt{t})$.
We conjecture the upper bound is essentially tight.
One way to improve our lower bound would be to improve Lemma~\ref{claim:qfooling}, maybe with a different distance measure. 

The main question left open by this work, as well as by~\cite{CMS18,AGT18}, is of course 
the status of the (non-approximate) log-rank conjecture itself.
The proof that the $\sink$ function has low approximate rank crucially uses the fact that the identity matrix has low approximate rank (which follows from the fact that the equality function has low randomized communication complexity). In contrast, the actual (non-approximate) rank of the identity matrix is as large as its dimension. Accordingly, it is not so clear what examples like the $\sink$ function suggest for the status of the log-rank conjecture itself. We are not sure what to conjecture about that conjecture.

\subsection*{Acknowledgments.} 
Thanks to Sander Gribling for useful discussions and a heartfelt gratitude to him for careful proofreading of our first draft. Thanks to Siva Ramamoorthy for helpful comments. We also thank Arkadev Chattopadhyay, Nikhil Mande and Suhail Sherif for answering questions about their paper~\cite{CMS18}, as well as Anurag Anshu, Naresh Goud Boddu and Dave Touchette for helpful correspondence about their independent work~\cite{AGT18} and helpful comments on our draft.
RdW thanks Srinivasan Arunachalam for pointing him to~\cite{CMS18} when it had just appeared on ECCC, and for helpful comments on the draft.

\bibliographystyle{alpha}
{\small
\bibliography{logrank}
}

\appendix

\section{Proof of Lemma \ref{lemma:condition}}\label{appA}

\condition*

\begin{proof}
We assume $\Inf[q]{C}{A}, \Inf[q]{C}{B} \le 1$, since otherwise the lemma is trivially true. For brevity, set 
\begin{align*}
	\alpha^3 &= \Inf[q]{C}{A} = \Ex{q(c)}{\Div{q(A|c)}{q(A)}} \text{ and } \beta^3 = \Inf[q]{C}{B} = \Ex{q(c)}{\Div{q(B|c)}{q(B)}}. 
\end{align*}

Call $c$ \emph{bad} if $\Div{q(A|c)}{q(A)} \geq \alpha^{2}$ or $\Div{q(B|c)}{q(B)} \geq \beta^{2}$, and \emph{good} otherwise. By Markov's inequality, the probability that $C$ is bad is at most $\alpha+\beta$. To prove Lemma \ref{lemma:condition}, we need the following claim proved in \cite{GKR14}. For completeness, we include the short proof after finishing the proof of Lemma \ref{lemma:condition}. 

\begin{claim} \label{claim:equal} Given independent random variables $\bA,\bB \in [r]$ in  a probability space $q$, if $\bA$ is $\gamma_1$-close to uniform, and $\bB$ is $\gamma_2$-close to uniform, then $q(\bA=\bB) \ge \dfrac{1-\gamma_1-\gamma_2}{r}$.\\
\end{claim}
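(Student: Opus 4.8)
The plan is to express the collision probability as a bilinear form in the two near-uniform distributions and compare it term by term to the fully uniform case. Write $a_i = q(\bA = i)$ and $b_i = q(\bB = i)$ for $i \in [r]$. By independence of $\bA$ and $\bB$ we have $q(\bA = \bB) = \sum_{i \in [r]} a_i b_i$. First I would substitute $a_i = \tfrac1r + (a_i - \tfrac1r)$ and $b_i = \tfrac1r + (b_i - \tfrac1r)$, expand the product, and use $\sum_i a_i = \sum_i b_i = 1$ to kill the two cross terms, obtaining
\[
q(\bA = \bB) = \frac1r + \sum_{i \in [r]} \Bigl(a_i - \tfrac1r\Bigr)\Bigl(b_i - \tfrac1r\Bigr).
\]
So it suffices to prove $\sum_i (a_i - \tfrac1r)(b_i - \tfrac1r) \ge -\tfrac{\gamma_1+\gamma_2}{r}$.

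Next I would split the sum according to the signs of the two factors. Indices $i$ where $a_i - \tfrac1r$ and $b_i - \tfrac1r$ have the same sign contribute a non-negative amount and can be dropped; only opposite-sign indices need to be controlled. The one step requiring a little care is the observation that if $a_i < \tfrac1r$ then $\tfrac1r - a_i \le \tfrac1r$ since $a_i \ge 0$, and similarly for $b_i$. Consequently an opposite-sign term with $a_i < \tfrac1r < b_i$ is at least $-\tfrac1r\bigl(b_i - \tfrac1r\bigr)$, and one with $b_i < \tfrac1r < a_i$ is at least $-\tfrac1r\bigl(a_i - \tfrac1r\bigr)$. Summing the first kind over all applicable $i$ gives at least $-\tfrac1r \sum_{i : b_i > 1/r}\bigl(b_i - \tfrac1r\bigr)$; by Proposition~\ref{prop:statdef} this last sum is exactly the total variation distance of $q(\bB)$ from the uniform distribution, hence at most $\gamma_2$. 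Symmetrically the second kind sums to at least $-\gamma_1/r$. Adding these bounds yields $\sum_i (a_i - \tfrac1r)(b_i - \tfrac1r) \ge -\tfrac{\gamma_1+\gamma_2}{r}$, and plugging into the identity above gives $q(\bA = \bB) \ge \tfrac{1-\gamma_1-\gamma_2}{r}$.

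The main (and essentially only) obstacle is recognizing that the crude estimate $\sum_i \bigl|a_i - \tfrac1r\bigr|\,\bigl|b_i - \tfrac1r\bigr| \le \bigl(\max_i |b_i - \tfrac1r|\bigr)\sum_i |a_i - \tfrac1r| \le 2\gamma_1$ is far too weak here, since it discards the crucial $\tfrac1r$ factor (and $\max_i|b_i-\tfrac1r|$ can be close to $1$). The fix is to bound $|b_i - \tfrac1r|$ by $\tfrac1r$ precisely on the indices where $b_i$ is below average, and to notice that these are exactly the indices paired against the \emph{positive} deviations of $\bA$, whose total is the total variation distance and therefore at most $\gamma_1$ (with the roles of $\bA$ and $\bB$ swapped for the other batch of opposite-sign indices). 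Everything else is a routine expansion.
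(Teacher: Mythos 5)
Your proof is correct and follows essentially the same route as the paper's: you write $q(\bA=\bB)=\frac1r+\sum_i(a_i-\frac1r)(b_i-\frac1r)$, drop the same-sign (nonnegative) terms, and on the two batches of opposite-sign terms replace the below-average deviation by $-\frac1r$ and recognize the remaining sum of positive deviations as exactly the total variation distance from uniform via Proposition~\ref{prop:statdef}. This is the paper's argument in the notation $a_i-\frac1r,\,b_i-\frac1r$ in place of its $\alpha_i,\beta_i$.
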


When $c$ is good,  Pinsker's inequality (Proposition  \ref{proposition:pinsker}) implies that conditioned on $c$, $A$ is $\alpha$-close to uniform and $B$ is $\beta$-close to uniform. Then, since $A - C - B$, using Claim \ref{claim:equal} (with $\bA=A$ and $\bB=B$ in the probability space $q$ conditioned on $c$) implies that $q(A=B|c) \ge \frac{1-\alpha-\beta}{r}$. Since $q(A=B) = \frac1r$, we have that for a good $c$,
\begin{align}
q(c|A=B) = \frac{q(c)\cdot q(A=B|c)}{q(A=B)} \ge (1- \alpha-\beta) \cdot q(c). \label{eqn:ratio}
\end{align}

For any event $\CQ$, \equationref{eqn:ratio} implies that
\begin{align*}
q(C \in \CQ) - q(C \in \CQ | A=B) &\leq \sum_{c \in \CQ, c \text{ bad}} q(c) + \sum_{c \in \CQ, c \text{ good}} (q(c) - q(c|A=B))\\
\ &\leq q(C \text{ is bad}) + \sum_{c}q(c) (\alpha+\beta) \\
\ &\le \alpha + \beta + \sum_{c}q(c) (\alpha+\beta) \leq 2\alpha+ 2\beta,
\end{align*}
and since $\stat{q(C) - q(C|A=B)} = \max_{\CQ} (q(C \in \CQ) - q(C \in \CQ | A=B))$ we get the required upper bound on statistical distance. 
\end{proof}

\begin{proof}[Proof of Claim \ref{claim:equal}]
	For each $i$, let $q(\bA=i) = \frac1r + \alpha_i$ and $q(\bB=i) = \frac1r + \beta_i$. Then, $\sum_i \alpha_i = \sum_i \beta_i = 0$, and $\alpha_i, \beta_i \geq -\frac{1}{r}$. Using these facts,
\begin{align*}
q(\bA=\bB) & =  \sum_i \left(\frac1r+\alpha_i\right)\left(\frac1r + \beta_i\right) \\
& = \frac1r + \frac{\sum_i \alpha_i}{r} + \frac{\sum_i \beta_i}{r} + \sum_i \alpha_i \beta_i = \frac1r + \sum_i \alpha_i \beta_i.
\end{align*}
To lower bound the above, we will only consider the negative terms in the summation:
\begin{align*}
q(\bA=\bB) &\geq \frac1r + \sum_{i: \alpha_i >0, \beta_i <0} \alpha_i \beta_i + \sum_{i: \alpha_i <0, \beta_i >0} \alpha_i \beta_i 
		  \geq \frac1r - \frac1r  \sum_{i: \alpha_i >0} \alpha_i  - \frac1r\sum_{i:  \beta_i >0} \beta_i. 
\end{align*}

From Proposition \ref{prop:statdef}, it follows that $\sum_{i: \alpha_i >0} \alpha_i$ is the statistical distance~$\gamma_1$ between $\bA$ and the uniform distribution on $[r]$ and likewise for $\bB$. So we get
\[ q(\bA=\bB) \ge \frac{1-\gamma_1-\gamma_2}{r}. \qedhere\]
\end{proof}

\section{Proof of Lemma \ref{lemma:qcond}}\label{appB}

\newcommand{\boldX}{\mathbf{X}_1}
\newcommand{\pX}{X'_1}
\newcommand{\pY}{Y'_1}
\newcommand{\py}{y'_1}
\newcommand{\px}{x'_1}
\newcommand{\mainreg}{\pX\pY}

\renewcommand{\rest}{X_2 \overline{X_2} Y_2 \overline{Y_2} AB}
\renewcommand{\BX}{\mathbf{X'_1}}
\renewcommand{\bX}{\mathbf{x'_1}}

\renewcommand{\BY}{\mathbf{Y'_1}}
\renewcommand{\bY}{\mathbf{y'_1}}
\renewcommand{\alt}{\BX \BY}
\newcommand{\bx}{\bX}
\newcommand{\by}{\bY}

\makeatletter
\newcommand{\leqnos}{\tagsleft@true\let\veqno\@@leqno}
\newcommand{\reqnos}{\tagsleft@false\let\veqno\@@eqno}
\reqnos
\makeatother

\qcondition*

To simplify the notation in the proof, define $R = \rest$, and $\pX = X_1 \overline{X_1}$, $\pY = Y_1 \overline{Y_1}$, $X'_2 = X_2 \overline{X_2}$ and $Y'_2 = Y_2 \overline{Y_2}$. Furthermore, we will use boldface letters to denote different classical registers with the same dimensions, for example $\BX = \boldX \overline{\boldX}$ will denote an independent register of the same dimension as $\pX$. One should think of the boldface registers as a relabeling of the original registers but they will be needed since we will consider states like $\purify{\sigma}{r}{\mainreg} \otimes \purify{\rho}{r}{\alt R C}$.

Also, note that if we have two unitaries $U_{XA}$ and $V_{XB}$ that both have a classical register $X$ as control, then $U_{XA}$ and $V_{XB}$ commute (recall our convention that we omit to write tensor product with the identity operator on the remaining spaces). 

\begin{proof}[Proof of Lemma \ref{lemma:qcond}]

We will bound
\begingroup\leqnos
\begin{align}\label{eqn:uhlmann}
\    \hspace*{0.9cm} \norm{\pstatealt{\sigma}{r}{X_1Y_1 C} - \pstatealt{\sigma}{r}{X_1Y_1} \otimes \pstatealt{\rho}{r}{C}} & \le 2\sqrt{2}\,\he{\pstatealt{\sigma}{r}{X_1Y_1C }}{\pstatealt{\sigma}{r}{X_1Y_1} \otimes \pstatealt{\rho}{r}{C}} \\
\ &=  2\sqrt{2}\,\min_{\widetilde{U}} \he{\widetilde{U}\purify{\sigma}{r}{\mainreg RC} \otimes \ketalt{\rho}{\BX\BY }}{\purify{\sigma}{r}{\mainreg} \otimes \purify{\rho}{r}{\alt R C}} \notag
\end{align}\endgroup
where the inequality used Proposition \ref{prop:he-and-trace}, and the equality follows from Uhlmann's theorem (Proposition \ref{prop:uhlmann}) with $\widetilde{U}$ ranging over all unitaries acting on $\X_1\Y_1\alt R$.  Notice that apart from $\X_1\Y_1 R$, we also need the boldface registers to make the state $\pstatealt{\sigma}{r}{X_1Y_1} \otimes \pstatealt{\rho}{r}{C}$ a pure state. Also, note that  $\purify{\rho}{r}{\mainreg}$ and $\purify{\sigma}{r}{\mainreg}$ remain the same throughout all rounds, so we will drop the superscript~$r$ for these states. 

To upper bound the right-hand side in \eqref{eqn:uhlmann}, we will exhibit a unitary $\widetilde{U}$ so that the  Hellinger distance is small. Let us first note that since $u'(X_1) = q(X_1)$, using Proposition \ref{prop:control}, there exists a unitary $W_{X_1\pY}$ with $X_1$ as a control register such that $W_{X_1\pY} \ketalt{\rho}{\pX}\ketalt{\rho}{\pY}= \ketalt{\sigma}{\pX \pY}$. Similarly, since $u'(Y_1) = q(Y_1)$ there exists a similar unitary $W_{\pX Y_1}$ with $Y_1$ as a control (we will use the same letter to denote them since the subscripts will make the registers clear).

We first claim that 

\begin{claim} \label{claim:exist}
There exist unitaries $V^{(s)}_{\X_1 \BX X'_2 A}$ for odd $s$, and $V^{(s)}_{\Y_1 \BY Y'_2B}$ for even $s$ with $V^{(0)}_{\Y_1 \BY Y'_2B}= I_{\Y_1 \BY Y'_2B}$, such that 
\begin{align*}
\  \he{V^{(s)}_{\X_1 \BX X'_2 A} W_{X_1\pY}\purify{\rho}{s}{\pX \BY R C} \otimes \ketalt{\rho}{\BX \pY}}{\purify{\rho}{s}{\BX \BY R C} \otimes \ketalt{\sigma}{\pX \pY}}  &\le \sqrt{\eps_s} \text{ for odd } s{,~~and}\\
\  \he{V^{(s)}_{\Y_1 \BY Y'_2B} W_{\pX Y_1}\purify{\rho}{s}{\BX \pY RC} \otimes \ketalt{\rho}{\pX\BY}}{\purify{\rho}{s}{\BX \BY R C} \otimes \ketalt{\sigma}{\pX \pY}} &\le \sqrt{\eps_s} \text{ for even } s.
\end{align*}
\end{claim}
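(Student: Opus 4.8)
The idea is that the two estimates can be obtained directly, one round $s$ at a time, the information hypothesis of Lemma~\ref{lemma:qcond} entering only to make a single Hellinger distance small; the unitary $V^{(s)}$ is then read off from Uhlmann's theorem (Proposition~\ref{prop:uhlmann}). I describe the odd-$s$ estimate in detail; the even-$s$ one is identical after interchanging the roles of Alice and Bob (so $X\leftrightarrow Y$, $\X_1\leftrightarrow\Y_1$, $\BX\leftrightarrow\BY$, $X'_2\leftrightarrow Y'_2$, $A\leftrightarrow B$) and invoking $\Inf[\rho^{(s)}]{Y_1}{X\XR AC}\le\eps_s$ in place of $\Inf[\rho^{(s)}]{X_1}{Y\YR BC}\le\eps_s$.

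First I would put both states into explicit form. Since the input register $X_1$ is only ever used as a control and therefore stays maximally entangled with its purifier $\X_1$, Proposition~\ref{prop:qprotocol} gives $\purify{\rho}{s}{\pX\BY RC}=\sum_{x_1}\tfrac1{\sqrt d}\ket[X_1\X_1]{x_1 x_1}\otimes\ket{\phi^{x_1}}$ for unit vectors $\ket{\phi^{x_1}}$ on the registers $\BY X'_2 Y'_2 ABC$, where $d=\dim\CH_{X_1}$. Recalling from Proposition~\ref{prop:control} that $W_{X_1\pY}$ is the $X_1$-controlled unitary that rewrites the fresh copy of $\pY$ in $\ketalt{\rho}{\BX\pY}$ into a copy of $X_1$, a direct computation gives
\[
W_{X_1\pY}\,\purify{\rho}{s}{\pX\BY RC}\otimes\ketalt{\rho}{\BX\pY}
=\Big(\sum_{x_1}\tfrac1{\sqrt d}\ket[\pX\pY]{x_1 x_1 x_1 x_1}\otimes\ket{\phi^{x_1}}\Big)\otimes\ketalt{\rho}{\BX},
\]
whereas $\purify{\rho}{s}{\BX\BY RC}\otimes\ketalt{\sigma}{\pX\pY}=\big(\sum_{\tilde x}\tfrac1{\sqrt d}\ket[\BX]{\tilde x\tilde x}\otimes\ket{\phi^{\tilde x}}\big)\otimes\big(\sum_{x_1}\tfrac1{\sqrt d}\ket[\pX\pY]{x_1 x_1 x_1 x_1}\big)$. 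These are two pure states on the same collection of registers.

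Next I would take the reduced states on the complement of $\X_1\BX X'_2 A$, i.e.\ on exactly the registers $V^{(s)}$ is \emph{not} allowed to touch. In each case the surviving registers $X_1$, $Y_1$, $\Y_1$ all carry the same value; this correlation is identical in the two states, so it can be undone by one common local isometry, after which the two reduced states become $\rho^{(s)}_{X_1 Y\YR BC}$ and $\rho^{(s)}_{X_1}\otimes\rho^{(s)}_{Y\YR BC}$ respectively (using that $\BY Y'_2$ is a relabeling of $Y\YR$). Hence, by Proposition~\ref{prop:he-and-mutinf} and the hypothesis,
\[
\he{\rho^{(s)}_{X_1 Y\YR BC}}{\rho^{(s)}_{X_1}\otimes\rho^{(s)}_{Y\YR BC}}\le\sqrt{\Inf[\rho^{(s)}]{X_1}{Y\YR BC}}\le\sqrt{\eps_s}.
\]
Since the two displayed pure states are purifications of these reduced states with reference register $\X_1\BX X'_2 A$, Uhlmann's theorem (Proposition~\ref{prop:uhlmann}) hands us a unitary $V^{(s)}_{\X_1\BX X'_2 A}$ achieving this distance, which is precisely the odd-$s$ inequality. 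For the base case $s=0$ the round-$0$ state is a full tensor product across the input registers and $ABC$, so the reduced states coincide ($\Inf[\rho^{(0)}]{Y_1}{X\XR AC}=0$); in fact, feeding this product state into Proposition~\ref{prop:control} shows $W_{\pX Y_1}\purify{\rho}{0}{\BX\pY RC}\otimes\ketalt{\rho}{\pX\BY}=\purify{\rho}{0}{\BX\BY RC}\otimes\ketalt{\sigma}{\pX\pY}$ exactly, so $V^{(0)}=I$ works.

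The genuinely delicate part is the register bookkeeping: keeping straight the boldface/non-boldface relabelings, checking that $W_{X_1\pY}$ (resp.\ $W_{\pX Y_1}$) turns the fresh copy of $\pY$ (resp.\ $\pX$) into the all-equal state $\ketalt{\sigma}{\pX\pY}$ while leaving everything else alone, and confirming that the complement of $\X_1\BX X'_2 A$ consists of exactly $\{X_1\}$, (a relabeling of) $Y\YR BC$, and the fresh copies of $X_1$ — so that the partial trace lands on precisely the state against which the lemma's information bound is phrased. Once that is in place, the remaining ingredients (Proposition~\ref{prop:he-and-mutinf}, isometric invariance of the Hellinger distance, Uhlmann) are routine. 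If one prefers the paper's round-by-round presentation, the same $V^{(s)}$ can alternatively be produced from $V^{(s-1)}$ by first composing with Alice's round-$s$ unitary (which commutes with $W_{X_1\pY}$ since both are controlled by $X_1$) and then applying a small Uhlmann correction; I would use the direct argument because it isolates the single step where the information hypothesis enters.
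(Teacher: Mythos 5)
Your proposal is correct and follows essentially the same route as the paper: bound the Hellinger distance between the appropriate reduced states on $X_1 Y\YR BC$ by $\sqrt{\Inf[\rho^{(s)}]{X_1}{Y\YR BC}}$ via Pinsker, then invoke Uhlmann's theorem with reference register $\X_1\BX X'_2 A$ to extract $V^{(s)}$. The only cosmetic difference is the order of operations---the paper applies Uhlmann to the untransformed purifications and then multiplies both states by $W_{X_1\pY}$ (using that it commutes with $V^{(s)}$), whereas you apply $W_{X_1\pY}$ first, write out the states explicitly, and identify the reduced states up to a common local unitary before invoking Uhlmann---but the substance is identical.
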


Below we will drop the registers when we are writing states over all the registers $\mainreg RC \alt$. We will also drop the registers from the unitaries $V^{(s)}$ since their indices (whether odd or even) will describe the corresponding registers they act on, unless we need to emphasize it.

Let us define 
\begin{align}\label{eqn:states}
     \ \purify{\theta}{s}{} = V^{(s)} V^{(s-1)} \purify{\sigma}{s}{\mainreg R C} \otimes \ketalt{\rho}{\BX\BY} \text{    and   } \purify{\lambda}{s}{} =  \purify{\rho}{s}{\alt R C} \otimes \ketalt{\sigma}{\mainreg}.
\end{align}
Then, we will prove by induction that for every round $s$, the following holds:
\begin{claim}\label{claim:induction}
$ \he{\purify{\theta}{s}{}}{\purify{\lambda}{s}{}}  \le \delta_s$ where $\delta_s = \sqrt{\eps_s} + \sqrt{\eps_{s-1}} + 2\sum_{i=1}^{s-2} \sqrt{\eps_i}.$
\end{claim}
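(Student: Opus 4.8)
I would establish Claim~\ref{claim:induction} by induction on the round index~$s$, using Claim~\ref{claim:exist} as the engine and nothing beyond the metric and unitary-invariance properties of the Hellinger distance (the triangle inequality stated above and Proposition~\ref{prop:invar}). For the base case $s=1$ one has $\purify{\theta}{1}{}=V^{(1)}\purify{\sigma}{1}{\mainreg RC}\otimes\ketalt{\rho}{\BX\BY}$ since $V^{(0)}=I$, and comparing this directly with $\purify{\lambda}{1}{}$ via Claim~\ref{claim:exist} at $s=1$ gives $\he{\purify{\theta}{1}{}}{\purify{\lambda}{1}{}}\le\sqrt{\eps_1}=\delta_1$; here I use that $\eps_0=0$, which holds because $\rho^{(0)}_{Y_1 X\overline{X}AC}$ is a product state (at the start, $Y_1$ is part of Bob's input and is independent of everything on Alice's side and of the channel), so the hypothesis of Lemma~\ref{lemma:qcond} at $s=0$ is met with $\eps_0=0$ and the $\sqrt{\eps_0}$ contribution to $\delta_1$ vanishes.

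\textbf{The inductive step.} Assume the claim for all rounds $<s$ and take $s$ odd, so that in round~$s$ Alice applies a unitary $U^{(s)}$ with her input register as a \emph{control} (the even case is symmetric under exchanging Alice/Bob, $X/Y$, $A/B$, and the two $W$-families). The facts I would exploit are: (i) $\sigma^{(s)}=U^{(s)}\sigma^{(s-1)}$ and $\rho^{(s)}=U^{(s)}\rho^{(s-1)}$, with $U^{(s)}$ acting on whichever register currently carries Alice's input, together with $A$ and $C$; and (ii) since that unitary is controlled by an $X$-type register, it commutes with $V^{(s-1)}$ (which acts only on Bob's side) and with the $W$-unitary $W_{X_1\pY}$ (which has $X_1$ as control and acts on targets disjoint from $U^{(s)}$'s). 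Starting from $\he{\purify{\theta}{s}{}}{\purify{\lambda}{s}{}}$, I would first invoke Claim~\ref{claim:exist} at round~$s$ to replace $\purify{\lambda}{s}{}=\purify{\rho}{s}{\BX\BY RC}\otimes\ketalt{\sigma}{\pX\pY}$ by $V^{(s)}W_{X_1\pY}\bigl(\purify{\rho}{s}{\pX\BY RC}\otimes\ketalt{\rho}{\BX\pY}\bigr)$ at cost $\sqrt{\eps_s}$ in Hellinger distance. Both this state and $\purify{\theta}{s}{}$ begin with the same $V^{(s)}$, so Proposition~\ref{prop:invar} strips it; then (i)--(ii) let me pull the common $U^{(s)}$ to the front of each side and strip it as well. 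What remains is a Hellinger distance of the same type, but now about the states after round $s-1$, so the argument descends one round.

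\textbf{Collecting the error and finishing.} Repeating the peeling once more (Bob's round $s-1$, via Claim~\ref{claim:exist} at $s-1$) removes a further $\sqrt{\eps_{s-1}}$ and descends again; from round $s-2$ downward, however, each step has to re-synchronise \emph{two} input registers — one on each side — rather than one, and hence costs $2\sqrt{\eps_i}$ per round, until the chain bottoms out at the base case with no further cost. Summing, $\he{\purify{\theta}{s}{}}{\purify{\lambda}{s}{}}\le\sqrt{\eps_s}+\sqrt{\eps_{s-1}}+2\sum_{i=1}^{s-2}\sqrt{\eps_i}=\delta_s$, completing the induction. Specializing to $s=r$ and noting $\delta_r\le 2\sum_{i=1}^r\sqrt{\eps_i}$, I then feed this into the right-hand side of \eqref{eqn:uhlmann}: since $V^{(r)}V^{(r-1)}$ acts only on $\X_1\Y_1\alt R$ and leaves $X_1Y_1C$ untouched, it is an admissible choice for the minimizing $\widetilde{U}$ there, so the minimum in \eqref{eqn:uhlmann} is at most $\he{\purify{\theta}{r}{}}{\purify{\lambda}{r}{}}\le\delta_r$, whence $\norm{\pstatealt{\sigma}{r}{X_1Y_1 C}-\pstatealt{\sigma}{r}{X_1Y_1}\otimes\pstatealt{\rho}{r}{C}}\le 2\sqrt{2}\,\delta_r\le 4\sqrt{2}\sum_{i=1}^r\sqrt{\eps_i}$, which is Lemma~\ref{lemma:qcond}.

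\textbf{The main obstacle.} The substance is entirely in the bookkeeping of the inductive step: at every round one must keep straight which physical copy of the colliding register is in play — the non-boldface one carrying the $\sigma$-protocol inside $\purify{\theta}{s}{}$, versus the boldface one carrying the $\rho$-protocol inside $\purify{\lambda}{s}{}$ — and whether each $U^{(s)}$, $V^{(s)}$ or $W$ touches it as a control or as a target; only then do the commutations in (ii) actually hold and can the Uhlmann-built $V^{(s)}$'s be slotted in without ever disturbing the channel register~$C$. This boldface/non-boldface mismatch is precisely what forces the interior rounds to cost $2\sqrt{\eps_i}$ instead of $\sqrt{\eps_i}$, and therefore what makes $\delta_s$ come out with the stated coefficients rather than blowing up; controlling that doubling is the delicate point.
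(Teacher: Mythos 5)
Your high-level strategy --- induction on the round, with Claim~\ref{claim:exist} as the engine, the triangle inequality, and unitary invariance of the Hellinger distance --- is the same as the paper's, and your arithmetic for $\delta_s$ is right. The base case is essentially correct (though the observation that $\eps_0$ can be taken to be $0$ is relevant to the inductive step at $s=2$, where the recursion produces a $\sqrt{\eps_{s-2}}=\sqrt{\eps_0}$ term, not to the base case itself, where $\he{\purify{\theta}{1}{}}{\purify{\lambda}{1}{}}\le\sqrt{\eps_1}\le\delta_1$ holds regardless). The closing step --- taking $\widetilde U = V^{(r)}V^{(r-1)}$ in \eqref{eqn:uhlmann} --- is also correct and matches the paper.

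The gap is in the inductive step, and it sits precisely where you flag "the main obstacle." After applying Claim~\ref{claim:exist} at round $s$ (cost $\sqrt{\eps_s}$) and stripping $V^{(s)}$ and the common protocol unitary $U^{(s)}$, you do \emph{not} arrive at a Hellinger distance "of the same type about the states after round $s-1$." What you land on is (commuting also the harmless $V^{(s-2)}$ across)
$\he{\purify{\theta}{s-1}{}}{\,V^{(s-2)}W_{X_1\pY}\purify{\rho}{s-1}{\pX\BY RC}\otimes\ketalt{\rho}{\BX\pY}}$,
and the second state is \emph{not} $\purify{\lambda}{s-1}{}$: in $\purify{\lambda}{s-1}{}=\purify{\rho}{s-1}{\BX\BY RC}\otimes\ketalt{\sigma}{\pX\pY}$ the $\rho$-protocol runs on $\BX\BY$, but here it runs on $\pX\BY$. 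Because of that mismatch the protocol unitary $U^{(s-1)}$ in the two states acts on \emph{different} physical registers ($Y_1$ in one, $\boldY_1$ in the other), so it cannot be "pulled to the front of each side and stripped" as in the first step; the straightforward peel stops there. The paper's resolution is the intermediate state $\purify{\omega}{s}{}$ of \eqref{eqn:defn}: it is built so that $\he{\purify{\theta}{s}{}}{\purify{\omega}{s}{}}$ is \emph{exactly} $\he{\purify{\theta}{s-1}{}}{\purify{\lambda}{s-1}{}}$ (which the induction hypothesis bounds by $\delta_{s-1}$), and a \emph{second} invocation of Claim~\ref{claim:exist}, this time at round $s-2$, shows $\he{\purify{\omega}{s}{}}{\purify{\pi}{s}{}}\le\sqrt{\eps_{s-2}}$. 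Thus the paper's per-step cost is $\sqrt{\eps_s}+\sqrt{\eps_{s-2}}$, one forward application and one backward one at two different rounds, not a per-round "$2\sqrt{\eps_i}$." Unrolling $\delta_s=\delta_{s-1}+\sqrt{\eps_s}+\sqrt{\eps_{s-2}}$ does give your coefficients, so your bookkeeping is consistent with the final answer; but the proposal never constructs $\omega_s$ nor exhibits the second Claim~\ref{claim:exist} application, and the phrase "each step has to re-synchronise two input registers... and hence costs $2\sqrt{\eps_i}$" asserts the outcome of the delicate part without performing it. As written, the inductive step does not close.
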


For $s=r$, Claim \ref{claim:induction} implies that the unitary $V^{(r)}V^{(r-1)}$ is a particular unitary acting on $\X_1\Y_1\alt R$ for which the right-hand side in \eqref{eqn:uhlmann} is small, so taking $\widetilde{U}$ to be $V^{(r)}V^{(r-1)}$ in \eqref{eqn:uhlmann},
\[ \norm{\pstatealt{\sigma}{r}{X_1 Y_1 C} - \pstatealt{\sigma}{r}{X_1 Y_1} \otimes \pstatealt{\rho}{r}{C}}  \le 2 \sqrt{2}\left(\sqrt{\eps_r} + \sqrt{\eps_{r-1}} + 2\sum_{s=1}^{r-2} \sqrt{\eps_s} \right) \le 4 \sqrt{2} \left(\sum_{s=1}^r \sqrt{\eps_s}\right).\] 
This completes the proof of Lemma~\ref{lemma:qcond} assuming the claims. 
\end{proof}
We next prove Claims~\ref{claim:exist} and \ref{claim:induction} in order.
\begin{proof}[Proof of Claim \ref{claim:exist}] 
We will only prove the first inequality as the second one is analogous. From the assumption that $\Inf[\rho^{(s)}]{X_1}{Y\YR BC} \le \eps_s$ it also follows that $\Inf[\rho^{(s)}]{X_1}{\BY Y'_2 BC} \le \eps_s$ since we are just relabeling the $\pY$ registers to $\BY$ (recall $\pY = Y_1 \overline Y_1$). Using Proposition \ref{prop:he-and-mutinf},  
\[ \he{\pstatealt{\rho}{s}{X_1\BY Y'_2 BC}}{\pstatealt{\rho}{s}{\BY Y'_2 BC} \otimes \rho_{X_1}} \le \sqrt{\Inf[\rho^{(s)}]{X_1}{\BY Y'_2 BC}} \le \sqrt{\eps_s}.\]
Recalling that $R = \rest$ and using Uhlmann's Theorem (Proposition \ref{prop:uhlmann}), there exists a unitary $V^{(s)}_{\X_1 \BX X'_2 A}$ such that 
\begin{align*}
    \ \sqrt{\eps_s} &\ge \he{V^{(s)}_{\X_1 \BX X'_2 A} \purify{\rho}{s}{\pX \BY R C} \otimes \ketalt{\rho}{\BX}}{\purify{\rho}{s}{\BX \BY R C} \otimes \ketalt{\rho}{\pX}} \notag \\
    \           &=\he{V^{(s)}_{\X_1 \BX X'_2 A} \purify{\rho}{s}{\pX \BY R C} \otimes \ketalt{\rho}{\BX} \otimes \ketalt{\rho}{\pY}}{\purify{\rho}{s}{\BX \BY R C} \otimes \ketalt{\rho}{\pX} \otimes \ketalt{\rho}{\pY}} \\
    \  &= \he{V^{(s)}_{\X_1\BX X'_2 A} W_{X_1\pY}\purify{\rho}{s}{\pX \BY R C} \otimes \ketalt{\rho}{\BX\pY}}{\purify{\rho}{s}{\BX \BY R C} \otimes \ketalt{\sigma}{\pX\pY}},
\end{align*}
where in the last equality we multiplied both states by the unitary $W_{X_1\pY}$ and used that $W_{X_1\pY} \ketalt{\rho}{\pX} \otimes \ketalt{\rho}{\pY} = \ketalt{\sigma}{\pX\pY}$ as well as the fact that $W_{X_1\pY}$ and $V^{(s)}_{\X_1 \BX X'_2 A}$ commute (disjoint registers).
\end{proof} 

\begin{proof}[Proof of Claim \ref{claim:induction}]
\underline{Base case $s=1$:} Recall that $V^{(0)}$ is the identity. Let $W_{X_1\pY}$ be the unitary that satisfies $W_{X_1\pY} \ketalt{\rho}{\pX \pY}= \ketalt{\sigma}{\pX \pY}$ as before. Then, since $u'(X)=q(X)$ and $q(Y_1Y_2)=q(Y_1)q(Y_2)$ and $q(Y_2)=u'(Y_2)$, it follows from Proposition \ref{prop:qprotocol} that 
$W_{X_1\pY} \purify{\rho}{1}{\pX \pY R C} = \purify{\sigma}{1}{\pX \pY R C}$ (recall that $R = \rest$). Using this and the fact that $\purify{\rho}{1}{\pX \BY R C} = \purify{\rho}{1}{\pX R C} \otimes \ketalt{\rho}{\BY}$ and $\ketalt{\rho}{\BX \pY} = \ketalt{\rho}{\BX} \otimes \ketalt{\rho}{\pY}$, we get
\begin{align} \label{eqn:base}
    \ W_{X_1\pY} \purify{\rho}{1}{\pX \BY R C} \otimes \ketalt{\rho}{\BX \pY} &=  W_{X_1\pY} \purify{\rho}{1}{\pX \pY R C} \otimes \ketalt{\rho}{\BX \BY} = \purify{\sigma}{1}{\pX \pY R C} \otimes \ketalt{\rho}{\BX \BY}.
\end{align}

Furthermore, by the definition of $\purify{\theta}{1}{}$ and $\purify{\lambda}{1}{}$ with equations  \eqref{eqn:states} and  \eqref{eqn:base} above, it follows that

\begin{align*}
\ \he{\purify{\theta}{1}{}}{\purify{\lambda}{1}{}} &= \he{V^{(1)}_{\X_1 \BX X'_2 A} W_{X_1\pY} \purify{\rho}{1}{\pX \BY R C} \otimes \ketalt{\rho}{\BX \pY}}{\purify{\rho}{1}{\BX \BY R C} \otimes \ketalt{\sigma}{\pX \pY}}  \le \sqrt{\eps_1},
\end{align*}
where we used \eqref{eqn:base} to show that  $\purify{\theta}{1}{}$ equals the first state in the middle expression and the inequality follows from Claim \ref{claim:exist}. This proves the base case.\\

\underline{Induction:} For the induction let us assume that $s$ is even (since the case for odd $s$ is similar) and that 
$\he{\purify{\theta}{s-1}{}}{\purify{\lambda}{s-1}{}} \le \delta_{s-1}$. Using the triangle inequality we bound  
\begin{align} \label{eqn:triangle}
\he{\purify{\theta}{s}{}}{\purify{\lambda}{s}{}} \le \he{\purify{\theta}{s}{}}{\purify{\omega}{s}{}}  + \he{\purify{\omega}{s}{}}{\purify{\pi}{s}{}}  + \he{\purify{\pi}{s}{}}{\purify{\lambda}{s}{}}, 
\end{align}
where 
\begin{equation}\label{eqn:defn}
    \ \purify{\omega}{s}{} = V^{(s)} U^{(s)}_{YBC}{\left(V^{(s-2)}\right)}^\dag  \purify{\lambda}{s-1}, \text{  and } \purify{\pi}{s}{} = V^{(s)} W_{\pX Y_1}  \purify{\rho}{s}{\BX\pY R C} \otimes \ketalt{\rho}{\pX\BY},
\end{equation}
with $U^{(s)}_{YBC}$ being the protocol unitary with $Y$ as control that Bob applies in round $s$. Note that from the definition of the protocol we have that 
\begin{equation} \label{eqn:int}
\ \purify{\sigma}{s}{\pX\pY RC} = U^{(s)}_{YBC} \purify{\sigma}{s-1}{\pX\pY RC}.
\end{equation}

Let us consider the first term in \eqref{eqn:triangle}. Since Hellinger distance is unitarily invariant, we multiply both states with $V^{(s-2)} {\left(U^{(s)}_{YBC}\right)}^{\dag}{\left(V^{(s)}\right)}^\dag$ and using  \eqref{eqn:states}, \eqref{eqn:defn} and \eqref{eqn:int}, we get that
\begin{align*}
\ \he{\purify{\theta}{s}{}}{\purify{\omega}{s}{}} \stackrel{\eqref{eqn:defn}}{=} \he{V^{(s-2)} {\left(U^{(s)}_{YBC}\right)}^{\dag}{\left(V^{(s)}\right)}^\dag\purify{\theta}{s}{}}{\purify{\lambda}{s-1}{}} = \he{\purify{\theta}{s-1}{}}{\purify{\lambda}{s-1}{}} \le \delta_{s-1}, 
\end{align*}
where we used that the first state in the middle expression equals $\purify{\theta}{s-1}{}$:
\begin{align*}
\ V^{(s-2)} {\left(U^{(s)}_{YBC}\right)}^{\dag}{\left(V^{(s)}\right)}^\dag\purify{\theta}{s}{}   & \stackrel{\eqref{eqn:states}}{=} V^{(s-2)} {\left(U^{(s)}_{YBC}\right)}^{\dag}V^{(s-1)}\purify{\sigma}{s}{\pX \pY RC} \otimes \ketalt{\rho}{\BX\BY}  \\
\ & \stackrel{\eqref{eqn:int}}{=} V^{(s-2)} {V^{(s-1)}} \purify{\sigma}{s-1}{\pX\pY RC} \otimes \ketalt{\rho}{\BX\BY} \\
\ &= V^{(s-1)}{V^{(s-2)}}\purify{\sigma}{s-1}{\pX\pY RC} \otimes \ketalt{\rho}{\BX\BY} \stackrel{\eqref{eqn:states}}{=} \purify{\theta}{s-1}{},
  \end{align*}
with the second equality using \eqref{eqn:int} and the fact that $U^{(s)}_{YBC}$ and $V^{(s-1)}_{\X_1 \BX X'_2 A}$ commute, and the third equality using that $V^{(s-1)}_{\X_1 \BX X'_2 A}$ and $V^{(s-2)}_{\Y_1 \BY Y'_2 B}$ commute.

To bound the second term, notice that by the definition of the protocol $ \purify{\rho}{s-1}{\BX\BY RC} = U^{(s-1)}_{\boldX X_2 AC} \purify{\rho}{s-2}{\BX\BY RC}$ (recall $R = \rest$) and therefore using \eqref{eqn:defn} and \eqref{eqn:states}, it follows that
\begin{equation} \label{eqn:int1}
\ \purify{\omega}{s}{} = V^{(s)} U^{(s)}_{YBC}{(V^{(s-2)})}^\dag U^{(s-1)}_{\BX X_2 AC} \purify{\rho}{s-2}{\BX\BY RC} \otimes \ketalt{\sigma}{\pX\pY}.
\end{equation}
Now multiplying both states by $Z = \left(U^{(s-1)}_{\boldX X_2 AC}\right)^\dag V^{(s-2)} {\left(U^{(s)}_{YBC}\right)}^{\dag}  {\left(V^{(s)}\right)}^\dag$ and again using unitary invariance of Hellinger we get that
\begin{align*}
\ \he{\purify{\omega}{s}{}}{\purify{\pi}{s}{}} &= \he{Z\purify{\omega}{s}{}}{Z\purify{\pi}{s}{}} \\
\ & \stackrel{\eqref{eqn:int1}}{=} \he{\purify{\rho}{s-2}{\alt R C} \otimes \ketalt{\sigma}{\mainreg}}{V^{(s-2)}\purify{\rho}{s-2}{\BX\pY R C} \otimes \ketalt{\sigma}{\pX \BY}} \le \sqrt{\eps_{s-2}},
\end{align*}
where the inequality follows from Claim \ref{claim:exist} and we simplified the second state $Z\purify{\pi}{s}{}$ using commutativity of the pairs $\left\{{\left(V^{(s-2)}_{\Y_1 \BY Y'_2 B}\right)}^{\dag}, \left(U^{(s-1)}_{\boldX X_2 AC}\right)^\dag\right\}$ (disjoint registers), $\left\{{\left(U^{(s)}_{YBC}\right)}^{\dag} , W_{\pX Y_1}\right\}$ (disjoint registers except for both being controlled on the shared register $Y_1$), and $\left\{{\left(U^{(s-1)}_{\boldX X_2AC}\right)}^{\dag} , W_{\pX Y_1}\right\}$ (disjoint registers) as follows:
\begin{align*}
\ Z \purify{\pi}{s}{} &= \left(U^{(s-1)}_{\boldX X_2 AC}\right)^\dag V^{(s-2)} {\left(U^{(s)}_{YBC}\right)}^{\dag}  {\left(V^{(s)}\right)}^\dag \purify{\pi}{s}{} \\
\ & \hspace*{-0.75cm}  \stackrel{\eqref{eqn:defn}}{=} V^{(s-2)}  \left(U^{(s-1)}_{\boldX X_2 AC}\right)^\dag {\left(U^{(s)}_{YBC}\right)}^{\dag} W_{\pX Y_1} \purify{\rho}{s}{\BX\pY R C} \otimes \ketalt{\rho}{\pX\BY} \\
\ & \hspace*{-0.75cm}  = V^{(s-2)}  \left(U^{(s-1)}_{\boldX X_2 AC}\right)^\dag W_{\pX Y_1} {\left(U^{(s)}_{YBC}\right)}^{\dag}  \purify{\rho}{s}{\BX\pY R C} \otimes \ketalt{\rho}{\pX\BY} \\
\ & \hspace*{-0.75cm} =V^{(s-2)}  W_{\pX Y_1} \left(U^{(s-1)}_{\boldX X_2 AC}\right)^\dag \purify{\rho}{s-1}{\BX\pY R C} \otimes  \ketalt{\rho}{\pX\BY}=V^{(s-2)} W_{\pX Y_1} \purify{\rho}{s-2}{\BX\pY R C} \otimes \ketalt{\rho}{\pX\BY}.
\end{align*}

To upper bound the third term of~\eqref{eqn:triangle}, by the definition of states $\purify{\pi}{s}{}$ and $\purify{\lambda}{s}{}$ (equations \eqref{eqn:defn} and \eqref{eqn:states}) and Claim~\ref{claim:exist}, we get 
\[ \he{\purify{\pi}{s}{}}{\purify{\lambda}{s}{}} = \he{V^{(s)} W_{\pX Y_1} \purify{\rho}{s}{\BX\pY R C} \otimes \ketalt{\rho}{\pX \BY} }{\purify{\rho}{s}{\alt R C} \otimes \ketalt{\sigma}{\mainreg}} \le \sqrt{\eps_{s}}. \]

Plugging the bounds for each of the terms back in \eqref{eqn:triangle}, we get that 
 \[ \he{\purify{\theta}{s}{}}{\purify{\omega}{s}{}} \le \delta_{s-1} + \sqrt{\eps_{s-2}} + \sqrt{\eps_s} = \sqrt{\eps_s} + \sqrt{\eps_{s-1}} + 2\sum_{i=1}^{s-2} \sqrt{\eps_i} = \delta_s. \qedhere\]
\end{proof}

\end{document}